\documentclass[10pt,a4paper]{article}
\usepackage[left=3.8cm,right=3.4cm,top=3.4cm,bottom=2.5cm]{geometry}
\usepackage{amsmath}
\usepackage{mathtools}
\usepackage{amsfonts}
\usepackage{amssymb}
\usepackage{physics}
\usepackage{braket}
\usepackage{siunitx}
\usepackage{graphicx}
\usepackage{enumerate}
\usepackage{sidecap}
\usepackage [autostyle, english = american]{csquotes}
\MakeOuterQuote{"}
\usepackage{float}
\usepackage[toc,page,title]{appendix}
\usepackage{multirow}
\usepackage{amsthm}
\usepackage{sidecap}
\usepackage{eucal}
\usepackage{comment}
\usepackage{enumerate}
\usepackage{setspace}
\usepackage[super]{nth}
\usepackage{todonotes}
\usepackage{thmtools}
\usepackage{thm-restate}
\usepackage{tikz}
\usetikzlibrary{backgrounds}
\usepackage{comment}

\usepackage[colorlinks,linktocpage]{hyperref}
\usepackage[capitalise]{cleveref}
\Crefname{lemma}{Lemma}{Lemmas}
\Crefname{proposition}{Proposition}{Propositions}
\Crefname{definition}{Definition}{Definitions}
\Crefname{theorem}{Theorem}{Theorems}
\Crefname{conjecture}{Conjecture}{Conjectures}
\Crefname{corollary}{Corollary}{Corollaries}
\Crefname{example}{Example}{Examples}
\Crefname{section}{Section}{Sections}
\Crefname{appendix}{Appendix}{Appendices}
\Crefname{figure}{Fig.}{Figs.}
\Crefname{equation}{Eq.}{Eqs.}
\Crefname{table}{Table}{Tables}
\Crefname{item}{Property}{Properties}
\Crefname{remark}{Remark}{Remarks}
\Crefname{axioms}{Axioms}{Axioms}

\crefformat{footnote}{#2\footnotemark[#1]#3}

\newcommand\I{\mathrm{1}}

\sloppy

\setlength{\abovecaptionskip}{3pt plus 2pt minus 2.5pt}

\usepackage[backend=bibtex,style=alphabetic,doi=false,isbn=false,url=false,maxbibnames=20,sorting=none]{biblatex}
\bibliography{DualityFramework}

\title{A mathematical framework for quantum Hamiltonian simulation and duality}
\author{Harriet Apel \footnote{\href{mailto:harriet.apel.19@ucl.ac.uk}{harriet.apel.19@ucl.ac.uk}}
\and Toby Cubitt \footnote{ \href{mailto:t.cubitt@ucl.ac.uk}{t.cubitt@ucl.ac.uk}}}
\date{\textit{Department of Computer Science, University College London, UK}}

\begin{document}


\maketitle

\begin{abstract}
 Analogue Hamiltonian simulation is a promising near-term application of quantum computing and has recently been put on a theoretical footing alongside experiencing wide-ranging experimental success.
  These ideas are closely related to the notion of duality in physics, whereby two superficially different theories are mathematically equivalent in some precise sense.
However, existing characterisations of Hamiltonian simulations are not sufficiently general to extend to all dualities in physics.
We give a generalised duality definition encompassing dualities transforming a strongly interacting system into a weak one and vice versa.
We characterise the dual map on operators and states and prove equivalence of duality formulated in terms of observables, partition functions and entropies.
A building block is a strengthening of earlier results on entropy-preserving maps -- extensions of Wigner's celebrated theorem -- to maps that are entropy-preserving up to an additive constant.
  We show such maps decompose as a direct sum of unitary and anti-unitary components conjugated by a further unitary, a result that may be of independent mathematical interest.
\end{abstract}

\newpage

\tableofcontents

\newpage

\section{Introduction}\label{sect Introduction}

Duality is a deep straining running throughout physics.
Any two systems that are related can be described as being "dual", up to the strictest sense of duality where all information about one system is recoverable in the other.
Calculations or predictions in one theory may be simplified by first mapping to the dual theory, given there is a rigorous relationship between the points of interest.
Strong-weak dualities are a common example of this, allowing well-understood perturbation techniques to be leveraged in high energy regimes by considering the dual weak theory \cite{Baxter1989,tHooft:93,Susskind_1995,Montonen1977}.

In the near-term, there is hope of using quantum computers as analogue simulators to study certain physical properties of quantum many-body systems.
In analogue simulation the Hamiltonian of interest, $H(t)$, is engineered with a physical system that is then allowed to time evolve continuously.
This is in contrast to digital simulation where the time evolution is mapped to quantum circuits -- for example via Trotterisation -- which likely requires a scalable, fault tolerant quantum computer \cite{Lloyd1957}.
It is believed that analogue simulators without error correction could be sufficient to study interesting physics and this has seen varying experimental success with trapped ions \cite{Porras}, cold atoms in optical lattices \cite{Jaksch2004}, liquid and solid state NMR \cite{Peng2010}, superconducting circuits \cite{Houck2012} etc.
These artificial systems allow for improved control and simplified measurements compared to in situ materials, providing a promising use for noisy intermediate scale devices.

What it means for one system to "simulate" or "be dual to" another is an important theoretical question, which has only recently begun to be explored.
\cite{Bravyi2014} and later \cite{Cubitt2019} gave formal definitions of simulation.
Cubitt et al. used this framework to demonstrate certain "universal" spin-lattice models that are able to simulate any quantum many-body system by tuning the interaction parameters.
These works consider the strongest possible definition of a duality: all relevant physics is manifestly preserved in the simulator system including measurement outcomes, the partition function and time evolution.
While this strengthens \cite{Cubitt2019}'s main result it rules out potentially interesting scenarios where the relationship between the systems' properties is more subtle.

Of particular interest to physicists are dualities that relate a strongly interacting theory to a weakly interacting one -- so called strong-weak dualities.
These dualities are of particular interest as strongly interacting theories are beyond the reach of perturbation theory are often challenging to analyse, and strongly interacting phenomena are difficult to elucidate.
Strong-weak dualities serve as a valuable tool for addressing these challenges, transforming a strongly interacting system to a dual, weakly interacting system, which is then amendable to perturbation theory.
One notable example from particle physics is the phenomenon of S-duality, which relates electric and magnetic descriptions within specific gauge theories.
Effectively capturing this and other important classes of dualities in physics, necessitates a more general set of mappings than those previously explored for simulation purposes.

The aim of this work is to explore and extend upon a theoretical framework of duality to better unify operationally how dual systems are related.
We significantly generalise the conditions placed on a duality map between operators to allow for operationally valid transformations which crucially include strong-weak dualities.
This direction of relaxation is inspired by considering examples of duality studied in physics including the Kramer-Wannier duality \cite{Baxter1989} and boson-fermion dualities.
We derive a full characterisation of these maps and additionally characterise the map on states -- which were not shown in previous studies.

Having imposed spectral preserving as a property of dual maps, it follows that other physical properties such as partition functions and entropies are necessarily preserved.
A key outcome of this work is the reverse implication: that demanding partition functions (or entropies) are preserved along with convexity is strong enough to predetermine the spectra of the dual maps.
This leads to three different definitions of duality that, while seemingly distinct with different domains of application, are in fact mathematically equivalent and are therefore characterised by the same mathematical structure, which we demonstrate.
Thus a duality relationship on any one of these physical levels implies a consistent duality on the other physical levels.

The characterisation of entropy preserving maps is a topic of interest independent of simulation with various previous work characterising entropy preserving maps by unitary/antiunitary transformations, \cite{Molnar2008,He2012,He2015}.
Whereas the previous characterisations reduce to Wigner’s theorem, by taking a different route connecting to Jordan and C\* algebra techniques, we show that allowing an entropic additive constant is precisely the additional freedom that allows the maps to admit a direct sum of both unitary and antiunitary parts.
See \cref{fg summary} for a summary of the results and where the formal statements are found in the paper.

\begin{figure}[tbp]
\centering
\begin{tikzpicture}
\begin{scope}[on background layer]
         \node [inner sep=0pt] (1) at (0, 0) {\includegraphics[trim={0cm -1cm 0cm 0cm},clip,scale=0.43]{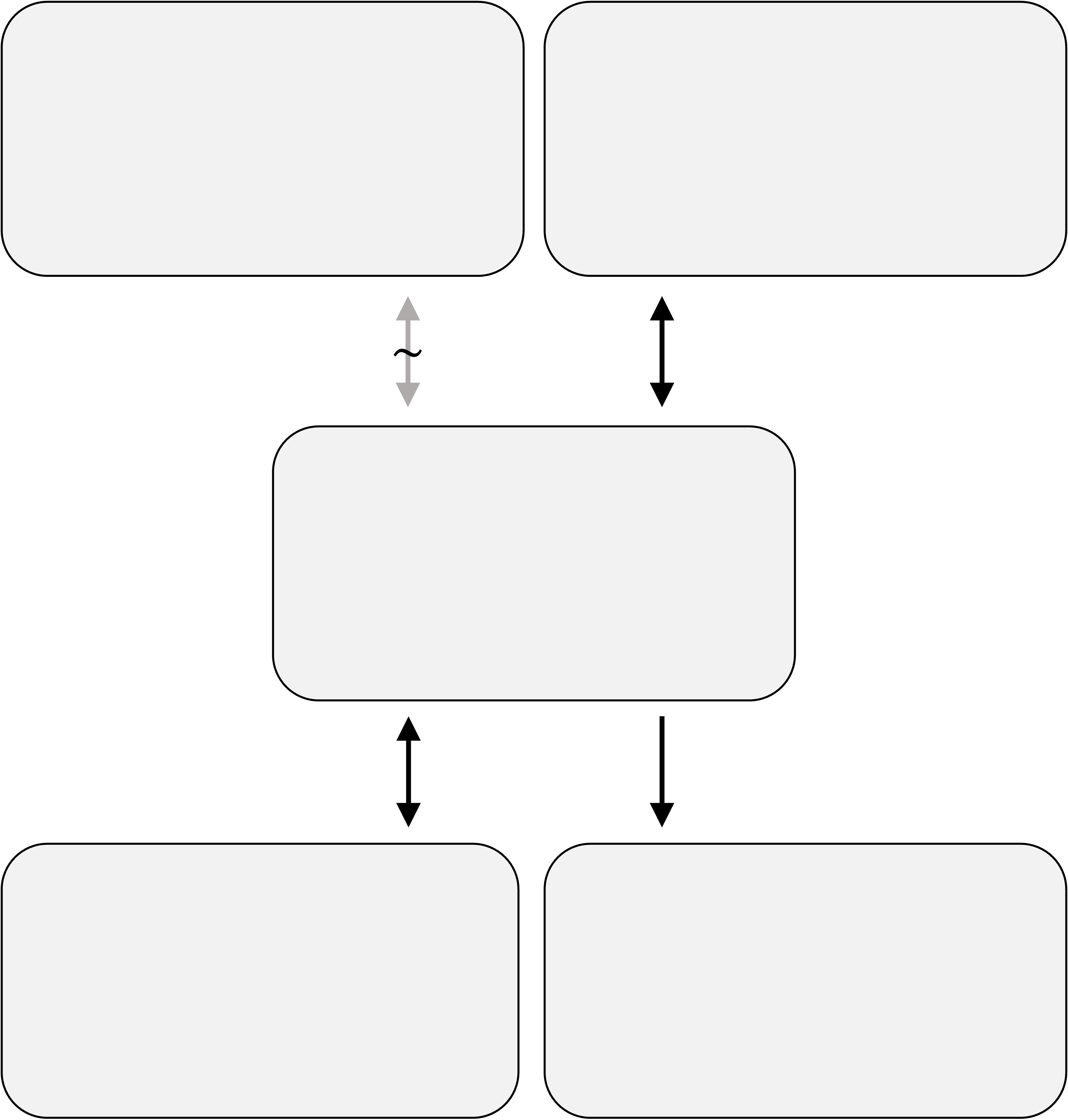}};
\end{scope}
		\node [align=left, anchor=west] (0) at (-2.7, 1.2) {\small Measurement duality map, $\Phi_s$};
		\node [align=left, anchor=west] (0) at (-2.7, 0.8) {\small $\rightarrow$ Maps Hermitians to Hermitians};
		\node [align=left, anchor=west] (0) at (-2.7, 0.4) {\small $\rightarrow$ `Convex' with potential rescalings };
		\node [align=left, anchor=west] (0) at (-2.7, 0) {\small $\rightarrow$ Spectral preserving with potential  };
		\node [align=left, anchor=west] (0) at (-2.2, -0.4) {\small rescaling to include dualities more };
		\node [align=left, anchor=west] (0) at (-2.2, -0.8) {\small broadly};
		\node [align=left, anchor=west] (0) at (0.6, -3.5) {\small A corresponding map on state };
		\node [align=left, anchor=west] (0) at (0.6, -3.9) {\small completes the duality such that};
		\node [align=left, anchor=west] (0) at (0.6, -4.3) {\small measurement outcomes and time };
		\node [align=left, anchor=west] (0) at (0.6, -4.7) {\small dynamics preserved is of the form};
		\node [align=left, anchor=west] (0) at (1, -5.2) {\small $\Phi_\textup{state}(\rho) = U\left(\bigoplus_{i=1}^{p}\alpha_i \rho \right)U^\dagger$ };
		\node [align=left, anchor=west] (0) at (0.6, -5.7) {\small where $\sum_{i=1}^p \alpha_i = 1$.};
		\node [align=left, anchor=west] (0) at (1.6, -2) {\small \cref{prop form of state map}};
		\node [align=left, anchor=west] (0) at (1.6, -2.4) {\small (Map on states)};
		\node [align=left, anchor=west] (0) at (-6, -3.7) {\small The map is of the form };
		\node [align=left, anchor=west] (0) at (-5.8, -4.2) {\small $\Phi(A) = f(A) U \left(A^{\oplus p} \oplus \overline{A}^{\oplus q} \right) U^\dagger$,};
		\node [align=left, anchor=west] (0) at (-6, -4.7) {\small where $p,q$ are non-negative integers, $U$};
		\node [align=left, anchor=west] (0) at (-6, -5.1) {\small is a unitary transformation and $\bar{A}$ };
		\node [align=left, anchor=west] (0) at (-6, -5.5) {\small represents the complex conjugate of $A$.};
		\node [align=left, anchor=west] (0) at (-3.5, -2) {\small \cref{thm Characterisation}};
		\node [align=left, anchor=west] (0) at (-4.4, -2.4) {\small (Characterisation)};
		\node [align=left, anchor=west] (0) at (0.4, 6) {\small Thermal duality map, $\Phi_t$,};
		\node [align=left, anchor=west] (0) at (0.4, 5.6) {\small $\rightarrow$ Maps Hermitians to Hermitians};
		\node [align=left, anchor=west] (0) at (0.4, 5.2) {\small $\rightarrow$ `Convex' with potential rescalings };
		\node [align=left, anchor=west] (0) at (0.4, 4.8) {\small $\rightarrow$ Preserves partition functions with };
		\node [align=left, anchor=west] (0) at (0.9, 4.4) {\small potential rescalings  };
		\node [align=left, anchor=west] (0) at (1.8, 2.8) {\small Equivalent definitions:  };
		\node [align=left, anchor=west] (0) at (1.8, 2.4) {\small \cref{lm partition1}, \cref{col thermal}};
		\node [align=left, anchor=west] (0) at (-6, 6) {\small Entropic duality map, $\Phi_e$,};
		\node [align=left, anchor=west] (0) at (-6, 5.6) {\small $\rightarrow$ Maps Hermitians to Hermitians and };
		\node [align=left, anchor=west] (0) at (-5.5, 5.2) {\small states to states};
		\node [align=left, anchor=west] (0) at (-6, 4.8) {\small $\rightarrow$ Convex };
		\node [align=left, anchor=west] (0) at (-6, 4.4) {\small $\rightarrow$ Preserved entropy of states up to };
		\node [align=left, anchor=west] (0) at (-5.5, 4) {\small state independent additive constant};
		\node [align=left, anchor=west] (0) at (-5.6, 2.8) {\small Closely related definitions:  };
		\node [align=left, anchor=west] (0) at (-5.6, 2.4) {\small \cref{lm entropy spec pres}, \cref{col ax_dual_entr}};
\end{tikzpicture}
\caption{Summary of the main results. We start by considering a duality map $\Phi_s$ that takes in as input observables in one system and outputs the corresponding dual observable in the dual system. There are three constraints that define the map that are be physically motivated, where importantly we allow the map to preserve the eigenspectra (corresponding to measurement outcomes) up to a rescaling -- this allows the map to encompass strong-weak dualities. The main contributions consist of: providing a full mathematical characterisation of these generalised maps (\cref{thm Characterisation}) where $f(A)$ is an operator dependent rescaling function; showing the form of a consistent map on states is implied by the definition of the operator map \cref{prop form of state map}; demonstrating the equivalence of thermal dualities and spectral preserving dualities \cref{col thermal}; relating entropic dualities to spectral preserving dualities \cref{col ax_dual_entr} to give a new characterisation of entropy preserving maps (see \cref{sect Wigners} for discussion).}
 \label{fg summary}
\end{figure}

The following section of this paper gives an overview of key previous works related to the theory of analogue simulation.
Our generalised definition of a duality map is described and characterised in \cref{sect Generalised duality map} with a corresponding map on states.
We then show the equivalence of different duality definitions in \cref{sect equivalent definitions of duality}, highlighting the new characterisation of entropy preserving maps.
Finally we complete the framework by considering errors in the duality map and demonstrating how the framework translates approximate maps to well controlled errors in physical quantities.

%

\subsection{Previous Work}\label{sect Previous Work}
This work uses some results and techniques from \cite{Cubitt2019} in order to build up a more general framework.
This section gives a brief overview of some key results and definitions that are relevant to our investigation, highlighting the constraints that this work will extend.

Encoding maps, denoted $\mathcal{E}$, are at the core of \cite{Cubitt2019}'s simulations.
These maps encode all observables, $A$, on the target Hamiltonian system as observables, $A' = \mathcal{E}(A)$, on the simulator Hamiltonian system and are the most restrictive simulations concerning Hamiltonians in a finite-dimensional Hilbert space.
The authors give a long list of operational requirements that the encoding map should satisfy to exactly reproduce all physical properties of the target system in the absence of errors:
\begin{enumerate}[I]
\item Any observable, $A$ on the target system corresponds to an observable on the simulator system so the map must preserve Hermiticity, $\mathcal{E}(A)=\mathcal{E}(A)^\dagger$;
\item $\mathcal{E}(A)$ preserves the outcomes, and therefore eigenvalues, of any measurement $A$: $\text{spec}[\mathcal{E}(A)]=\text{spec}[A]$;
\item The encoding is real linear, $\mathcal{E}\left(\sum_i \alpha_i h_i \right) = \sum_i \alpha_i \mathcal{E}(h_i)$, for $\alpha_i \in \mathbb{R}$, $h_i \in \text{Herm}$ so that individual Hamiltonian interactions are encoded separately;
\item Measurements are correctly simulated, hence a corresponding map on states, $\mathcal{E}_\text{state}$, should exist such that $\tr\left[\mathcal{E}(A)\mathcal{E}_\text{state}(\rho) \right] = \trace \left[ A \rho \right]$ for all target observables $A$;
\item The encoding preserves the partition function up to a physically unimportant constant rescaling ($c$): $Z_{H'}(\beta) = \trace\left[e^{-\beta \mathcal{E}(H)} \right] = c \trace \left[e^{-\beta H} \right] = c Z_{H}(\beta)$;
\item Time evolution is correctly simulated: $e^{-i\mathcal{E}(H)t}\mathcal{E}_\text{state}(\rho)e^{i\mathcal{E}(H)t} = \mathcal{E}_\text{state}(e^{iHt}\rho e^{iHt})$.
\end{enumerate}

Note the trivial relationships between the physical observables in the simulator and target systems in II-VI, excluding strong-weak dualities.
\cite{Cubitt2019} showed that imposing just three operationally motivated conditions on the encoding will necessarily imply that I-VI hold.
Furthermore, using Jordan and $C^*$~algebra techniques a mathematical characterisation of encodings was given in the following theorem.\footnote{This theorem quoted here is informal and some technicalities have been omitted that can be found in the reference. }

\begin{restatable}[Characterising encodings; see \cite{Cubitt2019} Theorem 4]{thm}{thmEncodings}
\label{thm Encodings}
An \emph{encoding} map $\mathcal{E}$ from Hermitian $(n\times n)$ matrices ($\textup{Herm}_n$) to Hermitian $(m \times m)$ matrices satisfies the following constraints for all $A,B\in \textup{Herm}_n$, and all $p\in [0,1]$ :
\begin{enumerate}[1.]
\item $\mathcal{E}(A)=\mathcal{E}(A)^\dagger$
\item $\textup{spec}[\mathcal{E}(A)]= \textup{spec}[A]$
\item $\mathcal{E}(pA+(1-p)B)= p \mathcal{E}(A) + (1-p)\mathcal{E}(B)$
\end{enumerate}
Encodings are necessarily of the form:
\begin{equation}\label{eqn encoding form}
\mathcal{E}(M)=U(M^{\oplus p}\oplus \overline{M}^{\oplus q})U^\dagger
\end{equation}
for some non-negative integers $p$, $q$ and unitary $U\in \mathcal{M}_m$, where $M^{\oplus p}:=\bigoplus_{i=1}^pM$ and $\overline{M}$ denotes complex conjugation.
\end{restatable}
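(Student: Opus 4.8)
The plan is to show that the three axioms force $\mathcal{E}$ to be a unital Jordan $*$-homomorphism and then to invoke the classical structure theory for such maps on matrix algebras, which delivers exactly the homomorphism-$\oplus$-anti-homomorphism decomposition in \eqref{eqn encoding form}. The first step is to promote the convexity axiom~3 to genuine real-linearity. Applying axiom~2 to $A=0$ gives $\textup{spec}[\mathcal{E}(0)]=\{0\}$, and since $\mathcal{E}(0)$ is Hermitian by axiom~1 this forces $\mathcal{E}(0)=0$. A map preserving convex combinations on the convex set $\textup{Herm}_n$ is affine, and since $\mathcal{E}(0)=0$ it is therefore real-linear; I then extend it to all of $\mathcal{M}_n$ by setting $\mathcal{E}(A+iB):=\mathcal{E}(A)+i\mathcal{E}(B)$ for Hermitian $A,B$, obtaining a complex-linear, $*$-preserving map. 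Axiom~2 applied to $A=I_n$ gives $\textup{spec}[\mathcal{E}(I_n)]=\{1\}$, so $\mathcal{E}(I_n)=I_m$ and the map is unital.

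Next I would extract the Jordan structure from spectrum preservation. Since the spectrum of a projection is contained in $\{0,1\}$, axiom~2 forces $\mathcal{E}$ to send projections to projections. Evaluating this on $P$, $Q$, and $P+Q$ for a pair of orthogonal projections and using linearity yields $\mathcal{E}(P)\mathcal{E}(Q)+\mathcal{E}(Q)\mathcal{E}(P)=0$, which for projections is equivalent to $\mathcal{E}(P)\mathcal{E}(Q)=0$; thus orthogonality of projections is preserved. Writing an arbitrary Hermitian via its spectral decomposition $A=\sum_i a_iP_i$ and using that the $\mathcal{E}(P_i)$ then form an orthogonal resolution of $I_m$ gives $\mathcal{E}(A^2)=\sum_i a_i^2\mathcal{E}(P_i)=\mathcal{E}(A)^2$. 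Polarising this identity shows $\mathcal{E}$ preserves the Jordan product $A\circ B=\tfrac12(AB+BA)$, so $\mathcal{E}$ is a unital Jordan $*$-homomorphism from $\mathcal{M}_n$ into $\mathcal{M}_m$.

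Finally I would apply the structure theorem for Jordan $*$-homomorphisms of matrix algebras: any such map decomposes, relative to a pair of complementary central projections, as the direct sum of a $*$-homomorphism and a $*$-anti-homomorphism. Every $*$-homomorphism of $\mathcal{M}_n$ is, up to a unitary, an ampliation $M\mapsto M^{\oplus p}$, and every $*$-anti-homomorphism is the transpose followed by an ampliation, $M\mapsto (M^{T})^{\oplus q}$; since $M^{T}=\overline{M}$ for Hermitian $M$, collecting both blocks and absorbing all the unitaries into a single $U$ produces $\mathcal{E}(M)=U(M^{\oplus p}\oplus\overline{M}^{\oplus q})U^\dagger$, with $m=(p+q)n$ (and no residual zero block, since unitality forces the two corners to fill $\mathcal{M}_m$).

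The conceptual crux — the step that distinguishes this from a direct appeal to Wigner's theorem — is the reduction to a Jordan homomorphism: turning set-level spectrum preservation together with linearity into multiplicativity of the Jordan product. The technical care there lies in verifying orthogonality preservation cleanly and in handling the multiplicity scaling, since the image lives in the strictly larger algebra $\mathcal{M}_m$ and the spectrum is preserved only as a set rather than with multiplicity. Once $\mathcal{E}$ is known to be a Jordan $*$-homomorphism, the final decomposition is furnished entirely by the C$^*$/Jordan-algebraic classification, so I expect the bulk of the genuinely novel work to sit in the middle step rather than the concluding one.
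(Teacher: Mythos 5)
Your proposal is correct and takes essentially the same route as the proof this paper defers to: \cref{thm Encodings} is quoted, not re-proved, here, and the underlying argument in \cite{Cubitt2019} (Theorem~4) proceeds exactly as you describe --- upgrading convexity and spectrum preservation to a unital, real-linear, projection- and orthogonality-preserving map, deducing that it is a Jordan $*$-homomorphism, and then invoking the Jordan/C$^*$-algebraic structure theorem (homomorphism $\oplus$ anti-homomorphism, ampliations up to a unitary) to land on the form $U(M^{\oplus p}\oplus\overline{M}^{\oplus q})U^\dagger$.
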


Note that the operators in the image and domain of the map may act in Hilbert spaces of different dimension ($n$ and $m$).
Initially no restriction is placed on this.
But from the form of the map it is manifest that $m = (p+q)n$ where $(p+q)\geq 1$ so as expected the simulator or dual system is at least as large as the target.

As a consequence of achieving a full mathematical characterisation, it is relatively straightforward to then show that other physical properties are preserved by encodings,
\begin{restatable}[\cite{Cubitt2019} Prop.~28 and discussion]{prop}{proptobyextra}\label{prop tobyextra}
An \emph{encoding} preserves additional physical properties such that there are relationships between:
\begin{enumerate}
\item Partition functions, $\trace\left(e^{-\beta\mathcal{E}(H)} \right) = (p+q)\trace \left(e^{-\beta H} \right)$;
\item Entropies, $S(\mathcal{E}(\rho)) = S(\rho) + \log(p+q)$.
\end{enumerate}
\end{restatable}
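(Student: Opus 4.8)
The plan is to reduce both statements to direct computations built on the explicit normal form supplied by \cref{thm Encodings}, namely $\mathcal{E}(M) = U(M^{\oplus p}\oplus\overline{M}^{\oplus q})U^\dagger$. Two elementary facts do all the work: conjugation by the unitary $U$ leaves both the trace and any matrix function invariant, so $U$ can be stripped off at the end under the trace, and entrywise complex conjugation of a Hermitian matrix is isospectral, since for Hermitian $M$ one has $\overline{M} = M^T$, which shares the spectrum of $M$. Each block of the direct sum therefore contributes the same scalar quantity as the original operator, and the multiplicity $p+q$ is what ultimately appears.

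For the partition function I would first note that functional calculus commutes with the block-diagonal, unitarily-conjugated structure, giving $e^{-\beta\mathcal{E}(H)} = U\,(e^{-\beta H^{\oplus p}}\oplus e^{-\beta\overline{H}^{\oplus q}})\,U^\dagger$. Taking the trace and using cyclicity to remove $U$ leaves $p\,\trace(e^{-\beta H}) + q\,\trace(e^{-\beta\overline{H}})$. Since $\trace(e^{-\beta\overline{H}}) = \overline{\trace(e^{-\beta H})}$ and this quantity is real (the spectrum of $H$ being real), the two terms coincide and sum to $(p+q)\trace(e^{-\beta H})$, as claimed.

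For the entropy I would first flag the normalisation subtlety, which is really the only point requiring care. Applied verbatim to a state $\rho$ with $\trace\rho = 1$, the map yields $\trace\mathcal{E}(\rho) = p+q$, so $\mathcal{E}(\rho)$ is not itself normalised; the physically meaningful object is the induced state $\mathcal{E}_\text{state}(\rho) = \tfrac{1}{p+q}\mathcal{E}(\rho)$, consistent with requirement IV. Its eigenvalues are exactly the eigenvalues $\lambda_i$ of $\rho$ rescaled to $\lambda_i/(p+q)$, each with multiplicity $p+q$. Substituting into $S(\sigma) = -\sum_\mu \mu\log\mu$ then gives $-\sum_i (p+q)\tfrac{\lambda_i}{p+q}\log\tfrac{\lambda_i}{p+q} = -\sum_i\lambda_i\log\lambda_i + \log(p+q)\sum_i\lambda_i$, and since $\sum_i\lambda_i = \trace\rho = 1$ this is precisely $S(\rho) + \log(p+q)$.

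The computations are routine once \cref{thm Encodings} is in hand, so I do not expect a genuine obstacle; the two places where attention is needed are recognising that the complex-conjugated blocks are isospectral, so they contribute identically rather than spuriously altering the count, and tracking the state normalisation carefully, as this is exactly what converts the naive multiplicative factor $p+q$ into the additive entropic shift $\log(p+q)$.
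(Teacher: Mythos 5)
Your proof is correct and takes essentially the approach the paper intends: the proposition is quoted from \cite{Cubitt2019} (Prop.~28 and discussion) precisely as a ``straightforward'' consequence of the normal form in \cref{thm Encodings}, and both relations follow by exactly the block computation you give (unitary invariance plus isospectrality of $\overline{H}=H^T$ for the partition function, and the eigenvalue bookkeeping for the entropy). Your normalisation remark is also the right reading of the statement: $\mathcal{E}(\rho)$ itself has trace $p+q$, so the entropy relation refers to the induced state $\tfrac{1}{p+q}\mathcal{E}(\rho)$ (equivalently, the compatible state map with uniform weights), which is exactly what turns the multiplicative factor $p+q$ into the additive shift $\log(p+q)$.
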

\noindent Therefore encodings satisfy condition V without explicitly demanding this as an axiom.
There is also a relationship between the entropies of a state and its encoded form which we highlight in \cref{prop tobyextra}.

Preserving the eigenspectra of Hermitian operators hints towards preserving measurement outcomes.
However, conditions IV and VI additionally require a corresponding map on states to be well defined.
While \cite{Cubitt2019} provide examples of maps on states that, when considered with encodings, give conditions IV and VI, the form of $\mathcal{E}_\textup{state}$ is not characterised.
Note that while the eigenspectra is preserved, the eigenstates of operators including the Hamiltonian may look completely different in the original and encoded case, due to the unitary transformation allowed.
However, in particular constructive examples of simulations a close connection between eigenstates can be established see e.g.~Lemma~20 of \cite{ApelBaspin} and discussion therein.
\cref{Map on states} in fact demonstrates that the form of the map on states is also characterised as an implication of the definition of duality maps on observables and preserving measurement outcomes whereby the state mapping uncomputes this unitary transformation.


An earlier work also posed a definition of simulation based on an isometric encoding map \cite{Bravyi2014}.
\cite{Cubitt2019} includes more general maps than simple isometries since anything that satisfies the conditions in \cref{thm Encodings} are allowed.
\cite{Cubitt2019} also largely restricts to local encodings as the physically relevant case, whereas \cite{Bravyi2014} imposes no formal conditions on the isometry except noting it should be able to be implemented practically.

This framework was altered to consider a simulator system that only reproduces the ground state  and first excited state (and hence the spectral gap) of the Hamiltonian, in \cite{Aharonov2018}.
The independent interest of gap simulation is demonstrated by applying the framework to the task of Hamiltonian sparsification -- exploring the resources required for simplifying the Hamiltonian interaction graph.
Aside from the above works there has been little other follow up work exploring the theoretical notion of analogue Hamiltonian simulation and duality.

\subsection{Motivating examples}\label{sect example}

The framework analysed in \cite{Cubitt2019}, while the strongest sense of simulation/duality, already encompasses some important cases of physical dualities.
For example, fermionic encodings such as the Jordan-Wigner transformation \cite{JordanWigner, Nielsen2005} fit into the framework, able to replicate the full physics of the target in the simulator system.
Quantum error correcting codes are also examples of `simulations in a subspace' characterised by \cite{Cubitt2019}.
In this vein, to begin generalising the current literature we look to physical dualities not yet contained by the current frameworks.

As discussed above, when motivated by duality as opposed to just simulation, an important class is strong-weak dualities.
For many of these dualities, there are several aspects that prevent integration with the current frameworks.
These challenges can arise due to the absence of a comprehensive mathematical description (such as in the AdS/CFT duality) or the reliance on descriptions involving infinite-dimensional field theories.
However, there are simpler instances that still capture some characteristics of strong-dualities while being describable on a finite spin lattice.
Here we describe two examples of strong weak duality that are closest to our setting which we will use as motivation when extending the current framework.

\paragraph{Kramer-Wannier duality}

A paradigmatic example of a strong-weak duality is the Kramer-Wannier duality~\cite{Baxter1989}.
Even the isotropic case of this classical duality is not captured by the strong sense of simulation in~\cite{Cubitt2019} with the key novel element being the strong-weak nature of the two Hamiltonians.
Therefore this duality was a first benchmark for this generalisation of the theory of simulation to more broadly encompass dualities.

In Kramer-Wannier an Ising Hamiltonian on a 2d square lattice at high temperature ($\tanh J \beta \ll1$):
\begin{equation}
H = - J \sum_{\langle i,j \rangle} \sigma_i \sigma_j,
\end{equation}
is dual to another Ising Hamiltonian on the same lattice (in the thermodynamic limit) at low temperature ($\tilde{J}\tilde{\beta}\gg1$):
\begin{equation}
\Phi(H) = -\tilde{J} \sum_{\langle i,j \rangle} \sigma_i \sigma_j,
\end{equation}
in the thermodynamic limit.
The two Hamiltonians are dual, in the sense that their free energies, $f$, are related by
\begin{equation}
\tilde{\beta}f_{\Phi(H)}= \beta f_H + \ln \sinh (2\beta J),
\end{equation}
when the following duality condition relating the interaction strengths and temperature is satisfied:
\begin{equation}\label{condition eqn}
\tilde{J}\tilde{\beta} = - \frac{1}{2}\ln \tanh (J \beta).
\end{equation}
A more detailed description of this duality and how it arises is given in \cref{appen K-W}.

This duality can be used to find the critical point for the 2d Ising model since at this point the free energies will be non-analytic.
It is in some sense a very simple duality as both Hamiltonians have the same form and act on identical copies of the Hilbert space.
However, it follows from the non-trivial nature of the relation between the free energies that expecting all observables to be preserved is too strong.
Furthermore it is clear from the form of the duality that the energy spectrum cannot be preserved without a rescaling.
These two aspects of the duality prevent it from fitting into the framework developed in \cite{Cubitt2019}.

\paragraph{Boson-Fermion duality}

Boson-Fermion dualities (bosonisation/fermionisation) are a class of dualities transforming between bosonic and fermionic systems, usually in the context of quantum fields.
They are an example of particle vortex dualities that have had wide application, particularly in quantum field theory and condensed matter physics.
Similarly to the Kramer-Wannier duality the interest often lies in transforming strongly interacting fermionic systems (e.g. electrons in metals in condensed matter physics) to weakly interacting bosonic systems or vice versa.
In particular these dualities often work well near critical points or phase transitions where the crossover of these regimes takes place.
In this context the `strong-weak' nature of the duality can be referred to as a `UV to IR' duality.

There has been extensive study of boson-fermion duality in different dimensions and it is conjectured that an exact duality exists in 3D on the level of partition functions \cite{Polyakov:1988md, Kachru2016,Mross}.
Here `exact' duality refers to a transformation that is valid in all regimes including at criticality, whereas `approximate' dualities can be demonstrated to hold under some conditions or in specific UV or IR limit.
Extending the mathematical framework to also include this type of approximate duality is considered in \cref{sect Approximate dualities} where the equivalence is restricted to a subspace e.g. the low energy subspace.
The majority of this paper considers exact mappings, and a duality of this type was demonstrated between 3D lattice gauge theories in \cite{Chen2017}.

The duality in \cite{Chen2017} is between a strongly coupled boson and its free fermion vortex.
The bosonic theory is an $XY$ model coupled to a $U(1)$ Chern-Simons gauge field, where the Chern-Simons theory is realised via a lattice fermion with mass $M$ and interaction $U$.
The fermionic dual theory is a free massless Dirac fermion implemented by a lattice fermion of mass $M'$ and interaction $U'$.
The partition function of the fermionic system is shown to be proportional to the bosonic theory even at criticality given that the mass and interactions of the two lattice fermions are related via,
\begin{equation}
\frac{M'}{M} = \frac{I_0(1/T)}{I_1(1/T)}=\sqrt{\frac{1+U'}{1+U}} = \begin{cases}
1 \quad \text{when } T=0,\\
\infty \quad \text{when } T=\infty
\end{cases}
\end{equation}
where $I_j(x)$ is the $j$th modified Bessel function.
The above echoes \cref{condition eqn} giving the duality condition relating the physics of the two systems in different regimes (low and high temperature).

Generally the literature on boson-fermion dualities is out of reach for a duality framework considering operators in finite dimensions as there is a notable gap in our understanding of connecting quantum field theories to finite-dimensional operator algebra.
Nevertheless, examining the qualitative aspects of boson-fermion dualities can shed light on deficiencies within the previous mathematical framework.
This example reinforces the importance of incorporating non-trivial relationships between spectra to adequately accommodate strong-weak dualities.

\section{Generalised duality map}\label{sect Generalised duality map}

The first step in studying maps between operators describing a ``duality'' is to identify what properties these maps should preserve in general.
There is potential for wide variation in how duality maps are defined.
This work aims for a minimal set of axioms that encompasses as many dualities as possible, in particular strong-weak and high-low temperature dualities, while capturing~\cite{Cubitt2019}'s simulation as a special case.
This paper is restricted to consider finite dimensional systems, we denote Hermitian $(n\times n)$ matrices by $\text{Herm}_n$.

\begin{restatable}[Measurement duality map]{defn}{axioms1}
\label{Measurement duality axioms}
A \emph{measurement duality map}, $\Phi_s: \textup{Herm}_n \mapsto\textup{Herm}_{m}$ satisfies
\begin{enumerate}[(i)]
\item $\forall$ $a_i\in \textup{Herm}_n$, $p_i\in [0,1]$ with $\sum_i p_i = 1:$\\
  \[\Phi_s\left(\sum_i p_i a_i \right) = G\left(\sum_i p_i a_i \right)\sum_i g( a_i)h(p_i) \Phi_s(a_i);\]
\item $\forall$ $A\in \textup{Herm}_n:$\\
\[\textup{spec}\left[\Phi_s(A)\right] = f(A) \textup{spec}[A].\]
\end{enumerate}
The scaling functions $f$, $G$, $g$: $\textup{Herm}_n\mapsto \mathbb{R}$, are Lipschitz on any compact subset of $\textup{Herm}_n$ and map to zero iff the input is the zero operator.
$h$: $[0,1] \mapsto [0,1]$ describes a mapping between probability distributions such that $\sum_i h(p_i) = 1$.
\end{restatable}

Intuitively, all duality maps must preserve Hermiticity for observables in one theory to be associated with observables in another -- this is the most straightforward condition on any duality map.
The map is defined to take $(n\times n)$ Hermitian matrices as inputs and output $(m\times m)$ Hermitian matrices.
A priori there is no constraint or relation between $n$ and $m$ but we will later see as a consequence of the definition that $m/n$ is a positive integer.

Dualities are also constrained by the convex structure of quantum mechanics, but formulating the minimal requirements in this case is more subtle.
Operationally, a convex combination of observables corresponds physically to the process of selecting an observable at random from some ensemble of observables according to some probability distribution, measuring that observable, and reporting the outcome.
This is commonly described mathematically by an ensemble of observables: $\{p_i,A_i\}$, where $p_i$ is the probability of measuring observable $A_i$.
Since this is a physical operation that can be performed on the original system, there must be a corresponding procedure on the dual system that gives the same outcome.
However, this does \emph{not} imply that the dual process must necessarily be given by the convex combination of the dual observables.
It would clearly be possible operationally to first rescale the probability distribution before picking the dual observable to measure, and then to rescale the outcome of that measurement in some way before reporting it.
A fully general axiomatisation of duality has to allow for this possibility, and this is precisely what is captured mathematically in Axiom~(i).\footnote{Note that Axiom~(ii) is a slight abuse of notation since the map $\Phi$ is really a function of the ensemble $\{p_i,a_i \}$. However the outcome should not depend on how you chose to construct the ensemble average. It will turn out later (see \cref{lm Constrained scale functions} for details) that consistency with the final axiom imposes additional constrains the allowed probability and observable rescaling functions, such that $\Phi$ is truly only a function of the ensemble average. But this is a non-trivial consequence of the \emph{iteraction} between convexity and preservation of other physical properties; it is not required just by the duality of observable ensembles.}

In quantum mechanics measurement outcomes are associated with the spectra of the Hermitian operators, hence the final axiom requires a relation between the spectra of dual operators.
Again, operationally, we have to allow for the possibility of rescaling the measurement outcomes.
Even a simple change of measurement units, which has no \emph{physical} content, induces such a rescaling mathematically.
But more general rescalings that interchange large and small eigenvalues are possible, indeed required to encompass strong-weak dualities (e.g.\ the classic Kramer-Wannier duality).

This is captured mathematically in Axiom (ii) of~\cref{Measurement duality axioms} by the scaling function, $f$, which is observable-dependent.
Furthermore, Axiom~(ii) imposes a relation on the \emph{set} eigenvalues, but not on their ordering or multiplicities.
Thus which particular dual measurement outcome corresponds to which outcome on the original system can vary.
Since the scaling functions depend on the operator, the form of the duality is free to vary for different observables.

The only constraints imposed on the scaling functions $f,g,G$ are those we argue are physically necessary: the range must be restricted to real numbers since all measurement outcomes in quantum mechanics must be real; they are required to satisfy a very weak Lipschitz condition to exclude unphysical discontinuities; and non-vanishing for a non-zero input ensures every observable has a corresponding dual.

There are still plausible notions of duality not captured by this definition.
However, the formulation given in~\cref{Measurement duality axioms} is sufficient to restrict to mappings that represent meaningful dualities, yet be a substantial generalisation of \cref{thm Encodings}.

\subsection{Characterisation}\label{Characterisation}

A priori, requiring that the spectrum of operators is preserved up to a function that is allowed to depend on the operator itself would appear to be an extremely weak constraint on the map.
For example, this function may arbitrarily rescale or invert the spectrum for different operators.
However, the interplay between spectrum rescaling and (rescaled) convexity introduces significantly more rigidity into the maps' structure than either constraint alone.
The scaling functions $f,g,G$ appearing in the axioms are found to be necessarily related, such that the axioms can be equivalently rewritten using only a single function.
These relationships are proven rather than assumed by initially considering the action of the duality map on orthogonal projectors and proving that the constraints imply a non-trivial preservation of orthogonality (and then building up to general Hermitian operators).
A key element of this proof is that intuitively unphysical actions of the map -- for example discontinuous permutations within projectors during continuous variations in the operator -- can be ruled out using the analyticity of the resolvent of operators at non-degenerate points in its spectrum \cref{appen matching}.

\begin{restatable}[Characterisation]{thm}{thmcharacterisation}
\label{thm Characterisation}
Any \emph{measurement duality map}, $\Phi_s$, with the scale function $f(\cdot)$ is necessarily of the form,
\[\Phi_s(A) = f(A) U \left(A^{\oplus p} \oplus \overline{A}^{\oplus q} \right) U^\dagger,\]
where $p,q$ are non-negative integers, $U$ is a unitary transformation and $\bar{A}$ represents the complex conjugate of $A$.
Equivalently,
\[\Phi_s(A) = f(A) U \left(A\otimes P + \overline{A}\otimes Q \right) U^\dagger,\]
where $P$ and $Q$ are orthogonal complemently projectors.
\end{restatable}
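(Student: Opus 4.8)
The plan is to reduce the statement to the encoding characterisation of \cref{thm Encodings} by stripping off the spectral rescaling $f$. Define the normalised map $\mathcal{E}(A) \defeq \Phi_s(A)/f(A)$ for $A \neq 0$ and $\mathcal{E}(0) \defeq 0$; since $f$ maps into $\mathbb{R}$ and vanishes only at $0$, this is well defined, and $\Phi_s(A) \neq 0$ for $A \neq 0$ because Axiom~(ii) gives it nonzero spectrum. Two of the three hypotheses of \cref{thm Encodings} are then immediate: Hermiticity of $\mathcal{E}(A)$ is inherited from $\Phi_s$ together with $f(A) \in \mathbb{R}$, and $\text{spec}[\mathcal{E}(A)] = \text{spec}[A]$ follows directly from dividing Axiom~(ii) by $f(A)$. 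The entire content of the theorem is therefore to show that $\mathcal{E}$ is \emph{convex-linear}, i.e.\ $\mathcal{E}(pA+(1-p)B) = p\mathcal{E}(A)+(1-p)\mathcal{E}(B)$; once this is established, \cref{thm Encodings} yields $\mathcal{E}(A) = U(A^{\oplus p}\oplus\overline{A}^{\oplus q})U^\dagger$ and hence the claimed form.

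First I would extract the rigid relations among the scaling functions from degenerate ensembles. Feeding the one-element ensemble $\{1,A\}$ into Axiom~(i) and using $\Phi_s(A)\neq 0$ forces $G(A)g(A)=1$ and $h(1)=1$; the ensemble $\{p,A;\,1-p,A\}$ then forces $h(p)+h(1-p)=1$, so in particular $h(0)=0$. The crux is the behaviour on orthogonal rank-one projectors. For $P_1 \perp P_2$, Axiom~(ii) shows $\Phi_s(P_i) = f(P_i)\Pi_i$ with $\Pi_i$ a projector, and I would study the one-parameter family $A(t) = (1-t)P_1 + tP_2$. Axiom~(i) writes $\Phi_s(A(t))$ as $c_1(t)\Pi_1 + c_2(t)\Pi_2$ with coefficients $c_1(t)=G(A(t))g(P_1)h(1-t)f(P_1)$ and $c_2(t)=G(A(t))g(P_2)h(t)f(P_2)$, while Axiom~(ii) forces its nonzero spectrum to be exactly the two values $f(A(t))(1-t)$ and $f(A(t))t$ for $t \neq \tfrac12$. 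Applying a simultaneous (Jordan/CS) block decomposition of the pair $(\Pi_1,\Pi_2)$, any deviation from orthogonality of their ranges would contribute $2\times 2$ block eigenvalues that vary continuously with the principal angle and differ from $\{c_1(t),c_2(t)\}$, producing a third distinct nonzero eigenvalue for a range of $t$ — a contradiction. Hence $\Pi_1\Pi_2 = 0$: orthogonality is preserved.

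With orthogonality in hand, matching the two nonzero eigenvalues to the coefficients $c_1(t),c_2(t)$ — using analyticity of the resolvent, so that the spectral projectors $\tfrac{1}{2\pi i}\oint (zI-\Phi_s(A(t)))^{-1}\,dz$ are analytic away from the degeneracy at $t=\tfrac12$ and discontinuous relabelling of eigenspaces is ruled out — pins down the scaling functions so that $\mathcal{E}(A(t)) = (1-t)\Pi_1 + t\Pi_2 = (1-t)\mathcal{E}(P_1) + t\mathcal{E}(P_2)$, i.e.\ convex-linearity on combinations of orthogonal projectors. I would then bootstrap to arbitrary $A,B$ by decomposing each Hermitian into its spectral projectors, applying the convexity axiom to the resulting ensemble, and using the consistency of the now-determined scaling functions together with continuity to extend convex-linearity to all of $\text{Herm}_n$. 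Feeding the convex-linear, Hermiticity- and spectrum-preserving map $\mathcal{E}$ into \cref{thm Encodings} gives $\mathcal{E}(A) = U(A^{\oplus p}\oplus\overline{A}^{\oplus q})U^\dagger$, so $\Phi_s(A) = f(A)U(A^{\oplus p}\oplus\overline{A}^{\oplus q})U^\dagger$. The equivalent tensor form follows by writing $A^{\oplus p}\oplus\overline{A}^{\oplus q} = A\otimes P + \overline{A}\otimes Q$ with $P = I_p\oplus 0_q$ and $Q = 0_p\oplus I_q$ the complementary projectors on $\mathbb{C}^{p+q}$ (absorbing the reordering into $U$).

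The main obstacle is the orthogonality-preservation step and, intertwined with it, the global control of the operator-dependent scaling functions. The danger is precisely that $\Phi_s$ could act in physically pathological ways — permuting eigenspaces discontinuously, or letting the images of orthogonal projectors overlap while still conspiring through $G,g,h$ to reproduce the correct spectrum pointwise. Ruling these out requires upgrading the spectral-counting argument above from a single two-projector family to a statement uniform over all projectors and operators, which is where the resolvent-analyticity input of \cref{appen matching} does the real work; establishing that $h$ is forced to be the identity and that $G,g,f$ combine consistently across different spectral decompositions is the delicate part of the bootstrap.
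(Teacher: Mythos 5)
Your proposal is correct and shares the paper's overall architecture: both normalise the map to $\mathcal{E}(A) = \Phi_s(A)/f(A)$, reduce the theorem to verifying the three hypotheses of \cref{thm Encodings}, and locate the real work in (a) showing orthogonal projectors map to orthogonal projectors and (b) pinning down the scale functions, with the resolvent-analyticity argument of \cref{appen matching} ruling out discontinuous permutations of spectral projectors. Where you genuinely depart from the paper is step (a). The paper proves it in two algebraic stages: \cref{lm Mapping orthogonal complement projectors} handles complementary projectors by comparing against the image of the identity (which must be proportional to $\mathbb{I}$) and forcing off-diagonal blocks to vanish, and \cref{lm Mapping orthogonal projectors} extends to general orthogonal pairs via positive semi-definiteness of the blocks in the $\{\Pi_{12},\Pi_{12}^\perp\}$ basis. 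You instead run a spectral-counting argument along the family $A(t)=(1-t)P_1+tP_2$ using the CS/Jordan decomposition of the pair $(\Pi_1,\Pi_2)$. Your route replaces two lemmas by one unified continuity argument; the paper's route is purely algebraic, needs no parameter family, and handles arbitrary-rank projectors with no case analysis.

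One point in your orthogonality step needs tightening. You argue that non-orthogonality contributes $2\times 2$-block eigenvalues that ``differ from $\{c_1(t),c_2(t)\}$, producing a third distinct nonzero eigenvalue''. But $c_1(t)$ and $c_2(t)$ need not themselves be eigenvalues of $\Phi_s(A(t))$ (that requires 1D blocks of type $\mathrm{range}\,\Pi_1\cap\ker\Pi_2$ and vice versa to be present), so the comparison must be made against the allowed spectrum $f(A(t))\{t,1-t,0\}$ directly. The repair uses exactly the tools you set up: in a 2D block with principal angle $\theta\in(0,\pi/2)$ the two eigenvalues have sum $c_1+c_2$ and product $c_1c_2\sin^2\theta$, so demanding they equal $f(A(t))t$ and $f(A(t))(1-t)$ forces the discriminant condition $\sin^2\theta \geq 4t(1-t)$, which fails for $t$ close to $1/2$ since $\theta$ is fixed; and a nontrivial common range contributes the eigenvalue $c_1+c_2$, which can sit inside the two-point spectrum only at isolated values of $t$. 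With that patch, and the bootstrap to general Hermitians carried out as in \cref{lm Constrained scale functions}, your proof goes through.
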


\begin{figure}[tbp]
\centering
\begin{tikzpicture}
\begin{scope}[on background layer]
         \node [inner sep=0pt] (1) at (0, 0) {\includegraphics[trim={0cm -1cm 0cm 0cm},clip,scale=0.43]{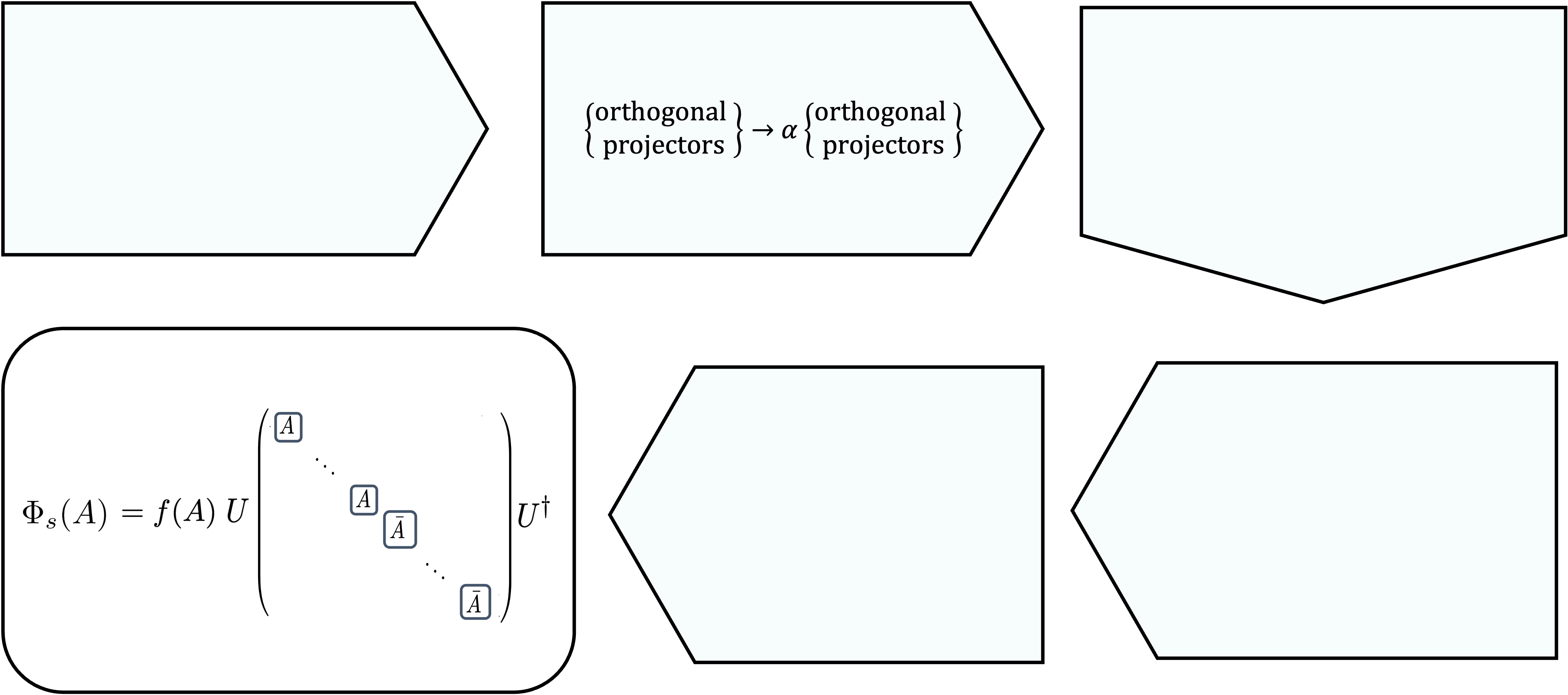}};
\end{scope}
		\node [align=left, anchor=west] (0) at (-6.5, 2.3) {\scriptsize Measurement duality map,};
		\node [align=left, anchor=west] (0) at (-6.5, 1.9) {\scriptsize  $\Phi_s$, \cref{Measurement duality axioms}};
		\node [align=left, anchor=west] (0) at (-2, 2.7) {\scriptsize $\Phi_s$ maps,};
		\node [align=left, anchor=west] (0) at (-2, 1.5) {\scriptsize (\cref{lm Mapping orthogonal complement projectors}, \cref{lm Mapping orthogonal projectors})};
		\node [align=left, anchor=west] (0) at (2.8, 2.3) {\scriptsize Consider general Hermitians};
		\node [align=left, anchor=west] (0) at (3, 1.9) {\scriptsize in spectral decomposition};
		\node [align=left, anchor=west] (0) at (2.8, -0.8) {\scriptsize A relationship between scale };
		\node [align=left, anchor=west] (0) at (2.8, -1.2) {\scriptsize functions in the definition of };
		\node [align=left, anchor=west] (0) at (2.8, -1.6) {\scriptsize$\Phi_s$ (\cref{lm Constrained scale functions})};
		\node [align=left, anchor=west] (0) at (-1.0, -0.8) {\scriptsize Connect to C$*$ algebra };
		\node [align=left, anchor=west] (0) at (-1.0, -1.2) {\scriptsize techniques seen in};
		\node [align=left, anchor=west] (0) at (-1.0, -1.6) {\scriptsize \cite{Cubitt2019}};
		\node [align=left, anchor=west] (0) at (-6.6, 0) {\scriptsize \cref{thm Characterisation}: map characterisation};
		\node [align=left, anchor=west] (0) at (-6.6, -2.4) {\scriptsize where $U$ is a unitary.};
\end{tikzpicture}
\caption{Outline of proof idea for \cref{thm Characterisation}. Starting with the definition of the duality map, we first show that orthogonal projectors are mapped to orthogonal projectors. This then allows one to consider Hamiltonians in their spectral decomposition and map these using the rescaled convexity axiom, there are two technical results as we first show this result for orthogonal complement projectors (\cref{lm Mapping orthogonal complement projectors}) and then general orthogonal projectors (\cref{lm Mapping orthogonal projectors}). From consistency this results in a relationship between the scale functions used to define the map. Substituting these relations into the definition gives new conditions in terms of a single rescaling function $f$. From here, the connection to the characterisation theorem in \cite{Cubitt2019} can be made which leads to the final result: the duality map is necessarily of the form of taking direct sum of copies of the observable and the complex conjugate of the observable, doing a unitary transformation and multiplying by the rescaling function.}
 \label{fg proof idea}
\end{figure}

The rest of this section is dedicated to proving \cref{thm Characterisation}.
A sketch of the argument and ingredients used in the proof are outlined in \cref{fg proof idea}.
The result relies on relating duality maps to the encodings characterised in \cref{thm Encodings}.
To demonstrate this, we first need to examine the necessary relations between the different scaling functions which in tern requires establishing how the map transforms orthogonal projectors.
The following lemma shows that a duality map will take orthogonal complement projectors to objects proportional to two new orthogonal complement projectors in the new Hilbert space.

\begin{restatable}[Mapping orthogonal complement projectors]{lemma}{lemmaMappingorthogonacomplementprojectors}
\label{lm Mapping orthogonal complement projectors}
Let $Q_1$ and $Q_2$ be orthogonal complement projectors ($Q_1Q_2 = Q_2Q_1 = 0$ and $Q_1 + Q_2 = \mathbb{I}$). Under a measurement duality map $\Phi_s$ these projectors are mapped to:
\[\Phi_s(cQ_1)\propto \Sigma_1 \qquad \Phi_s(cQ_2)\propto \Sigma_2.\]
Where $c\in\mathbb{R}$ and $\Sigma_1, \Sigma_2$ are themselves orthogonal complement projectors, i.e. $\Sigma_1^2 = \Sigma_1$, $\Sigma_2^2=\Sigma_2$, $\Sigma_1 \Sigma_2 = \Sigma_2 \Sigma_1 = 0$ and $\Sigma_1 + \Sigma_2 = \mathbb{I}$.
\end{restatable}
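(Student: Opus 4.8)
The plan is to split the claim into two independent parts: first, that each $\Phi_s(cQ_i)$ is a scalar multiple of a genuine, and crucially \emph{nontrivial}, projector $\Sigma_i$; second, that the two projectors so obtained are orthogonal complements. The first part follows immediately from the spectral axiom. Taking $c\neq 0$ and assuming $Q_1,Q_2$ are both proper nonzero projectors, the operator $cQ_i$ has spectrum $\{0,c\}$, so Axiom~(ii) forces $\mathrm{spec}[\Phi_s(cQ_i)]=f(cQ_i)\{0,c\}=\{0,f(cQ_i)c\}$. Since $f$ vanishes only on the zero operator, $\lambda_i:=f(cQ_i)c\neq 0$, so $\Phi_s(cQ_i)$ is Hermitian with exactly the two distinct eigenvalues $0$ and $\lambda_i$. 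By the spectral theorem any such operator equals $\lambda_i\Sigma_i$ with $\Sigma_i$ the orthogonal projection onto the $\lambda_i$-eigenspace, and the presence of two distinct eigenvalues guarantees $0\neq\Sigma_i\neq\mathbb{I}$. (The degenerate cases $Q_1=\mathbb{I},Q_2=0$ and its mirror are handled separately and trivially, giving $\Sigma_1=\mathbb{I},\Sigma_2=0$.)

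For the second part I would feed the identity $\tfrac12(cQ_1)+\tfrac12(cQ_2)=\tfrac{c}{2}\mathbb{I}$ into the rescaled-convexity Axiom~(i). The two-outcome uniform distribution forces $h(\tfrac12)=\tfrac12$ via $\sum_i h(p_i)=1$, and since $G,g$ vanish only on the zero operator, Axiom~(i) produces $\Phi_s(\tfrac{c}{2}\mathbb{I})=\alpha\Sigma_1+\beta\Sigma_2$ with coefficients $\alpha=G(\tfrac{c}{2}\mathbb{I})h(\tfrac12)g(cQ_1)\lambda_1$ and $\beta=G(\tfrac{c}{2}\mathbb{I})h(\tfrac12)g(cQ_2)\lambda_2$, both nonzero. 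On the other hand $\tfrac{c}{2}\mathbb{I}$ has the single eigenvalue $\tfrac{c}{2}$, so Axiom~(ii) forces $\Phi_s(\tfrac{c}{2}\mathbb{I})$ to be scalar, $\mu\mathbb{I}$ with $\mu=f(\tfrac{c}{2}\mathbb{I})\tfrac{c}{2}\neq 0$. This yields the key matrix identity
\[\alpha\Sigma_1+\beta\Sigma_2=\mu\mathbb{I}.\]

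The remainder is linear algebra on this identity. Solving $\Sigma_1=\alpha^{-1}(\mu\mathbb{I}-\beta\Sigma_2)$ exhibits $\Sigma_1$ as a polynomial in $\Sigma_2$, so the projectors commute and can be simultaneously diagonalised; each shared eigenvector then carries a pair $(s_1,s_2)\in\{0,1\}^2$ with $\alpha s_1+\beta s_2=\mu$. I would rule out the bad pairs in turn: $(0,0)$ is impossible because $\mu\neq 0$; nontriviality $\Sigma_i\neq\mathbb{I}$ guarantees some eigenvector has $s_i=0$, which together with the exclusion of $(0,0)$ forces both $(1,0)$ and $(0,1)$ to occur and hence $\alpha=\beta=\mu$; and then $(1,1)$ would demand $\alpha+\beta=\mu$, i.e.\ $\mu=0$, a contradiction. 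Thus every common eigenvector is $(1,0)$ or $(0,1)$, which is precisely $\Sigma_1\Sigma_2=\Sigma_2\Sigma_1=0$ together with $\Sigma_1+\Sigma_2=\mathbb{I}$. Equivalently, multiplying the identity by $\Sigma_1$ and by $\Sigma_2$ gives $\Sigma_1\Sigma_2=\tfrac{\mu-\alpha}{\beta}\Sigma_1=\tfrac{\mu-\beta}{\alpha}\Sigma_2$, and a nonzero proportionality constant would force $\Sigma_1=\Sigma_2=\mathbb{I}$, contradicting nontriviality, so $\mu=\alpha=\beta$ directly.

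I expect the delicate step to be the case analysis pinning down the coefficients: the nontriviality $\Sigma_i\neq\mathbb{I}$ is what separates the genuine orthogonal-complement conclusion from the weaker statement that the ranges merely span, and it rests entirely on $cQ_i$ having a truly two-element spectrum, so the argument must be phrased carefully enough to exclude the projectors collapsing to $\mathbb{I}$. The only other points requiring care are verifying that none of $\alpha,\beta,\mu,h(\tfrac12)$ vanish, all of which follow from the nonvanishing hypotheses on $f,g,G$ and the normalisation of $h$.
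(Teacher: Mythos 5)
Your proposal is correct, and its skeleton is the same as the paper's: use Axiom~(ii) to show each $\Phi_s(cQ_i)$ is a nonzero multiple of a projector $\Sigma_i$, apply Axiom~(i) to the mixture $\tfrac12(cQ_1)+\tfrac12(cQ_2)=\tfrac{c}{2}\mathbb{I}$, use Axiom~(ii) again on $\tfrac{c}{2}\mathbb{I}$ to get the key identity $\alpha\Sigma_1+\beta\Sigma_2=\mu\mathbb{I}$ with nonvanishing coefficients, and finish by linear algebra.

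Where you genuinely diverge is the finishing step, and your variant is worth comparing. The paper writes the identity in block form relative to $\{\Sigma_1,\Sigma_1^\perp\}$, kills the off-diagonal blocks to get commutativity, and then pins down the coefficients using idempotency of the diagonal blocks together with positive semidefiniteness of $\Sigma_2$. You instead get commutativity by exhibiting $\Sigma_1$ as a polynomial in $\Sigma_2$, simultaneously diagonalise, and run a case analysis on the joint eigenvalue pairs $(s_1,s_2)\in\{0,1\}^2$; the pairs $(0,0)$ and $(1,1)$ are excluded by $\mu\neq0$, and the occurrence of both $(1,0)$ and $(0,1)$ — hence $\alpha=\beta=\mu$ — is forced by the \emph{nontriviality} $0\neq\Sigma_i\neq\mathbb{I}$. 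That nontriviality, which you correctly extract from the fact that Axiom~(ii) is an equality of spectra as \emph{sets} (so both $0$ and $f(cQ_i)c$ genuinely occur as eigenvalues when $Q_i\notin\{0,\mathbb{I}\}$), is left implicit in the paper, yet it is exactly what rules out degenerate solutions of the key identity such as $\Sigma_1=\Sigma_2=\mathbb{I}$ with $\alpha+\beta=\mu$ — a case the paper's block computation silently assumes away by treating $\Sigma_1^\perp$ as nontrivial. Your proposal also nails down two small points the paper glosses over: $h(\tfrac12)=\tfrac12$ (hence $\neq 0$) from the normalisation $\sum_i h(p_i)=1$, and the separate treatment of the degenerate inputs $Q_i\in\{0,\mathbb{I}\}$ and $c=0$. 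What the paper's route buys in exchange is an explicit matrix computation that produces the final block forms of $\Sigma_1,\Sigma_2$ directly; what yours buys is robustness at the edge cases and no appeal to positivity beyond the projector property itself.
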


\begin{proof}
Since a general projector $P_i$ has $\text{spec}[P_i]\in \{0,1\}$, by axiom (ii) of \cref{Measurement duality axioms} the mapped operator has $\text{spec}\left[\Phi_s(cP_i) \right] = f(cP_i) \text{spec}[cP_i] = cf(cP_i)\text{spec}[P_i]\in  cf(cP_i)\{ 0,1 \}$.
The map also preserves Hermiticity via definition, so projectors are mapped to operators proportional to projectors.
In particular, given orthogonal complement projectors:
\begin{align}
&\Phi_s(cQ_1) = cf(cQ_1) \Sigma_1\\
&\Phi_s(cQ_2) = cf(cQ_2) \Sigma_2,
\end{align}
it only remains to show that $\Sigma_1,\Sigma_2$ are also orthogonal complement projectors.

The identity is a special case since $\text{spec}[\mathbb{I}] \in \{1 \}$ so $\text{spec}[\Phi_s(\mathbb{I})]\in \{ f(\mathbb{I})\}$. Therefore,
\begin{equation}
\Phi_s\left(\frac{c}{2}(Q_1 + Q_2)\right) = \Phi_s\left(c\mathbb{I}/2\right) =\frac{c}{2} f\left(c\mathbb{I}/2\right)\mathbb{I} \label{id_map}.
\end{equation}
Applying axiom (i) to the sum of operators gives,
\begin{align}
\Phi_s\left(\frac{c}{2}(Q_1 + Q_2)\right) &= G(c\mathbb{I}/2) h(1/2)\left[ g(cQ_1)\Phi_s(cQ_1) + g(cQ_2)\Phi_s(cQ_2) \right] \\
& = G(c\mathbb{I}/2)h(1/2)\left[g(cQ_1)cf(cQ_1)\Sigma_1 + g(Q_2)cf(cQ_2)\Sigma_2 \right] \label{sum_axiom_result}.
\end{align}
Note that while $c$ is a general real, (i) has to be applied with $\sum_i p_i = 1$ and $p_i \in [0,1]$, in this case $t_1,t_2 = 1/2$ and $c$ has been absorbed into the Hermitian operators.

Equating \cref{id_map} and \cref{sum_axiom_result},
\begin{align}
G(c\mathbb{I}/2)h(1/2) \left[g(Q_1)cf(cQ_1)\Sigma_1 + g(Q_2)cf(cQ_2)\Sigma_2 \right] =& \frac{c}{2}f(c\mathbb{I}/2) \mathbb{I}\\
2\frac{G(c\mathbb{I}/2)}{f(c\mathbb{I}/2)}h(1/2)\left[g(cQ_1)f(cQ_1) \Sigma_1 + g(cQ_2)f(cQ_2)\Sigma_2 \right] =& \mathbb{I}\\
\alpha \Sigma_1 + \beta \Sigma_2 =& \mathbb{I},\label{sum of comp proj}
\end{align}
where the notation is simplified by defining:
\begin{equation}
\alpha := \frac{2G(c\mathbb{I}/2)h(1/2)g(cQ_1)f(cQ_1)}{f(c\mathbb{I}/2)}, \qquad
\beta := \frac{2G(c\mathbb{I}/2)h(1/2)g(cQ_2)f(cQ_2)}{f(c\mathbb{I}/2)} .
\end{equation}
Rewriting the matrices in \cref{sum of comp proj} in the $\{\Sigma_1, \Sigma_1^\perp\}$ basis,
\begin{equation}\label{eqn orthog comp key}
\alpha \left(
\begin{array}{c|c}
\mathbb{I} & 0 \\
\hline
 0 & 0
\end{array}
\right) +
\beta \left(
\begin{array}{c|c}
A & B \\
\hline
C  & D
\end{array}
\right) =
\left(
\begin{array}{c|c}
\mathbb{I} & 0  \\
\hline
0  & \mathbb{I}
\end{array}
\right).
\end{equation}
Equating the off-diagonal quadrants gives that $ \beta B=\beta C = 0$.
Since the initial properties of the scaling functions imply that $\beta\neq 0$, $B$ and $C$ must vanish and $\Sigma_1, \Sigma_2$ are simultaneously diagonalisable with $\left[\Sigma_1, \Sigma_2\right] =0$.
Equating diagonal quadrants gives:
\begin{align}
& \alpha \mathbb{I} + \beta A = \mathbb{I}\label{eqn for A}\\
& \beta D = \mathbb{I}.\label{eqn for D}
\end{align}
In order for $\Sigma_2$ to be a valid projector $D^2=D$ and $A^2=A$.
This together with the expression for $D= \frac{1}{\beta} \mathbb{I}$ from \cref{eqn for D} implies that $\beta = +1$ and $D=\mathbb{I}$.
Finally, rearranging \cref{eqn for A},
\begin{align}
A = (1-\alpha)\mathbb{I} = A^2 = (1-\alpha)^2\mathbb{I},
\end{align}
together with $\alpha\neq0$ implies that $\alpha =+1$, $A=0$. In both the above cases, the solutions $\beta=-1$ and $\alpha=-2$ are discarded since $\Sigma_2$ must be a positive definite operator.
In the $\{\Sigma_1, \Sigma_1^\perp\}$ basis
\begin{equation}
\Sigma_1 = \left(
\begin{array}{c|c}
\mathbb{I} & 0 \\
\hline
0 & 0
\end{array}
\right) \qquad
\Sigma_2 = \left(
\begin{array}{c|c}
0 & 0 \\
\hline
0 & \mathbb{I}
\end{array}
\right),
\end{equation}
so $\Sigma_1,\Sigma_2$ are orthogonal complement projectors.
\end{proof}

The expressions for $\alpha$ and $\beta$ give some initial relations between the scale functions appearing in the axioms:
\begin{equation}
h(1/2)g(Q_1)f(cQ_1) = h(1/2)g(cQ_2)f(cQ_2) = \frac{f(c\mathbb{I}/2)}{2G(c\mathbb{I}/2)}. \label{eqn proj fg}
\end{equation}
Since for any projector $P_i$ there exists its complement $P_i^\perp$, it follows that the above applies generally for any projector: $h(1/2)g(cP_i)f(cP_i) =\frac{f(c\mathbb{I}/2)}{2G(c\mathbb{I}/2)}$.

Now a statement concerning how a measurement duality map acts on two orthogonal projectors that only span a subspace of the initial Hilbert space can be made.

\begin{restatable}[Mapping orthogonal projectors]{lemma}{lemmaMappingorthogonalprojectors}
\label{lm Mapping orthogonal projectors}
Let $P_1$ and $P_2$ be orthogonal projectors such that $P_1P_2 = P_2P_1 = 0$.
Under a measurement duality map, $\Phi_s$, these projectors are mapped to:
\[\Phi_s(cP_1)\propto \Pi_1 \qquad \Phi_s(cP_2)\propto \Pi_2,\]
where $c\in \mathbb{R}$ and $\Pi_1, \Pi_2$ are themselves orthogonal projectors.
\end{restatable}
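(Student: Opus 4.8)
The plan is to mirror the structure of the proof of \cref{lm Mapping orthogonal complement projectors}, but since the two projectors no longer sum to the identity, the convexity axiom must be supplemented with a spectral argument to pin down the relative orientation of the two images. As a first step I would record what is immediate: since $\textup{spec}[P_i]\subseteq\{0,1\}$, Axiom~(ii) together with Hermiticity gives $\Phi_s(cP_i)=cf(cP_i)\Pi_i$ for genuine projectors $\Pi_1,\Pi_2$, exactly as at the start of the previous lemma. The content to be proved is therefore only that $\Pi_1\Pi_2=\Pi_2\Pi_1=0$. The degenerate cases $c=0$ or $P_i=0$ reduce to $\Phi_s(0)=0$, and the case $P_1+P_2=\mathbb{I}$ is precisely \cref{lm Mapping orthogonal complement projectors}, so I would assume $c\neq0$, $P_1,P_2\neq0$ and $P_1+P_2\neq\mathbb{I}$.

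Second, since $P_1,P_2$ are orthogonal, $P_1+P_2$ is itself a projector, and I would apply Axiom~(i) to the convex combination $\tfrac{c}{2}(P_1+P_2)=\tfrac{1}{2}(cP_1)+\tfrac{1}{2}(cP_2)$. Using the scale-function relation \cref{eqn proj fg}, which holds for every projector and makes the two coefficients $h(1/2)g(cP_i)f(cP_i)$ equal, the right-hand side collapses to an expression proportional to $\Pi_1+\Pi_2$, while the left-hand side equals $\tfrac{c}{2}f(\cdots)\Pi_{12}$ for some projector $\Pi_{12}$. Equating yields $\Pi_1+\Pi_2=\gamma\Pi_{12}$ for a constant $\gamma>0$; in particular the sum of the two images is proportional to a projector, so it satisfies $(\Pi_1+\Pi_2)^2=\gamma(\Pi_1+\Pi_2)$, equivalently $\Pi_1\Pi_2+\Pi_2\Pi_1=(\gamma-1)(\Pi_1+\Pi_2)$, and its spectrum is contained in $\{0,\gamma\}$.

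Third, I would analyse this spectral constraint. The nonzero spectrum of a sum of two orthogonal projections decomposes (Halmos' two-subspaces picture) into the value $1$ where a single projector acts, the value $2$ on any common range, and pairs $1\pm\cos\theta_k$ on generic two-dimensional blocks with principal angle $\theta_k$. Admitting only the single nonzero value $\gamma$ forbids any block with $\theta_k\in(0,\pi/2)$ and forbids the coexistence of $1$ and $2$, leaving exactly two options: $\gamma=1$, which forces $\Pi_1\Pi_2+\Pi_2\Pi_1=0$ and hence, multiplying on the left and right by $\Pi_1$, $\Pi_1\Pi_2=\Pi_2\Pi_1=0$; or $\gamma=2$, which forces $\Pi_1=\Pi_2$. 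It then remains only to exclude the degenerate option $\Pi_1=\Pi_2$.

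The main obstacle, and the point at which orthogonal projectors genuinely differ from complement projectors, is ruling out $\Pi_1=\Pi_2$; note that the convexity argument applied to $P_1,P_2$ directly cannot achieve this, since $\tfrac{c}{2}(P_1+P_2)$ carries the single nonzero eigenvalue $c/2$ and so is consistent with either value of $\gamma$. I would instead pass to complements: by \cref{lm Mapping orthogonal complement projectors} applied to each pair $(P_i,P_i^\perp)$ one has $\Phi_s(cP_i^\perp)\propto\mathbb{I}-\Pi_i$, and applying Axiom~(i) to $\tfrac{c}{2}(P_1^\perp+P_2^\perp)=\tfrac{1}{2}(cP_1^\perp)+\tfrac{1}{2}(cP_2^\perp)$ produces an image proportional to $(\mathbb{I}-\Pi_1)+(\mathbb{I}-\Pi_2)$. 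Because $P_1+P_2\neq\mathbb{I}$, the operator $\tfrac{c}{2}(P_1^\perp+P_2^\perp)$ has the two distinct nonzero eigenvalues $c/2$ and $c$, so by Axiom~(ii) its image must carry two distinct nonzero eigenvalues; but if $\Pi_1=\Pi_2$ that image is proportional to $2(\mathbb{I}-\Pi_1)$, which has only one. This contradiction forces $\gamma=1$, completing the proof that $\Pi_1$ and $\Pi_2$ are orthogonal projectors.
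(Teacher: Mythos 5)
Your proof is correct, and although it opens exactly as the paper does --- images of projectors are proportional to projectors, and the equal-weight convex combination together with \cref{eqn proj fg} yields $\Pi_1+\Pi_2=\gamma\Pi_{12}$ (the paper writes this as $\alpha(\Pi_1+\Pi_2)=\Pi_{12}$ with $\gamma=1/\alpha$) --- the second half takes a genuinely different route. The paper works in the $\{\Pi_{12},\Pi_{12}^\perp\}$ block basis: positive semidefiniteness of $\Pi_1,\Pi_2$ forces the lower-right blocks to vanish, \cref{Vanishing off-diagonal matrix elements} kills the off-diagonal blocks, and the top-left blocks then satisfy $\alpha(A_1+A_2)=\mathbb{I}$, from which the paper concludes $\alpha=1$ and then orthogonality. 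You instead read the same constraint spectrally: proportionality to a projector forces $\mathrm{spec}[\Pi_1+\Pi_2]\subseteq\{0,\gamma\}$, and the two-subspaces (Halmos) decomposition leaves only $\gamma=1$ (orthogonality) or $\gamma=2$ ($\Pi_1=\Pi_2$). Your additional step ruling out $\Pi_1=\Pi_2$ via the complements is not cosmetic, and this is where your proof is actually tighter than the paper's: the matrix statement ``$\alpha(A_1+A_2)=\mathbb{I}$ with $A_1,A_2$ projectors implies $\alpha=1$'' fails for the solution $\alpha=1/2$, $A_1=A_2=\mathbb{I}$, which is precisely the configuration $\Pi_1=\Pi_2=\Pi_{12}$; as you correctly observe, the single combination $\tfrac{c}{2}(P_1+P_2)$ has only one nonzero eigenvalue and therefore cannot distinguish the two cases, so some further input is needed. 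Your complement argument supplies it cleanly: by \cref{lm Mapping orthogonal complement projectors}, $\Phi_s(cP_i^\perp)\propto\mathbb{I}-\Pi_i$, and since $P_1+P_2\neq\mathbb{I}$ the operator $\tfrac{c}{2}(P_1^\perp+P_2^\perp)$ has two distinct nonzero eigenvalues $c/2$ and $c$, which Axiom~(ii) forces on its image, whereas $\Pi_1=\Pi_2$ would leave that image proportional to $2(\mathbb{I}-\Pi_1)$ with at most one. (An equally quick alternative is to apply Axiom~(i) to $tP_1+(1-t)P_2$ with $t\neq 1/2$, whose image would again be proportional to a single projector while Axiom~(ii) demands two distinct nonzero eigenvalues.) In summary: the paper's block-positivity route is more elementary and yields the explicit block form of $\Pi_1,\Pi_2$ without invoking the two-subspaces theorem, but as written it leaves the coincident-image case unaddressed; your version closes it.
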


\begin{proof}
Again spectrum preservation stipulates that projectors are mapped to objects proportional to projectors:
\begin{align}
&\Phi_s(cP_1) = cf(cP_1)\Pi_1\\
&\Phi_s(cP_2) = cf(cP_2) \Pi_2 \\
&\Phi_s\left(\frac{c}{2}(P_1+P_2)\right) = \frac{c}{2}f\left(\frac{c}{2}(P_1 + P_2)\right) \Pi_{12}, \label{eqn pi12 def}
\end{align}
where the final equation holds since the sum of two orthogonal projectors is another projector.
Applying axiom (i) to the sum and substituting the above:
\begin{align}
\Phi_s\left(\frac{c}{2}(P_1 + P_2)\right) & \begin{multlined}
= G\left(\frac{c}{2}(P_1 + P_2)\right) h(1/2)\left[g(cP_1)\Phi_s(cP_1) \right.\\
\left.+ g(cP_2)\Phi_s(cP_2) \right]
\end{multlined}\\
&\begin{multlined}
 = G\left(\frac{c}{2}(P_1 + P_2)\right) h(1/2)\left[g(cP_1)cf(cP_1) \Pi_1 \right.\\
\left.+ g(cP_2)cf(cP_2)\Pi_2 \right]. \label{eqn axiom 2 orthog proj}
\end{multlined}
\end{align}
Equating \cref{eqn pi12 def} and \cref{eqn axiom 2 orthog proj} in the same way as in \cref{lm Mapping orthogonal complement projectors} gives:
\begin{equation}\label{alpha_equ}
\alpha (\Pi_1 + \Pi_2 )= \Pi_{12}
\end{equation}
where
\begin{align}
\alpha &= \frac{2G(c/2(P_1 +P_2))h(1/2)g(cP_1)f(cP_1)}{f(c/2(P_1 + P_2))} \\
&= \frac{2G(1/2(P_1 + P_2)) h(1/2)g(cP_2)f(cP_2)}{f(c/2(P_1 + P_2))}\\
& = \frac{G(c/2(P_1+P_2))}{f(c/2(P_1+P_2))}\frac{f(c\mathbb{I}/2)}{G(c\mathbb{I}/2)}.
\end{align}
In the above, the scale factor relation for projectors from \cref{eqn proj fg} is used to equate $g(cP_1)f(cP_1)=g(cP_2)f(cP_2)$.

Writing the matrices in \cref{alpha_equ} in the $\{\Pi_{12}, \Pi_{12}^\perp \}$ basis:
\begin{equation}
\alpha \left[ \left(
\begin{array}{c|c}
A_1 & B \\
\hline
C & D
\end{array}
\right) + \left(
\begin{array}{c|c}
A_2 & -B \\
\hline
-C & -D
\end{array}
\right)\right] = \left(
\begin{array}{c|c}
\mathbb{I} & 0 \\
\hline
0 & 0
\end{array}
\right).
\end{equation}
Since $\Pi_1,\Pi_2$ are projectors, they must be positive semi-definite matrices.
Let $\ket{x}$ be a vector only with support on the $\Pi_{12}^\perp$ subspace.
The positive semi-definite property requires that
\begin{align}
&\bra{x}\Pi_1 \ket{x}  = \left(
\begin{array}{cc}
0 & x
\end{array}
\right) \left(
\begin{array}{c|c}
A_1 & B \\
\hline
C & D
\end{array}
\right) \left(
\begin{array}{c}
0  \\
x
\end{array}
\right) = Dx^2 \geq 0 \\
&\bra{x}\Pi_2 \ket{x}  = \left(
\begin{array}{cc}
0 & x
\end{array}
\right) \left(
\begin{array}{c|c}
A_2 & -B \\
\hline
-C & -D
\end{array}
\right) \left(
\begin{array}{c}
0  \\
x
\end{array}
\right) = -Dx^2 \geq 0.
\end{align}
Only $D=0$ can satisfy the above simultaneously.
Once the lower right block is set to 0, the off-diagonal blocks must also vanish for $\Pi_i$ to be valid projectors (see \cref{Vanishing off-diagonal matrix elements}),
\begin{equation}
\Pi_1 = \left(
\begin{array}{c|c}
A_1 & 0 \\
\hline
0 & 0
\end{array}
\right), \qquad
\Pi_2 =\left(
\begin{array}{c|c}
A_2 & 0 \\
\hline
0 & 0
\end{array}
\right) .
\end{equation}
Therefore \cref{alpha_equ} reduces to the same form as \cref{eqn orthog comp key} when examining the top left quadrant only,
\begin{equation}
\alpha (A_1 + A_2) = \mathbb{I},
\end{equation}
identifying that $\alpha=1$ since $A_1,A_2$ are projectors.
Applying \cref{lm Mapping orthogonal complement projectors} gives $A_1A_2 = A_2A_1 = 0$.
The result is that $\Pi_1\Pi_2 = \Pi_2\Pi_1 =0$.
\end{proof}

A consequence of $\alpha=1$ is that,
\begin{equation}
\frac{G(c/2(P_1+P_2))}{f(c/2(P_1+P_2))} = \frac{G(c\mathbb{I}/2)}{f(c\mathbb{I}/2)},
\end{equation}
for all orthogonal projectors $P_1,P_2$.
The above relation can be shown to hold in a more general case which leads to a restatement of the axiom describing the behaviour of the map acting on convex combinations.

\begin{restatable}[Constrained scale functions]{lemma}{lemmaconstrainedscalefunctions}
\label{lm Constrained scale functions}
A duality map, $\Phi_s$, satisfies
\begin{enumerate}[(i')]
\setcounter{enumi}{0}
\item $\Phi_s(\sum_i p_i a_i) = f(\sum_i p_i a_i)\sum_i \frac{p_i}{f(a_i)}\Phi_s(a_i)$
\end{enumerate}
for all $a_i\in\textup{Herm}_n$ and $p_i\in[0,1]$ with $\sum_i p_i = 1$.
\end{restatable}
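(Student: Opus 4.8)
The plan is to show that the three scaling functions $G,g$ and the probability-reshaping function $h$ appearing in Axiom~(i) of \cref{Measurement duality axioms} are all slaved to $f$, so that substituting the resulting relations into Axiom~(i) collapses it to the single-function form~(i'). Concretely, I would establish three facts: (a) $h(p)=p$; (b) $G(A)g(A)=1$ for every nonzero $A$; and (c) the ratio $G(A)/f(A)$ equals one fixed constant for all $A$. Granting these, writing $S=\sum_i p_i a_i$ and inserting $h(p_i)=p_i$ and $g(a_i)=1/G(a_i)$ into Axiom~(i) gives
\[
\Phi_s\Big(\sum_i p_i a_i\Big) = G(S)\sum_i \frac{p_i}{G(a_i)}\,\Phi_s(a_i) = f(S)\sum_i \frac{p_i}{f(a_i)}\,\Phi_s(a_i),
\]
where the second equality uses $G(S)/G(a_i)=f(S)/f(a_i)$, which is exactly~(c). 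This is precisely statement~(i').

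For (a) I would exploit the normalisation $\sum_i h(p_i)=1$, which holds for \emph{every} probability vector. Comparing this constraint for the distribution $(p,q,1-p-q)$ with the one for the coarse-grained distribution $(p+q,1-p-q)$ yields $h(p+q)=h(p)+h(q)$, i.e.\ Cauchy's functional equation on $[0,1]$; together with the boundedness $h\colon[0,1]\to[0,1]$ this forces $h(p)=h(1)p$, and the single-outcome distribution gives $h(1)=1$, so $h(p)=p$. For (b) I would feed the degenerate ensemble $\{(p,a),(1-p,a)\}$ into Axiom~(i): the left-hand side is $\Phi_s(a)$, while the right-hand side collapses to $G(a)g(a)\,[h(p)+h(1-p)]\,\Phi_s(a)=G(a)g(a)\,\Phi_s(a)$ using the two-outcome normalisation. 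Since $\Phi_s(a)\neq 0$ for $a\neq 0$, this gives $G(a)g(a)=1$.

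The substantive step is (c). The starting points are the two relations already extracted: the projector identity $h(1/2)\,g(cP)f(cP)=f(c\mathbb{I}/2)/\big(2G(c\mathbb{I}/2)\big)$ coming from \cref{lm Mapping orthogonal complement projectors}, and the consequence of $\alpha=1$ in \cref{lm Mapping orthogonal projectors}, namely $G\big(\tfrac{c}{2}(P_1+P_2)\big)/f\big(\tfrac{c}{2}(P_1+P_2)\big)=G(c\mathbb{I}/2)/f(c\mathbb{I}/2)$ for all orthogonal projectors. Feeding $h(1/2)=1/2$ and $g=1/G$ into the first shows $G/f$ agrees on a scaled projector and the scaled identity of the same size. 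I would then promote this from two projectors to an arbitrary Hermitian $A$ by writing $A$ in its spectral decomposition $A=\sum_k\lambda_k P_k$, realising $A$ as an iterated convex combination of scaled orthogonal projectors, and applying the orthogonal-projector lemma to each successive combination so that the value of $G/f$ is transported unchanged. Combined with (b), this pins $G/f$ to a single constant across all operators.

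The main obstacle I anticipate is exactly this extension in~(c). The projector lemmas are proved only for two orthogonal projectors at a common scale, whereas a general operator carries many distinct eigenvalues of differing magnitude, so I must both handle operators with more than two eigenprojectors and establish cross-scale consistency of $G/f$ (the identity-comparison relations from the two lemmas are what do this bookkeeping). Crucially, I must rule out the eigenprojectors—and hence the transported value of $G/f$—jumping discontinuously, in particular reshuffling within a degenerate eigenspace as $A$ is deformed. I would control this using the analyticity of the resolvent at points of non-degenerate spectrum (\cref{appen matching}), proving the relation first on the dense set of simple-spectrum operators and then extending to all of $\textup{Herm}_n$ by continuity and the Lipschitz hypothesis on $f,g,G$.
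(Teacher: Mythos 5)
Your high-level architecture --- show that $h$, $g$, $G$ are all slaved to $f$, then substitute back into Axiom~(i) --- is the same as the paper's, and your steps~(a) and~(b) are correct. Indeed, (a) is a genuinely more elementary route for that piece: the paper never isolates $h(p)=p$, but only ever derives the product $h(t)g(A_1)=tx/f(A_1)$ via a spectral argument with orthogonal-support operators, whereas your Cauchy-equation argument extracts $h(p)=p$ purely from the normalisation $\sum_i h(p_i)=1$ and the boundedness of $h$. Your~(b) is a cosmetic variant of the paper's ``trivial sum'' step, which gives $g(A)=1/G(A)$.

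The gap is in~(c), and it is fatal as stated. Both relations you start from live at a \emph{single} scale $c$: feeding $h(1/2)=1/2$ and $g=1/G$ into them yields $f(cP)/G(cP)=f(c\mathbb{I}/2)/G(c\mathbb{I}/2)$ for every projector $P$, and nothing that mixes two different scales. Your promotion to general Hermitians then fails twice over. First, \cref{lm Mapping orthogonal projectors} is proved only for \emph{two orthogonal projectors with equal weights at a common scale}; in an iterated decomposition of $A=\sum_i\mu_i(c_iP_i)$ the successive pairings put a scaled projector against a general remainder operator (not a projector) with unequal weights, which the lemma does not cover. Second, and more fundamentally, same-scale identities can never produce the cross-scale consistency you need: taking $P=\mathbb{I}$ gives only $r(c)=r(c/2)$ for $r(c):=f(c\mathbb{I})/G(c\mathbb{I})$, and this together with the Lipschitz hypotheses does not force $r$ to be constant --- take $f(c\mathbb{I})=c\,\bigl(2+\sin(2\pi\log_2 c)\bigr)$ and $G(c\mathbb{I})=c$, both Lipschitz on compact sets and vanishing only at $0$, whose ratio satisfies $r(c)=r(c/2)$ yet oscillates. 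So the ``identity-comparison bookkeeping'' cannot close the argument, and neither resolvent analyticity nor density of simple-spectrum operators repairs it, because the obstruction is not continuity but the absence of any relation linking two distinct scales.

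The paper supplies exactly this missing ingredient by a different mechanism: it applies Axiom~(i) to the \emph{full multi-term} decomposition $A=\sum_i\mu_i(c_iP_i)$ of an arbitrary Hermitian, matches the result term-by-term against the spectral decomposition of $\Phi_s(A)$ furnished by Axiom~(ii) (trivial permutation, \cref{appen matching}), and obtains $h(\mu_i)\,g(c_iP_i)=\frac{f(A)}{G(A)}\frac{\mu_i}{f(c_iP_i)}$ for each $i$. Since the left-hand side depends only on the $i$-th term while $f(A)/G(A)$ is common to all $i$, the ratio is simultaneously a function of each term's data alone, hence one universal constant; and because a single $A$ contains projectors at genuinely different scales $c_i$, this one equation ties all scales together --- precisely what your construction lacks. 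If you replace your transport argument by this comparison, your (a) and (b) can stand in for the paper's remaining steps and the proof goes through.
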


\begin{proof}
This proof follows by demonstrating various relationships between the scaling functions $f,g,G$ that must hold as a consequence of \cref{Measurement duality axioms}.

First, for all Hermitian operators $A$, the ratio of $f(A)$ to $G(A)$ is proven to be a constant independent of $A$.
The spectral decomposition of a general Hermitian operator $A$ is given by
\begin{equation}
A = \sum_i \lambda_i P_i,
\end{equation}
where $\lambda_i\in \mathbb{R}$ and in the case of degenerate eigenvalues we are free to chose $\{ P_i\}$ to form a set of orthogonal projectors.
In order to apply axiom~(i) of \cref{Measurement duality axioms} the summation is rearranged to read,
\begin{equation}
A = \sum_i \mu_i \left(c_i P_i \right),
\end{equation}
where now $\mu_i \in [0,1]$ and $\sum_i \mu_i = 1$, whereas $c_i \in \mathbb{R}$ with $\mu_i c_i = \lambda_i$.
Note that while clearly this choice of $\mu_i c_i $ is not unique, this does not affect the following argument.

By axiom~(i) of \cref{Measurement duality axioms},
\begin{equation}\label{eqn ax 2 general herm}
\Phi_s(A) = G(A) \sum_i h(\mu_i) g(c_iP_i)\Phi_s(c_i P_i).
\end{equation}
Using \cref{lm Mapping orthogonal projectors} this can be written as a spectral decomposition over orthogonal projectors,
\begin{equation}
\Phi_s(A) = G(A) \sum_i h(\mu_i) g(c_iP_i) f(c_iP_i)c_i \Pi_i.
\end{equation}
However since the spectral decomposition is unique (up to degenerate eigenvalues where we continue to chose an orthogonal basis) it can also be expressed using the spectrum preserving axiom as
\begin{equation}\label{eqn ax 3 general herm}
\Phi_s(A) = f(A) \sum_i \mu_ic_i \Pi_{\sigma(i)},
\end{equation}
where $\sigma(i)$ denotes some permutation of indices.

Equating \cref{eqn ax 2 general herm} and \cref{eqn ax 3 general herm} gives,
\begin{equation}
\frac{G(A)}{f(A)} \sum_i h(\mu_i) g(c_iP_i) f(c_iP_i)c_i \Pi_i= \sum_i \mu_ic_i \Pi_{\sigma(i)}.
\end{equation}
Multiplying by $\Pi_k$ selects for a given projector,
\begin{equation}
\frac{G(A)}{f(A)}  h(\mu_{\sigma(j)}) g(c_{\sigma(j)}P_{\sigma(j)}) f(c_{\sigma(j)}P_{\sigma(j)})c_{\sigma(j)} = \mu_jc_j ,
\end{equation}
where $\sigma(j)=k$.
\cref{appen matching} demonstrates that in fact $\sigma(k)=k$ $\forall k$ is the only allowed permutation for any map $\Phi_s$ and operator $A$.
Therefore we can equate
\begin{equation}
h(\mu_i) g( c_iP_i) = \frac{f(A)}{G(A)} \frac{\mu_i}{f(c_iP_i)}.
\end{equation}

Since $h(\mu_i) g( c_iP_i)$ cannot depend on the other eigenvalues and vectors of $A$ the ratio of $f(A)$ to $G(A)$ must be constant for any given Hermitian, i.e.
\begin{equation}\label{eqn f and G prop}
\frac{f(A)}{G(A)} = x, \qquad \forall A \in \text{Herm},
\end{equation}
for some $x\in \mathbb{R}$.

Applying (i) of \cref{Measurement duality axioms} to the trivial sum ${\Phi_s(A) = G(A)h(t)g(A)\Phi_s(A)}$ gives another useful relation,
\begin{equation}\label{eqn g relationship}
g(A) = \frac{1}{G(A)} = \frac{x}{f(A)}, \qquad \forall A \in \text{Herm},
\end{equation}
since $h(1)=1$ by definition.

The next step is to investigate the function $h$ by relating $h(t)g(A)$ and $g(A)$.
Let $A_1$, $A_2$ be any two Hermitian operators with spectral decompositions,
\begin{align}
A_1 & = \sum_i \lambda_i P_i\\
A_2 & = \sum_i \mu_i Q_i,
\end{align}
where $\lambda_i,\mu_i \in \mathbb{R}$ such that $\{P_i,Q_i\}$ form an orthogonal set of projectors, i.e. $A_1$ and $A_2$ must have orthogonal support.
Consider a convex combination,
\begin{equation}
A = t A_1 + (1-t)A_2,
\end{equation}
with $t \in [0,1]$.
Since $A_1$ and $A_2$ have orthogonal support and the map obeys axiom~(ii) of \cref{Measurement duality axioms}, the spectrum of the mapped convex combination is:
\begin{equation}
  \text{spec}\left[\Phi_s(A) \right]=f(A)\{t\lambda_i, (1-t)\mu_i \}.
\end{equation}

On the other hand, applying axiom~(i) of \cref{Measurement duality axioms} to $A$ gives,
\begin{equation}
\Phi_s(A) = G(A)\left[h(t) g(A_1)\Phi_s(A_1) + h(1-t) g(A_2)\Phi_s(A_2) \right].
\end{equation}
By \cref{lm Mapping orthogonal projectors}, $\Phi_s(A_1)$ and $\Phi_s(A_2)$ have orthogonal support, and $\{\Phi_s(\lambda_i P_i),\Phi_s(\mu_iQ_i) \}$ is an orthogonal set.
Together with axiom~(ii) of \cref{Measurement duality axioms}, this implies that
\begin{align}
\text{spec}[\Phi_s(A)]
  &= \{G(A)h(t)g(A_1)\text{spec}[\Phi_s(A_1)],\; G(A)h(1-t)g(A_2)\text{spec}[\Phi_s(A_1)]\}\\
  &= \{G(A)h(t) g(A_1)f(A_1)\lambda_i,\; G(A)h(1-t)g(A_2)f(A_2)\mu_i\}.
\end{align}

Again using the result from \cref{appen matching} that the permutation is trivial, we can equate the elements of $\text{spec}\left[\Phi_s(A) \right]$ that correspond to $A_1$:
\begin{equation}
f(A)t \lambda_i = G(A)h(t)g(A_1)f(A_1)\lambda_i.
\end{equation}

Using \cref{eqn f and G prop} and \cref{eqn g relationship},
\begin{equation}\label{eqn g relationship 2}
h(t)g(A_1) = \frac{t x}{f(A_1)},
\end{equation}
for all $ A_1\in \text{Herm}$ and $ t\in[0,1]$.

Finally, substituting for $g,G$ using \cref{eqn f and G prop} and \cref{eqn g relationship 2}, (ii) of \cref{Measurement duality axioms} becomes,
\begin{align}
\Phi_s\left(\sum_i p_i A_i \right) & = G\left(\sum_i p_i A_i \right) \sum_i h(p_i)g(A_i)\Phi_s(A_i)\\
& = \frac{f\left(\sum_i p_i A_i \right)}{x}\sum_i \frac{p_i x}{f(A_i)}\Phi_s(A_i)\\
& = f\left(\sum_i p_i A_i \right) \sum_i \frac{p_i}{f(A_i)}\Phi_s(A_i)
\end{align}
for all $A_i\in\text{Herm}$ and $p_i \in [0,1]$ where $\sum_i p_i = 1$.
\end{proof}

This constraint on how the map acts on convex combinations of operators enables the link between duality maps and the encodings in \cref{thm Encodings} to be made.

\begin{proof} (of \cref{thm Characterisation})
To characterise $\Phi_s$ we define the related map $\mathcal{E}(A): = \frac{\Phi_s(A)}{f(A)}$ and show that $\mathcal{E}$ is an encoding in the sense of \cref{thm Encodings}.
For $\mathcal{E}$ to be an encoding it is sufficient to show that is satisfies the 3 conditions given in \cref{thm Encodings}.

\cref{Measurement duality axioms} states $\Phi_s(A)^\dagger=\Phi_s(A)$, therefore
\begin{equation}
\mathcal{E}(A)^\dagger = \frac{\Phi_s(A)^\dagger}{\overline{f(A)}} = \frac{\Phi_s(A)}{\overline{f(A)}}.
\end{equation}
However, $\overline{f(A)} = f(A)$ since it is defined be a real function.
Therefore $\mathcal{E}(A)^\dagger = \mathcal{E}(A)$ and the first encoding axiom is satisfied.

Using (ii) of \cref{Measurement duality axioms}, it quickly follows that $\mathcal{E}$ is spectrum preserving:
\begin{align}
\text{spec}\left[ \mathcal{E}(A)\right] & = \text{spec} \left[ \frac{\Phi_s(A)}{f(A)} \right]\\
& = \frac{1}{f(A)}\text{spec}\left[ \Phi_s(A)\right]\\
& = \frac{1}{f(A)} f(A)\text{spec}[A]\\
& = \text{spec}[A].
\end{align}

The final encoding axiom is shown using (i) of \cref{Measurement duality axioms} and \cref{lm Constrained scale functions} to demonstrate that $\mathcal{E}$ is convex,
\begin{align}
\mathcal{E}(\sum_i p_i a_i) & = \frac{\Phi_s(\sum_i p_i a_i)}{f(\sum_i p_i a_i)}\\
&= \frac{1}{f(\sum_i p_i a_i)}f(\sum_i p_i a_i) \sum_i \frac{p_i}{f(a_i)}\Phi_s(a_i)\\
& = \sum_i \frac{p_i}{f(a_i)} f(a_i)\mathcal{E}(a_i)\\
&= \sum_i p_i \mathcal{E}(a_i).
\end{align}

The mathematical form follows directly from $\Phi_s(A)=f(A)\mathcal{E}(A)$ and \cref{thm Encodings}.
\end{proof}

\subsection{Map on states}\label{Map on states}

A map on Hamiltonians and observables is not enough to fully characterise the duality, since a state in one theory should also have a corresponding state in the other.
The set of states is just a subset of Hermitian operators, however the physical requirements on the state map differ to those given in \cref{Measurement duality axioms}.
Instead, when we consider maps on states, we need them to be compatible with the map on operators such that measurement outcomes and time dynamics behave as expected.
In the following definition we use $\mathcal{H}_n$ to denote a Hilbert space of dimension $(n\times n)$ and $\mathcal{S}(\mathcal{H})$ to denote the set of states in Hilbert space $\mathcal{H}$.

\begin{restatable}[Compatible duality state map]{defn}{defstatemap}\label{defn state map}
Given a duality map, $\Phi$, on operators (\cref{Measurement duality axioms}), we say that a map on states, $\Phi_\textup{state}: \mathcal{S}(\mathcal{H}_n)\mapsto \mathcal{S}(\mathcal{H}_m)$, is compatible with $\Phi$ if is satisfies the following properties:
  \begin{enumerate}
    \item convexity: for all ${p_i\in [0,1]}$ and ${\sum_i p_i = 1}$,
\[    {\Phi_\textup{state}(\sum_i p_i\rho_i) = \sum_i p_i\Phi_\textup{state}(\rho_i)};\]
  \item measurement outcomes are preserved up to the scaling function,
\[      \trace \left[\Phi(A)\Phi_\textup{state}(\rho) \right]   = f(A) \trace \left[A \rho \right]\]
    for all $A\in\textup{Herm}_n$, $\rho\in\mathcal{S}(\mathcal{H}_n)$;
  \item time dynamics is consistent at rescaled times,
\[      \Phi_\textup{state}\left(e^{-iHt}\rho e^{iHt} \right) = e^{-i\Phi(H)t/f(H)}  \Phi_\textup{state}(\rho) e^{i\Phi(H)t/f(H)}.\]
  \end{enumerate}
\end{restatable}

While examples of compatible maps on states were given for the simulations in \cite{Cubitt2019} this section proves that the form of the map on states is implied by the definitions of duality maps and the corresponding map on states.

\begin{restatable}[Form of state map]{prop}{propformofstatemap}
\label{prop form of state map}
Given a duality map, $\Phi(A) = f(A) U \left(\bigoplus_{i=1}^p A \oplus \bigoplus_{i=p+1}^{p+q} \bar{A}\right)U^\dagger$, on operators, the compatible duality map on states, $\Phi_\textup{state}: \mathcal{S}(\mathcal{H}_n)\mapsto \mathcal{S}(\mathcal{H}_m)$, as in~\cref{defn state map}, is necessarily of the form:
\[\Phi_\textup{state}(\rho) = U\left(\bigoplus_{i=1}^{p}\alpha_i \rho \right)U^\dagger,\]
where $\alpha_i \in [0,1]$ and $\sum_{i=1}^p \alpha_i= 1$.
\end{restatable}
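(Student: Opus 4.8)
\noindent
The plan is to transport everything through the fixed unitary $U$ of the operator map and study $\sigma(\rho) := U^\dagger\Phi_\textup{state}(\rho)U$, written in $(p+q)\times(p+q)$ block form $\sigma(\rho)=[\sigma_{ij}(\rho)]$ with $n\times n$ blocks; convexity lets me treat $\rho\mapsto\sigma(\rho)$ as the restriction of a real-linear map, and each $\sigma(\rho)$ is a genuine state, hence positive semidefinite of unit trace. First I would extract the content of the measurement condition: substituting $\Phi(A)=f(A)U(A^{\oplus p}\oplus\overline A^{\oplus q})U^\dagger$ and cancelling the nonzero factor $f(A)$ gives $\tr[(A^{\oplus p}\oplus\overline A^{\oplus q})\sigma(\rho)]=\tr[A\rho]$. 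Because the inserted operator is block diagonal only the diagonal blocks of $\sigma$ survive, and using $\tr[\overline A\,X]=\tr[A\,X^{T}]$ for Hermitian $A$ this reads $\tr[A(\sum_{i\le p}\sigma_{ii}+\sum_{i>p}\sigma_{ii}^{T})]=\tr[A\rho]$ for all Hermitian $A$. As the bracketed operator is itself Hermitian I obtain the folding identity $\sum_{i\le p}\sigma_{ii}(\rho)+\sum_{i>p}\sigma_{ii}(\rho)^{T}=\rho$. This pins down only a folded diagonal and is silent about off-diagonal blocks and about how weight splits between the $P$- and $Q$-sectors, so a second input is essential.

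That input is the time-dynamics condition. Since $\Phi(H)/f(H)=U(H^{\oplus p}\oplus\overline H^{\oplus q})U^\dagger$, conjugating by $U^\dagger$ turns it into the exact block covariance $\sigma(V\rho V^\dagger)=W\,\sigma(\rho)\,W^\dagger$, where $V=e^{-iHt}$ and $W=V^{\oplus p}\oplus(V^{T})^{\oplus q}$ (using $e^{-i\overline H t}=(e^{-iHt})^{T}$); as $H,t$ vary $V$ exhausts the unitary group. The key manoeuvre is to fold this relation exactly as above and compare with the folding identity applied to $V\rho V^\dagger$: the $P$-blocks contribute $V(\cdot)V^\dagger$ on both sides and cancel, leaving $V^\dagger\Sigma V=V\Sigma V^\dagger$ with $\Sigma:=\sum_{i>p}\sigma_{ii}(\rho)^{T}$. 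Hence $\Sigma$ commutes with every $V^{2}$, so $\Sigma=\lambda(\rho)\mathbb{I}$ for some scalar $\lambda(\rho)\ge 0$ (positivity of the $Q$-blocks); the folding identity then gives $\sum_{i\le p}\sigma_{ii}=\rho-\lambda(\rho)\mathbb{I}\succeq 0$, and evaluating at a pure $\rho$ forces $\lambda(\rho)\le 0$, hence $\lambda\equiv 0$. Thus every $Q$-sector diagonal block vanishes, and positivity of $\sigma$ then kills every off-diagonal block meeting the $Q$-sector.

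It remains to fix the $P$-sector. Here I would invoke the covariance $\sigma_{ij}(V\rho V^\dagger)=V\sigma_{ij}(\rho)V^\dagger$ for $i,j\le p$ together with Schur's lemma: the only real-linear maps intertwining the adjoint action of the full unitary group are $X\mapsto a_{ij}X+c_{ij}\tr(X)\mathbb{I}$, whence $\sigma_{ij}(\rho)=a_{ij}\rho+c_{ij}\mathbb{I}$. Assembling the blocks gives $\sigma(\rho)=A\otimes\rho+C\otimes\mathbb{I}$ on the $P$-sector with Hermitian $p\times p$ matrices $A=(a_{ij})$, $C=(c_{ij})$; the folding identity forces $\tr A=1$ and $\tr C=0$, testing positivity on vectors $\ket{v}\otimes\ket{\phi}$ with $\ket{\phi}$ in the kernel of a pure $\rho$ forces $C\succeq 0$ hence $C=0$, and positivity on pure states forces $A\succeq 0$. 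Therefore $A=:M$ is a $p\times p$ density matrix and $\Phi_\textup{state}(\rho)=U(\rho\otimes M)U^\dagger$. Finally, diagonalising $M=\sum_i\alpha_i\ket{w_i}\bra{w_i}$ and absorbing the diagonalising unitary into $U$ — legitimate because that unitary commutes with $A^{\oplus p}\oplus\overline A^{\oplus q}$ for every $A$ and so leaves the operator map $\Phi$ unchanged — yields $\Phi_\textup{state}(\rho)=U(\bigoplus_{i=1}^p\alpha_i\rho)U^\dagger$ with $\alpha_i\in[0,1]$ and $\sum_i\alpha_i=1$.

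I expect the main obstacle to be the elimination of the $Q$-sector in the second paragraph: neither condition alone suffices, since the measurement condition is blind both to the $P/Q$ split and to off-diagonal blocks, while a naive covariance argument on the conjugate blocks does not reduce to standard Schur. The crux is therefore the precise way the folded measurement identity and the folded covariance relation interact to yield $[\Sigma,V^{2}]=0$. A secondary subtlety is the final reabsorption of the diagonalising unitary, which must be justified from the non-uniqueness of $U$ in representing the given operator map rather than silently assumed.
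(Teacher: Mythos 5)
Your proposal is correct, and it reaches the result by a genuinely different route from the paper's proof --- in fact a more careful one. The paper proceeds by (a) asserting that time-dynamics covariance forces $U^\dagger\Phi_\textup{state}(\rho)U$ into the fine block-diagonal form $\bigoplus_{i=1}^{p+q}X_i(\rho)$, (b) splitting the measurement condition into the two separate identities $\sum_{i\le p}X_i(\rho)=\rho$ and $\sum_{i>p}X_i(\rho)=0$ by differentiating with respect to $A$ and $\bar A$ ``independently'', and (c) using positivity at pure states plus convexity to conclude $X_i(\rho)=\alpha_i\rho$. You instead extract from the measurement condition only the folded identity $\sum_{i\le p}\sigma_{ii}(\rho)+\sum_{i>p}\sigma_{ii}(\rho)^{T}=\rho$, eliminate the $Q$-sector by playing this identity against the folded covariance relation (your $[\Sigma,V^{2}]=0$ step), classify the $P$-sector blocks by a commutant (Schur) argument as $a_{ij}\rho+c_{ij}\mathbb{I}$, and finish with positivity, the trace constraints, and reabsorption of the diagonalising unitary. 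The paper's route is shorter, but only because it assumes away exactly the cases your argument has to work to exclude.

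Both of the subtleties you flag at the end are real, and your route repairs two genuine gaps in the paper's argument. First, since $\bar A=A^{T}$ for Hermitian $A$, one has $\operatorname{tr}[\bar A Y]=\operatorname{tr}[A Y^{T}]$, so the ``$A$ part'' and ``$\bar A$ part'' of the measurement identity are not independent linear functionals of $A$; only your folded identity follows, and the paper's step (b) is not valid as written. That the $Q$-sector nevertheless vanishes genuinely requires time dynamics: the map $\sigma(\rho)=0\oplus\bar\rho$ (for $p=q=1$, $U=\mathbb{I}$) satisfies the measurement condition, positivity and convexity --- since $\operatorname{tr}[\bar A\bar\rho]=\operatorname{tr}[A\rho]$ --- but violates covariance; this is precisely the loophole your $[\Sigma,V^{2}]=0$ argument closes. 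Second, covariance does not force fine block-diagonality: for $U=\mathbb{I}$, $p=2$, $q=0$, the map $\Phi_\textup{state}(\rho)=M\otimes\rho$ with $M$ any fixed, possibly non-diagonal, $2\times2$ density matrix satisfies all three compatibility conditions, so off-diagonal $P$-sector blocks survive, and the proposition is true only in the reading you give it: the stated form holds for \emph{some} unitary $U'$ representing the same operator map, produced by your final reabsorption step. The paper, which keeps the original $U$ throughout and never confronts the $M\otimes\rho$ degeneracy, misses this point; your proof both closes the gaps and supplies the correct interpretation of the statement.
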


\begin{proof}
Setting $B = e^{iHt}$ and conjugating condition 3 of compatible duality state maps with $U^\dagger$
\begin{align}
U^\dagger \Phi_\text{state}\left(B \rho B^\dagger \right) U&= U^\dagger e^{i\Phi(H)t/f(H)}\Phi_\text{state}(\rho)e^{-\Phi(H)t/f(H)}U\\
& =  \left(B^{\oplus p}\oplus \bar{B}^{\oplus q} \right)U^\dagger \Phi_\text{state}(\rho) U \left((B^\dagger)^{\oplus p}\oplus (\bar{B}^\dagger)^{\oplus q} \right).
\end{align}
Since $B$ represents time evolution for general $t$ and $H$, the above shows that the conjugated state map must have the same block diagonal structure as $\Phi$, i.e.
\begin{equation}\label{eqn block diag structure}
U^\dagger\Phi_\text{state}(\rho)U = \bigoplus_{i=1}^{p+q}X_i (\rho).
\end{equation}

We now substitute this structure of the state map into condition 2 of the definition of compatible state maps:
\begin{align}
\trace(A\rho) &= \trace \left[U \left(\bigoplus_{i=1}^p A \oplus \bigoplus_{i=p+1}^{p+q} \bar{A} \right)U^\dagger U \bigoplus_{i=1}^{p+q} X_i(\rho)U^\dagger \right] \\
& = \trace\left[ \bigoplus_{i=1}^p A X_i(\rho) \oplus \bigoplus_{i=p+1}^{p+q} \bar{A}X_i(\rho) \right]\\
& = \sum_{i=1}^{p}\trace\left[AX_i(\rho) \right] + \sum_{i=p+1}^q \trace \left[\bar{A}X_i(\rho) \right].\label{eqn to differentiate}
\end{align}
Since~\cref{eqn to differentiate} is true for all $A$ we can differentiate with respect to $A$,
\begin{equation}\label{eqn diff A}
\rho = \sum_{i=1}^p X_i(\rho),
\end{equation}
and separately with respect to $\bar{A}$,
\begin{equation}\label{eqn diff A comp}
0 = \sum_{i=p+1}^{p+q} X_i(\rho).
\end{equation}
Note that $A$ and $\bar{A}$ are independent for the purpose of differentiation.

The fact that $\Phi_\text{state}$ maps states to states implies that $X_i(\rho)$ is a positive operator for all $i$ and $\rho \in \mathcal{S}(\mathcal{H}_n)$.
Apply $X_i$ to a pure state $\ket{\psi_0}$ and assume for contradiction that the image has some support on a distinct pure state which wlog we call $\ket{\psi_1}$,
\begin{equation}
X_i\left(\ket{\psi_0}\bra{\psi_0} \right) = \alpha_i \ket{\psi_0}\bra{\psi_0} + \beta_i \ket{\psi_1}\bra{\psi_1} + \text{else},
\end{equation}
where "else" has no overlap with $\ket{\psi_0}$ or $\ket{\psi_1}$.
$0 \leq \alpha_i,\beta_i \leq 1$ since $X_i(\rho)$ is a positive operator.
From~\cref{eqn diff A},
\begin{align}
\ket{\psi_0}\bra{\psi_0} & = \sum_{i=1}^p X_i (\ket{\psi_0}\bra{\psi_0})\\
& = \sum_{i=1}^p \alpha_i \ket{\psi_0}\bra{\psi_0} + \beta_i \ket{\psi_1}\bra{\psi_1} + \text{else}.
\end{align}
Therefore $\sum_{i=1}^p \alpha_i = 1$ and $\sum_{i=1}^{p}\beta_i = 0$ $\implies$ $\beta_i = 0$ for all $i$.
Hence when applied to any pure state each $X_i$ for $i\in[1,p]$ acts as,
\begin{equation}\label{eqn X action on pure}
X_i(\ket{\psi}\bra{\psi}) = \alpha_i \ket{\psi}\bra{\psi} \qquad \text{with} \qquad \sum_{i=1}^p \alpha_i = 1.
\end{equation}

It follows from condition 1 that each $X_i$ is individually convex.
Explicitly
\begin{equation}
U \bigoplus_{i=1}^{p+q}X_i\left(\sum_j t_j \rho_j\right) U^\dagger= \sum_j t_j U \oplus_{i=1}^{p+q}X_i(\rho_j)U^\dagger
\end{equation}
implies that for all $i$ the following is true
\begin{equation}
X_i \left( \sum_j t_j \rho_j\right) = \sum_j t_j X_i (\rho_j).
\end{equation}
This combined with~\cref{eqn X action on pure} gives for any state $\rho\in\mathcal{S}(\mathcal{H}_n)$,
\begin{equation}\label{eqn for X}
X_i(\rho) = \alpha_i \rho \qquad \text{with} \qquad \sum_{i=1}^p \alpha_i = 1.
\end{equation}
By normalisation, $X_i(\rho) = 0$ for $i\in[p+1,q]$ which can also be seen from~\cref{eqn diff A comp} by applying a similar argument as for $b_i=0$.

\cref{eqn for X} combined with~\cref{eqn block diag structure} gives the quoted form of the map.
\end{proof}

\section{Equivalent definitions of duality}\label{sect equivalent definitions of duality}

Similarly to \cref{prop tobyextra}, once we establish the characterisation of the measurement duality map, it becomes clear that other physical properties are necessarily related in the dual systems, in particular the partition functions and entropies.
Certain dualities, such as Bosonisation and Kramer-Wannier, are imposed on the level of partition functions.
Therefore, while the measurement duality maps are candidates to describe these types of dualities, this one-way implication does not preclude other mathematical mappings that preserve thermal properties and describe these dual phenomena.

This section establishes the reverse equivalence: duality definitions based on the preservation of partition functions or entropies are in fact essentially equivalent to the measurement duality maps defined in the previous section.
This connection is particularly interesting to unify different dualities on the level of partition functions, measurement outcomes and entropies.

The connection between partition functions and the spectra arises from a transformation of a partition function equality into an infinite sequence of polynomials in the charges (e.g. $\beta$ for the Hamiltonian).
This sequence is shown to converge in the limit to a relation between the $\ell_\infty$ norms of the spectra.
A recursive application of this argument then implies the preservation of the spectral sets themselves.
The connection between entropy preserving and spectrum preserving is perhaps more surprising, and leads to a novel result concerning the characterisation of entropy preserving maps up to an additive constant.

\subsection{Partition function duality}

Examples of physical dualities suggest that it is common for a duality to be defined in terms of partition functions (or equivalently free energy), rather than observables, particularly when considering classical thermodynamics.
This motivates considering a different definition of duality, formulated in terms of preserving partition functions rather than measurement outcomes:

\begin{restatable}[Thermal duality map]{defn}{axioms2}
\label{Thermal duality axioms}
A \emph{thermal duality map}, $\Phi_t: \textup{Herm}_n \mapsto\textup{Herm}_{m}$ satisfies
\begin{enumerate}[(i)]
\item $\forall$ $a_i\in \textup{Herm}_n$, $p_i\in [0,1]$ with $\sum_i p_i = 1:$\\
 \[\Phi_t\left(\sum_i p_i a_i \right) = G\left(\sum_i p_i a_i \right)\sum_i g( a_i)h(p_i) \Phi_t(a_i);\]
\item $\forall$ $A\in \textup{Herm}_n$ and all $J_A>0$, $J_A\in\mathbb{R}$:\\
  \[\alpha\trace\left[e^{-J_Af(A)A} \right] = \trace \left[ e^{-J_A\Phi_t(A)}\right]\]
   for some constant $\alpha>0$.
\end{enumerate}
The scaling functions $f$, $G$, $g$: $\textup{Herm}_n\mapsto \mathbb{R}$, are Lipschitz on any compact subset of $\textup{Herm}_n$ and map to zero iff the input is the zero operator.
Where as $h$: $[0,1] \mapsto [0,1]$ where $\sum_i h(p_i) = 1$ iff $\sum_i p_i = 1$.
\end{restatable}

The convexity condition is the same as in \cref{Measurement duality axioms}, as is the motivation.
The second axiom captures how the thermal physics of the two systems are related.
The simplest physical example of this is the Hamiltonian of the system, $H$, with inverse temperature, $\beta$, acting as the corresponding charge $J_H$.
However, if the duality is to be complete, this relationship should also hold for other source terms in the partition function $\trace\left[-\beta H + \sum_i J_{A_i}A_i \right]$ to relate both the thermal properties and correlations of the two systems.
We must again allow the freedom of rescaling the values of the charges in the dual system by an operator-dependent scaling function $f$, since this is something that could be done operationally.
Equating these generalised partition functions for all values of the charges is mathematically equivalent to~(ii), since trivially all but one selected charge can be set to $0$ in tern.

The following result demonstrates that a map preserving partition functions up to a physical rescaling as in \cref{Thermal duality axioms} necessarily preserved the spectra up to the same rescaling.

\begin{restatable}[Maps preserving partition functions preserve spectra]{thm}{lm partition1}\label{lm partition1}
Given a map $\Phi_t: \textup{Herm}_n \mapsto\textup{Herm}_{m}$ such that $\forall$ $A\in \textup{Herm}_n$ and all $J_A>0$, $J_A\in\mathbb{R}$
\[\alpha\trace\left[e^{-J_Af(A)A} \right] = \trace \left[ e^{-J_A\Phi_t(A)}\right] \text{ for some constant } \alpha>0,\]
where $f:$ $\textup{Herm}_n\mapsto \mathbb{R}$ is a scaling factor.
$\forall$ $A\in \textup{Herm}_n$, $\Phi_t$ satisfies,
\[\textup{spec}\left[\Phi_t(A)\right] = f(A) \textup{spec}[A].\]
\end{restatable}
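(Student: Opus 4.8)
The plan is to fix an arbitrary $A\in\textup{Herm}_n$ and reduce the operator identity to a scalar identity in the real variable $J_A$, then read off the spectrum from the analytic structure of that identity. Writing the eigenvalues of $A$ with multiplicity as $\lambda_1,\dots,\lambda_n$ and those of $\Phi_t(A)$ with multiplicity as $\mu_1,\dots,\mu_m$, and abbreviating $c:=f(A)\in\mathbb{R}$, the hypothesis becomes
\[\alpha\sum_{i=1}^n e^{-J_A c\lambda_i}=\sum_{j=1}^m e^{-J_A\mu_j},\qquad\forall\,J_A>0.\]
Both sides are entire in $J_A$, so the identity persists for all $J_A\in\mathbb{R}$, and I may expand each exponential in its Taylor series about $J_A=0$ and compare coefficients of $J_A^k$. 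This converts the single partition-function equality into the infinite family of moment (power-sum) identities
\[\alpha\sum_{i=1}^n (c\lambda_i)^k=\sum_{j=1}^m \mu_j^k,\qquad k=0,1,2,\dots.\]

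Next I extract structural information level by level. The $k=0$ identity gives $\alpha n=m$, fixing $\alpha=m/n$. For even $k=2\ell$ all summands are non-negative, so taking $\ell$-th roots and using the elementary limit $\big(\sum_r |z_r|^{2\ell}\big)^{1/2\ell}\to\max_r|z_r|$ as $\ell\to\infty$ (the factor $\alpha^{1/2\ell}\to1$ is harmless) yields the $\ell_\infty$ relation $\max_i |c\lambda_i|=\max_j|\mu_j|$; that is, the largest-magnitude eigenvalue of $\Phi_t(A)$ is $|f(A)|$ times that of $A$. To promote this magnitude statement to the extremal eigenvalues \emph{with sign}, I would instead exploit the large-$J_A$ asymptotics of the original exponential identity, whose dominant term isolates the \emph{smallest} exponent on each side; matching these together with their $\alpha$-weighted multiplicities forces the smallest eigenvalue of $\Phi_t(A)$ to equal $f(A)$ times the smallest eigenvalue of $A$. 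Subtracting the matched extremal contributions leaves an identity of the same form on strictly smaller spectra, and I recurse, iterating until both spectra are exhausted and $\{\mu_j\}=\{c\lambda_i\}$ as \emph{sets}, which is exactly $\textup{spec}[\Phi_t(A)]=f(A)\,\textup{spec}[A]$.

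Equivalently, and more slickly, the moment identities say that the compactly supported measures $\alpha\sum_i\delta_{c\lambda_i}$ and $\sum_j\delta_{\mu_j}$ share all moments; since the moment problem is determinate for compactly supported measures on $\mathbb{R}$, the two measures coincide, and equality of their supports is the desired set equality. I expect the main obstacle to lie in the bookkeeping of the recursive step rather than in any deep analytic point: the even-moment limit controls only magnitudes, so separating a possible $+R$ from $-R$ at the extremal magnitude $R$, and tracking how the unknown factor $\alpha$ redistributes multiplicities between the two systems, is the delicate part (this is precisely where the large-$J_A$ asymptotics, or a comparison of odd and even moments, is needed). I would also dispatch the degenerate case $A=0$ (where $f(A)=0$) separately, noting that constancy in $J_A$ of $\alpha n=\trace[e^{-J_A\Phi_t(0)}]$ forces $\textup{spec}[\Phi_t(0)]=\{0\}$, consistent with the claim.
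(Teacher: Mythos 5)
Your proposal is correct, and its skeleton --- Maclaurin expansion of the trace identity, coefficient comparison to get the power-sum identities $\alpha\sum_i (f(A)\lambda_i)^k=\sum_j\mu_j^k$, the $k=0$ dimension count, then extraction of extremal values and recursion --- is the same as the paper's. Where you genuinely diverge is in how the extremal matching is carried out, which is the delicate part of this theorem. The paper compares only even powers, viewed as $p$-norms of two equal-length vectors built by replicating each $\lambda_i$ $x$ times and each $\mu_j$ $y$ times (writing $\alpha=x/y$ in lowest terms); this loses signs, so it must additionally invoke the $k=1$ identity with a positivity argument to rule out sign flips, and then a coprimality argument on a non-degenerate $A$ to conclude $y=1$, i.e.\ $\alpha\in\mathbb{Z}^{+}$. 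Your large-$J_A$ asymptotics replaces that entire block: since $e^{-Jx}$ is monotone in the exponent, the dominant term matches the smallest exponents \emph{with signs intact}, together with their $\alpha$-weighted multiplicities, so no blow-up vectors, no sign repair, and no coprimality bookkeeping are needed (it also makes your even-moment $\ell_\infty$ step redundant). Your second route, via moment determinacy of compactly supported measures, is cleaner still and does not appear in the paper: it gives $\alpha\sum_i\delta_{f(A)\lambda_i}=\sum_j\delta_{\mu_j}$ as measures, hence equality of supports, and as a byproduct the full multiplicity statement (each dual eigenvalue is $\alpha$-fold degenerate), from which $\alpha\in\mathbb{Z}^{+}$ follows by choosing $A$ with non-degenerate spectrum. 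That integrality is not demanded by the statement of \cref{lm partition1}, but the paper's proof derives it and \cref{col thermal} quotes it, so you should record that one extra line if your proof is to support the later equivalence. Two minor wording points: when $f(A)<0$ the smallest exponent on the left corresponds to the \emph{largest} eigenvalue of $A$, so your matching claim should be phrased in terms of the multiset $\{f(A)\lambda_i\}$ rather than ``the smallest eigenvalue of $A$''; and the $A=0$ case needs no separate treatment in the measure formulation, since $f(0)=0$ simply makes the left-hand measure $\alpha n\,\delta_0$.
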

\begin{proof}
Initially let $\text{spec}\left[ A \right] = \{\lambda_i \}$ and $\text{spec}\left[ \Phi_t(A)\right] = \{\mu_i \}$ and relate their "partition functions" as in the theorem statement
\begin{align}\label{eqn hold for all A}
\alpha\sum_i e^{-J f(A)\lambda_i}  = \alpha\trace \left[e^{-J f(A)A} \right] =&  \trace \left[e^{-J \Phi_t(A)} \right] =\sum_j e^{-J \mu_j}.
\end{align}
Expanding the exponential using the Maclaurin series, $e^x = \sum_{k=0}^\infty \frac{x^k}{k!}$, which converges for all $x$, gives
\begin{equation} \label{analy_funciton}
\alpha\sum_i^{\dim[A] } \sum_{k=0}^\infty \frac{(-J f(A)\lambda_i)^k}{k!} = \sum_j^{\dim\left[\Phi_t(A)\right] } \sum_{k=0}^\infty \frac{(-J \mu_j)^k}{k!} .
\end{equation}
For the above polynomials to be equal at all values of the charge $J$, the coefficients for each power of $J$ must be equal\footnote{Since \cref{analy_funciton} is an analytic function, see e.g.\ \cite{Atkinson} p133.}.
Equating the $J^0$ coefficients fixes the relationship between the dimensions:
\begin{equation}\label{eqn relation between dims}
\alpha\dim [A ]= \dim\left[\Phi_t(A)\right].
\end{equation}
Therefore the operators $A$ and $\Phi_t(A)$ may act on Hilbert spaces of different dimension (i.e. $n\neq m$).
However, \cref{eqn relation between dims} implies $\alpha$ is a positive rational so we set $\frac{x}{y}:=\alpha$ with $x,y\in\mathbb{Z}^+$ coprime in the following.

For a given $A$, the remaining equalities generate an infinite system of polynomials in $\{\mu_i \}_{i=1}^{\dim\left[\Phi_t(A)\right]}$,
\begin{gather}
  \forall p \in \mathbb{Z}^+ \notag \;:\;
  \frac{x}{y} \sum_{i=1}^{ \dim [A ]} \left(f(A)\lambda_i\right)^p = \sum_{i=1}^{\dim\left[\Phi_t(A)\right]} \mu_i^p. \qquad \label{sim eqns for prop Z}
\end{gather}
Manipulating the sum to remove the multiplicative factors we have $\forall p \in \mathbb{Z}^+$,
\begin{equation}\label{eqn vect of same size}
 \sum_{i=1}^{ x\dim [A ]} \left(f(A)\lambda'_i \right)^p = \sum_{i=1}^{y\dim\left[\Phi_t(A)\right]} \mu'^p_i,
\end{equation}
where we define new vectors $\lambda'$, $\mu'$ with elements $\{\lambda'_{(i-1)x+n}\}_{n=1}^{x} = \lambda_i$ and $\{\mu'_{(i-1)y+n}\}_{n=1}^{y} = \mu_i$, indexing the elements of all vectors in non-decreasing order.

The summations in \cref{eqn vect of same size} now each contain the same number of terms and thus, for even $p=2\varrho$, we can interpret the above as equating the $p$-norms of two $(x\dim [A ] = y\dim\left[\Phi_t(A)\right])$-dimensional vectors:
\begin{equation}\label{eqn p-norms}
  \left(\sum_{i=1}^{ x\dim[A]} \abs{f(A)\lambda'_i}^{2\varrho}\right)^{1/2\varrho}
  = \left(\sum_{i=1}^{y\dim[\Phi_t(A)]} \abs{\mu'_i}^{2\varrho}\right)^{1/2\varrho}.
\end{equation}
Taking the limit $\varrho\rightarrow \infty$, this converges to the $\ell_\infty$ norm of both sides, i.e.\ we can equate the elements of maximum absolute value in each vector: $\max_i \abs{f(A)\lambda_i' }= \max_i \abs{\mu_i'}$.

Now, subtracting $(\max_i f(A)\lambda_i')^{2\varrho} = (\max_i\mu_i')^{2\varrho}$ from both sides of \cref{eqn p-norms}, we obtain an analogous set of $p$-norm equalities but for vectors with length reduced by 1, with the maximum elements removed.
Applying this argument recursively, we conclude that the vectors $f(A)\lambda'$ and $\mu'$ must have identical components up to signs.

The linear variant of \cref{eqn vect of same size} rules out the case where the components $\lambda'$ and $\mu'$ have different signs:
\begin{align}\label{eqn removing minuses}
 \sum_{i=1}^{ x\dim [A ]} f(A)\lambda'_i  = \sum_{i=1}^{y\dim\left[\Phi_t(A)\right]} \mu'_i = \sum_{i=1}^{x\dim [A ]} \pm f(A)\lambda'_i.
\end{align}
This follows as \cref{eqn hold for all A} must hold for all Hermitians $A$, including those with with only positive eigenvalues.
Any term in the sum being negated on the right hand side of \cref{eqn removing minuses} would produce a strictly smaller total than that of the left hand side, therefore,
\begin{equation}\label{eqn primed vectors equal}
\mu' = f(A)\lambda'.
\end{equation}

It remains to use $\lambda'$ and $\mu'$ to find the relation between the original eigenvalue vectors $\lambda$ and $\mu$ (potentially of different lengths).
Choose an $A$ with non-degenerate spectrum, and consider the two smallest eigenvalues of $A$.
We have
\begin{align}
&\lambda_1 = \lambda'_{x} = \frac{\mu'_{x}}{f(A)}\\
&\lambda_2= \lambda'_{x+1} = \frac{\mu'_{x+1}}{f(A)}.
\end{align}
Since $A$ has non-degenerate spectrum, we have $\mu'_{x}\neq \mu'_{x+1}$.
But $\{\mu'_{(i-1)y+n}\}_{n=1}^{y}$ are equal for all $i$ by definition of $\mu'$.
Thus $x\geq y$ and $y=1$, since $x$ and $y$ are coprime.
Hence $\dim \Phi_t(A)$ must be at least as large as $\dim A$ and $\alpha\in\mathbb{Z}^+$.

\cref{eqn primed vectors equal} and $\alpha\in\mathbb{Z}^+$ implies the set equality $\{\mu_i\}_{i=1}^{\alpha\dim[A]}=f(A)\{\lambda_i\}_{i=1}^{\dim[A]}$, where each element of $\mu$ is alpha-fold degenerate.
The two spectra are thus proportional and the proof is complete.
\end{proof}

A simple corollary of this result is that the thermal duality map defined above is equivalent to the previously studied and characterised measurement duality map:

\begin{restatable}{col}{col thermal}\label{col thermal}
The set of maps that describe a thermal duality is equal to the set of maps describing a measurement duality, such that \cref{Measurement duality axioms} and \cref{Thermal duality axioms} are equivalent.
Therefore, thermal duality maps are also of the form,
\[ \Phi_t(A) = f(A) U \left(A^{\oplus p} \oplus \overline{A}^{\oplus q} \right) U^\dagger,\]
where $p,q$ are non-negative integers, $U$ is a unitary transformation and $\bar{A}$ represents the complex conjugate of $A$.
\end{restatable}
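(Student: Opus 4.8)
The plan is to exploit that \cref{Measurement duality axioms} and \cref{Thermal duality axioms} share axiom~(i) verbatim, so proving the two definitions coincide reduces to showing that, in the presence of this common convexity axiom, the spectral condition~(ii) of \cref{Measurement duality axioms} and the partition-function condition~(ii) of \cref{Thermal duality axioms} are interchangeable. I would therefore prove the two set inclusions separately and then read off the characterised form directly from \cref{thm Characterisation}.

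For the forward direction (thermal $\Rightarrow$ measurement) there is essentially nothing left to do: \cref{lm partition1} already shows that any map obeying the partition-function condition~(ii) of \cref{Thermal duality axioms} satisfies $\text{spec}[\Phi_t(A)] = f(A)\,\text{spec}[A]$, which is precisely axiom~(ii) of \cref{Measurement duality axioms}. Combined with the shared convexity axiom~(i), every thermal duality map is also a measurement duality map.

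For the reverse direction (measurement $\Rightarrow$ thermal) I would first note that the bare set equality $\text{spec}[\Phi_s(A)] = f(A)\,\text{spec}[A]$ is \emph{not} by itself enough to reconstruct a partition function, since the trace of an exponential is sensitive to eigenvalue multiplicities whereas the spectral axiom only constrains the underlying sets. The key step is therefore to invoke the full characterisation \cref{thm Characterisation}, which fixes $\Phi_s(A) = f(A)\,U(A^{\oplus p}\oplus \overline{A}^{\oplus q})U^\dagger$. Since $A$ is Hermitian, $\overline{A}$ has the same spectrum as $A$ with identical multiplicities, so each eigenvalue $\lambda_i$ of $A$ contributes the eigenvalue $f(A)\lambda_i$ of $\Phi_s(A)$ with its multiplicity scaled uniformly by $(p+q)$. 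A direct computation then gives
\begin{equation}
\trace\!\left[e^{-J_A \Phi_s(A)}\right] = (p+q)\sum_i e^{-J_A f(A)\lambda_i} = (p+q)\,\trace\!\left[e^{-J_A f(A) A}\right]
\end{equation}
for every $A$ and every $J_A>0$, which is exactly axiom~(ii) of \cref{Thermal duality axioms} with $\alpha = p+q$.

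The main subtlety — and the point I would be most careful about — is that the thermal axiom demands a \emph{single} constant $\alpha$ independent of $A$. This is precisely what the characterisation buys us: $p$ and $q$ are fixed integers attached to the map, not to the operator, so $\alpha = p+q$ is genuinely $A$-independent and strictly positive (as $p+q\geq 1$). With both inclusions established the two sets of maps coincide, so \cref{Measurement duality axioms} and \cref{Thermal duality axioms} are equivalent, and the stated form of $\Phi_t$ follows immediately from \cref{thm Characterisation}.
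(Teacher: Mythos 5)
Your proposal is correct and follows essentially the same route as the paper: the shared convexity axiom is noted, \cref{lm partition1} handles the thermal-to-measurement direction, and the measurement-to-thermal direction invokes \cref{thm Characterisation} to get the uniform eigenvalue degeneracy $p+q$, yielding $\alpha = p+q$ (the paper writes this as $\alpha = m/n$). Your explicit remarks that the bare spectral set equality is insufficient without multiplicity information, and that $\alpha$ must be operator-independent, are points the paper leaves implicit, but the underlying argument is the same.
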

\begin{proof}
Recall that a measurement duality map is defined by two conditions
\begin{enumerate}[(I.)]
\item $\Phi_s \left(\sum_i p_i a_i \right) = G\left(\sum_i p_i a_i \right) \sum_i g(a_i)h(p_i)\Phi_s(a_i)$;
\item $\mathrm{spec}[\Phi_s(A)] = f(A) \mathrm{spec}[A]$.
\end{enumerate}
and a partition function duality is also defined by two conditions,
\begin{enumerate}[(i.)]
\item $\Phi_t \left(\sum_i p_i a_i \right) = f\left(\sum_i p_i a_i \right) \sum_i \frac{p_i}{f(a_i)}\Phi_t(a_i)$;
\item $\alpha\trace\left[e^{-J_Af(A)A} \right] = \trace \left[ e^{-J_A\Phi_t(A)}\right] \text{ for some constant } \alpha>0$.
\end{enumerate}
(I) and (i) are identical statements.
From \cref{lm partition1} (ii) implies (II) where the degeneracy of the spectrum is given by $\alpha$ which was shown to necessarily be $\alpha \in \mathbb{Z}^+$.
The reverse implication (II) implies (ii) can be shown.
Given the measurement duality characterisation, the spectrum is not only equal but each eigenvalue has degeneracy $p+q = m/n \in \mathbb{Z}^+$.
Therefore,
\begin{equation}
\trace \left[ e^{-J_A\Phi_s(A)}\right]  = \frac{m}{n}\trace \left[ e^{-J_Af(A)A}\right],
\end{equation}
where we can equate $\alpha = m/n$.

All of the conditions have been shown to be equivalent therefore the two definitions of duality describe the same set of maps.
\end{proof}

\subsection{Entropic duality}

A third and final viewpoint is to consider entropic dualities.

\begin{restatable}[Entropic duality map]{defn}{axioms3}
\label{Entropic duality axioms}
An \emph{entropic duality map}, $\Phi_e: \textup{Herm}_n \mapsto\textup{Herm}_{\alpha n}$ and $\Phi_e: \mathcal{S}(\mathcal{H}_n)\mapsto \mathcal{S}(\mathcal{H}_{\alpha n})$ satisfies
\begin{enumerate}[(i)]
\item  $\forall$ $a_i\in \textup{Herm}_n$, $p_i\in [0,1]$ with $\sum_i p_i = 1:$
  \[\Phi_e\left(\sum_i p_i a_i \right) = \sum_i p_i \Phi_e(a_i);\]
\item $\forall\rho \in \mathcal{S}(\mathcal{H}):$
  \[S(\Phi_e(\rho)) = S(\rho) + \log \alpha;\]
\item $\Phi_e(0)=0$.
\end{enumerate}
\end{restatable}

The justification for the convexity condition is unchanged.
However, the map is additionally constrained to map states to states (positive operators with unit trace) to meaningfully examine the behaviour of dual entropies.
An immediate consequence of this is a simplification of the previously allowed generalised convexity to standard convexity.

The second axiom captures how the entropies of corresponding states are related.
In trivial examples of dual states in different sized spaces, there is additional entropy arising from the additional degrees of freedom in the larger state space.
This gives an additive offset that depends on the Hilbert space dimension in the entropy relation.
For example, if states $\rho$ are mapped to the (trivially) dual states $\Phi(\rho) = \rho\otimes 1/d$, the entropy of the dual state picks up an additional additive contribution: $S(\Phi(\rho)) = S(\rho) + d$.

More generally, for a $d_1$-dimensional maximally mixed state to be dual to the maximally mixed state in $d_2>d_1$ dimensions, the required entropy relation is
\begin{align}
S\left(\frac{1}{d_2}\mathbb{I}_{(d_2\times d_2)} \right) & = \log d_2\\
& = \log \frac{d_2 d_1}{d_1}\\
& = S\left(\frac{1}{d_1}\mathbb{I}_{(d_1\times d_1)} \right) + \log \frac{d_2}{d_1}.
\end{align}
Then $\alpha = d_2/d_1$ and we can identify $\log \alpha$ as a constant entropy offset arising from the different Hilbert space dimensions.

Entropies in quantum information theory express the information content or entanglement of systems.
For example, in holographic dualities such as AdS/CFT there are relationships between the entropy of corresponding states (the Ryu-Takayanagi formula \cite{Ryu2006}).
However, the above definition concerns the global entropy of states and not entanglement entropy of reduced states.
Therefore the \emph{state dependent} additive entropy that appears in the Ryu-Takayanagi formula does not contradict the \emph{state independent} additive entropy we assert, since the latter does not refer to the entropy of a reduced state but rather a state on the full Hilbert space.

Similarly to the previous section we arrive at a characterisation of entropic duality maps by demonstrating that a map that preserves entropies is necessarily spectrum preserving.
To show this result we first need some technical lemmas.

\begin{restatable}[Entropy of mixtures of mixed states] {lemma}{lmentropyofmixturesofmixed}
\label{lm entropy of mixtures of mixed}
Given a density operator, $\rho_\mathcal{A} = \sum_{x=1}^k p_x \rho_x$, that is a probabilistic mixture of mixed states $\rho_x$, with $p_x\in[0,1]$ and $\sum_x p_x =1$.
The von Neumann entropy of $\rho_\mathcal{A}$ obeys the following equality,
\[S(\rho_\mathcal{A}) = \sum_x p_x S(\rho_x) -\sum_x p_x \log p_x,\]
if and only if $\rho_x$ have orthogonal support.
I.e. $\trace [\rho_x \rho_y]=0$ for all $x\neq y$.
\end{restatable}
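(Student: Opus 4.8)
The plan is to recognise the claimed identity as the saturation case of the standard bound on the entropy of a mixture, $S(\rho_\mathcal{A}) \le \sum_x p_x S(\rho_x) - \sum_x p_x\log p_x$ (the tightness condition of the Holevo-type inequality), and to establish both the bound and its equality condition directly. Write $H(p) := -\sum_x p_x\log p_x$ for the Shannon entropy of the weights. The forward implication is a short spectral computation: if the $\rho_x$ have orthogonal supports, take eigendecompositions $\rho_x = \sum_i \lambda_i^x \ket{e_i^x}\bra{e_i^x}$; the vectors $\ket{e_i^x}$ with $\lambda_i^x>0$ are then mutually orthonormal across all $x$, so $\rho_\mathcal{A}$ has eigenvalues $\{p_x\lambda_i^x\}$ with these eigenvectors. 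Substituting into $S(\rho_\mathcal{A}) = -\sum_{x,i} p_x\lambda_i^x\log(p_x\lambda_i^x)$, splitting the logarithm, and using $\sum_i\lambda_i^x = 1$ together with $S(\rho_x) = -\sum_i\lambda_i^x\log\lambda_i^x$ yields exactly $\sum_x p_x S(\rho_x) + H(p)$.

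For the reverse implication I would first reduce to mixtures of pure states. Using the same eigendecompositions, $\rho_\mathcal{A}$ is a convex mixture of the rank-one projectors $\ket{e_i^x}\bra{e_i^x}$ with weights $q_{xi} := p_x\lambda_i^x$, and the Shannon chain rule gives $H(\{q_{xi}\}) = H(p) + \sum_x p_x S(\rho_x)$. Hence it suffices to prove the pure-state statement: for normalised (not necessarily orthogonal) vectors $\ket{\psi_k}$ and weights $q_k$ with $\sum_k q_k = 1$, one has $S\!\left(\sum_k q_k\ket{\psi_k}\bra{\psi_k}\right) \le H(\{q_k\})$, with equality iff the $\ket{\psi_k}$ having $q_k>0$ are mutually orthogonal.

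The key step is a majorization argument. I would form the Gram-type matrix $G_{k'k} := \sqrt{q_{k'}q_k}\,\langle\psi_{k'}|\psi_k\rangle$, which equals $M^\dagger M$ for $M := \sum_k\sqrt{q_k}\,\ket{\psi_k}\bra{k}$; thus $G$ shares its nonzero spectrum $\nu$ with $\sigma := MM^\dagger = \sum_k q_k\ket{\psi_k}\bra{\psi_k}$, while its diagonal is $(q_k)$ since $\langle\psi_k|\psi_k\rangle=1$. By the Schur--Horn theorem the diagonal of a Hermitian matrix is majorised by its spectrum, so $q\prec\nu$, and Schur-concavity of the Shannon entropy gives $S(\sigma)=H(\nu)\le H(q)$. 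For equality, $H(\nu)=H(q)$ under $q\prec\nu$ forces $\nu$ to be a permutation of $q$; comparing $\trace[G^2]=\sum_{i,j}\abs{G_{ij}}^2$ with $\sum_k\nu_k^2$ then shows every off-diagonal entry of $G$ vanishes, i.e.\ $\langle\psi_{k'}|\psi_k\rangle=0$ whenever $k\neq k'$ and $q_k,q_{k'}>0$. Translating back through the reduction, orthogonality of the $\ket{e_i^x}$ across distinct $x$ is exactly the condition that $\rho_x$ and $\rho_y$ have orthogonal supports, i.e.\ $\trace[\rho_x\rho_y]=0$ for $x\neq y$.

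The main obstacle is the equality analysis rather than the inequality. I expect the delicate points to be: verifying that the majorization equality condition (strict Schur-concavity of $H$) genuinely pins $G$ down to be diagonal — which the $\trace[G^2]$ identity handles cleanly and avoids appealing to strict concavity abstractly — and treating degeneracies and vanishing components correctly. Terms with $q_{xi}=0$ drop out of every entropy and impose no orthogonality constraint, so the support condition $\trace[\rho_x\rho_y]=0$ concerns only the positive-weight sectors, which is precisely what the statement asserts; likewise repeated eigenvalues cause no trouble since the argument only uses the multiset of weights and spectrum.
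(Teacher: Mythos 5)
Your proof is correct, but it takes a genuinely different route from the paper's. The paper purifies $\rho_\mathcal{A}$ against a reference system $\mathcal{R}$ and derives the exact identity $S(\rho_\mathcal{A}) = \sum_x p_x S(\rho_x) - \sum_x p_x\log p_x - S(\rho_\mathcal{R}\|\rho_{\mathcal{R}'})$, where $\rho_{\mathcal{R}'}$ is the dephased (diagonal) version of the reference state; both directions of the equivalence then follow at once from faithfulness of the relative entropy ($S(\rho\|\sigma)=0$ iff $\rho=\sigma$), since $\rho_\mathcal{R}=\rho_{\mathcal{R}'}$ holds exactly when the eigenvectors of distinct $\rho_x$ are mutually orthogonal. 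You instead prove the forward direction by a direct spectral computation, reduce the converse to the pure-state case via the Shannon chain rule, and then argue linear-algebraically: the Gram matrix $G=M^\dagger M$ shares its nonzero spectrum with $\sigma=MM^\dagger$, Schur's theorem gives $q\prec\nu$, Schur-concavity of $H$ gives the inequality, and the equality case plus the $\trace[G^2]$ identity force $G$ to be diagonal, i.e.\ the positive-weight vectors to be orthogonal. What the paper's route buys is economy: a single identity handles both directions, at the price of invoking purification and relative entropy. Your route is more elementary --- no auxiliary system, no relative entropy --- the $\trace[G^2]$ step pins down orthogonality very concretely, and you make explicit (as the paper does not) that zero-weight components impose no constraint, so the ``only if'' direction genuinely concerns only positive-weight sectors. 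One small caveat on your own description: you have not avoided strict Schur-concavity, since the step ``$H(\nu)=H(q)$ under $q\prec\nu$ forces $\nu$ to be a permutation of $q$'' is precisely that fact; the trace identity only replaces the subsequent step (permutation implies $G$ diagonal), not the strict-concavity input itself.
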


\begin{restatable}[Pure states mapped to orthogonal density matrices]{lemma}{lmorthogonaldensitymatrices}
\label{lm orthogonal density matrices}
Let $\{ \sigma_i \}_{i=1}^d$ be a set of orthogonal pure states that forms a basis in $\mathcal{H}$, with $\sigma_i \in P_1(\mathcal{H})$.
Let the map $\phi: \mathcal{S}(\mathcal{H}_n)\mapsto \mathcal{S}(\mathcal{H}_{\alpha n})$, be
\begin{enumerate}[(a)]
\item \label{additive entropy relation} entropy preserving up to an additive constant, $S(\phi(\rho)) = S(\rho) + \log \alpha$;
\item convex, $\phi(t\rho + (1-t)\sigma) = t \phi(\rho) + (1-t) \phi(\sigma)$. Where $t\in[0,1]$ and $\rho,\sigma \in \mathcal{S}(\mathcal{H})$.
\end{enumerate}
The image of this set under the map is a new set, $\{\phi(\sigma_i) \}_{i=1}^d$, with orthogonal support.
\end{restatable}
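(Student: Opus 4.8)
The plan is to push one well-chosen mixture of the basis states through $\phi$ and extract orthogonality purely from the entropy accounting, with \cref{lm entropy of mixtures of mixed} doing the heavy lifting. First I would form the uniform mixture $\rho = \tfrac{1}{d}\sum_{i=1}^d \sigma_i$. Since the $\sigma_i$ are mutually orthogonal pure states spanning $\mathcal{H}$, the operator $\rho$ is diagonal in this basis with every eigenvalue equal to $1/d$, so $S(\rho) = \log d$. Hypothesis~(a), entropy preservation up to an additive constant, then pins down $S(\phi(\rho)) = \log d + \log\alpha$.

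Next I would use convexity. Iterating the two-term convexity hypothesis~(b) gives $\phi(\rho) = \tfrac{1}{d}\sum_{i=1}^d \phi(\sigma_i)$, exhibiting $\phi(\rho)$ as a uniform probabilistic mixture of the (generally mixed) states $\phi(\sigma_i)$. Because each $\sigma_i$ is pure we have $S(\sigma_i)=0$, and hypothesis~(a) forces $S(\phi(\sigma_i)) = \log\alpha$ for every $i$.

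The decisive step is to compare these two facts against \cref{lm entropy of mixtures of mixed}. Applied to the mixture $\tfrac{1}{d}\sum_i \phi(\sigma_i)$, that lemma asserts that the equality $S(\phi(\rho)) = \tfrac{1}{d}\sum_i S(\phi(\sigma_i)) - \tfrac{1}{d}\sum_i \log\tfrac{1}{d}$ holds \emph{if and only if} the $\phi(\sigma_i)$ have mutually orthogonal support. Substituting $S(\phi(\sigma_i)) = \log\alpha$, the right-hand side equals $\log\alpha + \log d$, which is exactly the value of $S(\phi(\rho))$ computed above. The equality condition is therefore automatically saturated, and I conclude that $\{\phi(\sigma_i)\}_{i=1}^d$ has orthogonal support, as claimed.

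The conceptual engine here is really \cref{lm entropy of mixtures of mixed}, whose ``only if'' direction --- deducing orthogonality from saturation of the mixing-entropy bound --- is where the genuine work lies; granting it, the present statement reduces to the bookkeeping above. The only points needing a little care are promoting the two-term convexity~(b) to a $d$-fold convex combination (a routine induction) and ensuring the chosen distribution has full support, so that the lemma's conclusion delivers pairwise orthogonality across all indices $i\neq j$; any strictly positive distribution would serve equally well, and the uniform choice is simply the most transparent.
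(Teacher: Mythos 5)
Your proof is correct and follows essentially the same route as the paper's: both apply hypothesis (a) to a full-support mixture of the $\sigma_i$, use convexity and $S(\phi(\sigma_i))=\log\alpha$ to express the image as a mixture of the $\phi(\sigma_i)$, and then invoke the equality (``only if'') direction of \cref{lm entropy of mixtures of mixed} to conclude orthogonal support. The only cosmetic difference is that you specialise to the uniform mixture (computing $S(\mathbb{I}/d)=\log d$ directly), whereas the paper runs the identical accounting for a general mixture $\sum_i\lambda_i\sigma_i$; your explicit remark about full support of the weights is a point the paper leaves implicit.
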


For the proof of \cref{lm entropy of mixtures of mixed} (\cref{lm orthogonal density matrices}) see \cref{appen proof 1} (\cref{appen proof 2}) respectively.
Now that an orthogonal basis in the dual system is established, we can show the entropy preserving map is necessarily spectrum preserving.
Not that in this case the map is only transforming between states, how the map acts on the full Hermitians is the subject of the next result.

\begin{restatable}[Entropy preserving implies spectrum preserving on positive normalised Hermitian operators]{thm}{lmentropyspecpres}
\label{lm entropy spec pres}
A map $\phi: \mathcal{S}(\mathcal{H}_n)\mapsto \mathcal{S}(\mathcal{H}_{\alpha n})$, that is
\begin{enumerate}[(a)]
\item entropy preserving up to an additive constant: $S(\phi(\rho)) = S(\rho) + \log \alpha$;
\item convex: $\phi(t\rho + (1-t)\sigma) = t \phi(\rho) + (1-t) \phi(\sigma)$. Where $t\in[0,1]$ and $\rho,\sigma \in \mathcal{S}(\mathcal{H})$
\end{enumerate}
will transform the spectrum of the density operator in the following way
\begin{align*}
&\textup{spec}[\rho] = \{\lambda_1, ...,\lambda_d \}\\
&\textup{spec}\left[\phi(\rho)\right] = \left\{\frac{\lambda_1}{\alpha},...,\frac{\lambda_d}{\alpha} \right\} \label{eqn spec of mapped density op}
\end{align*}
where every eigenvalue in the spectrum of $\phi(\rho)$ has multiplicity $\alpha$.
\end{restatable}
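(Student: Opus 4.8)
The plan is to exploit the two technical lemmas to reduce the spectrum of $\phi(\rho)$ to the images of pure states, which can then be pinned down completely using the maximally mixed state. First I would write $\rho$ in its spectral decomposition $\rho = \sum_{i=1}^d \lambda_i \sigma_i$, where the $\sigma_i$ are the rank-one eigenprojectors; these are orthogonal pure states forming a basis of $\mathcal{H}_n$, and the $\lambda_i \in [0,1]$ satisfy $\sum_i \lambda_i = 1$, so this is a genuine convex combination. Applying the convexity axiom (b) gives $\phi(\rho) = \sum_i \lambda_i \phi(\sigma_i)$, and \cref{lm orthogonal density matrices} guarantees that the images $\{\phi(\sigma_i)\}$ have mutually orthogonal support. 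Thus $\phi(\rho)$ is already presented as a mixture of orthogonally-supported states, and it only remains to determine each $\phi(\sigma_i)$.

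The key step is to identify each $\phi(\sigma_i)$ with the maximally mixed state on an $\alpha$-dimensional subspace. I would apply the map to the maximally mixed state $\tfrac{1}{d}\mathbb{I}_d = \tfrac{1}{d}\sum_i \sigma_i$. Convexity gives $\phi(\tfrac{1}{d}\mathbb{I}_d) = \tfrac{1}{d}\sum_i \phi(\sigma_i)$, while the entropy axiom (a) forces $S(\phi(\tfrac{1}{d}\mathbb{I}_d)) = \log d + \log\alpha = \log(\alpha d)$. Since the target space $\mathcal{H}_{\alpha n} = \mathcal{H}_{\alpha d}$ has maximal von Neumann entropy $\log(\alpha d)$, attained uniquely by its maximally mixed state, we conclude $\phi(\tfrac{1}{d}\mathbb{I}_d) = \tfrac{1}{\alpha d}\mathbb{I}_{\alpha d}$ and hence $\sum_i \phi(\sigma_i) = \tfrac{1}{\alpha}\mathbb{I}_{\alpha d}$. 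Because the summands have orthogonal supports $V_i$, each must satisfy $\phi(\sigma_i) = \tfrac{1}{\alpha}\mathbb{I}_{V_i}$; normalisation $\trace[\phi(\sigma_i)] = 1$ then forces $\dim V_i = \alpha$, which incidentally shows $\alpha \in \mathbb{Z}^+$. As a consistency check, \cref{lm entropy of mixtures of mixed} applied to the orthogonal mixture recovers $S(\phi(\sigma_i)) = \log\alpha$, matching the entropy of $\tfrac{1}{\alpha}\mathbb{I}_{V_i}$.

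Assembling the pieces, for general $\rho$ I substitute the block form back to obtain $\phi(\rho) = \sum_i \tfrac{\lambda_i}{\alpha}\mathbb{I}_{V_i}$, a direct sum over the mutually orthogonal $\alpha$-dimensional blocks $V_i$. Reading off the spectrum gives exactly $\text{spec}[\phi(\rho)] = \{\lambda_1/\alpha, \dots, \lambda_d/\alpha\}$ with each eigenvalue $\lambda_i/\alpha$ appearing with multiplicity $\alpha$, as claimed; the zero eigenvalues of $\rho$ (if any) are simply carried along, each likewise acquiring multiplicity $\alpha$.

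The main obstacle is the middle step: establishing that each $\phi(\sigma_i)$ is not merely a state of entropy $\log\alpha$ but genuinely the maximally mixed state on a fixed $\alpha$-dimensional block. A naive application of entropy preservation to individual pure states only fixes $S(\phi(\sigma_i)) = \log\alpha$, which is far from determining the spectrum; the rigidity comes from combining this with the global identity $\sum_i \phi(\sigma_i) = \tfrac{1}{\alpha}\mathbb{I}$ together with the orthogonality of supports, which leave no remaining freedom. One must also check that this block structure holds for the eigenbasis of the particular $\rho$ under consideration, but this is automatic since every orthonormal basis of $\mathcal{H}_n$ averages to the same maximally mixed state, so the argument applies verbatim to the eigenprojectors of any given $\rho$.
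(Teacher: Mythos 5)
Your proposal is correct, and its overall skeleton coincides with the paper's: decompose $\rho$ in its eigenbasis, use convexity together with \cref{lm orthogonal density matrices} to write $\phi(\rho)=\sum_i\lambda_i\phi(\sigma_i)$ as a mixture with orthogonal supports, pin down $\textup{spec}[\phi(\sigma_i)]=\{1/\alpha,0\}$, and read off the spectrum blockwise. Where you genuinely differ is the middle step. The paper proceeds by rank counting: $S(\phi(\sigma_i))=\log\alpha$ forces $\textup{Rank}[\phi(\sigma_i)]\geq\alpha$; orthogonality of the $d$ supports inside a space of dimension $\alpha d$ forces $\sum_i\textup{Rank}[\phi(\sigma_i)]\leq\alpha d$, hence each rank is exactly $\alpha$; and an entropy of $\log\alpha$ on a rank-$\alpha$ state forces the nonzero spectrum to be flat. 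You instead apply $\phi$ to the maximally mixed state: axiom (a) gives $S\bigl(\phi(\mathbb{I}_d/d)\bigr)=\log(\alpha d)$, which is attained uniquely by the maximally mixed state of the target space, so convexity yields the global identity $\sum_i\phi(\sigma_i)=\tfrac{1}{\alpha}\mathbb{I}_{\alpha d}$; orthogonality of supports then forces $\phi(\sigma_i)=\tfrac{1}{\alpha}\Pi_{V_i}$, and normalisation gives $\dim V_i=\alpha$. Both routes are sound and lean on the same dimension-saturation phenomenon, but yours extracts flatness and rank in a single stroke from the uniqueness of the maximum-entropy state (and makes $\alpha\in\mathbb{Z}^+$ explicit as a by-product), whereas the paper needs two separate counting facts plus the equality condition for entropy to saturate the log-rank. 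One mild point common to both: axiom (b) is stated for binary mixtures, so the $d$-term convex combinations used throughout require the standard inductive extension; neither you nor the paper spells this out, and it is harmless.
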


\begin{proof}
The first step in the proof is to show that the image of the pure states $\{\phi(\sigma_i)\}_{i=1}^d$ -- which by \cref{lm orthogonal density matrices} is known to have orthogonal support -- has $\alpha$ non-zero eigenvalues all equal to $1/\alpha$.
Using the entropy preserving property of the map: $S(\phi(\sigma_i))=\log \alpha$.
Since $\log \alpha$ is the maximal entropy of a Hilbert space of dimension $\alpha$, it follows that $\phi(\sigma_i)$ must have at least $\alpha$ non-zero eigenvalues, i.e. $\text{Rank}\left[ \phi(\sigma_i)\right]\geq \alpha$ for all $i$.

As a consequence of orthogonality, the rank summation of $d$ mixed states, $\phi(\sigma_i)$, will be upper bounded by the dimension of the Hilbert space the density matrices act in:
\begin{equation}
\sum_{i=1}^d \text{Rank} \left[\phi(\sigma_i) \right] \leq \alpha d.
\end{equation}
It follows that $\text{Rank}\left[\phi(\sigma_i)\right] = \alpha$ for all $i$.
Together with the entropy $S(\phi(\sigma_i))=\log \alpha$ it follows that the non-zero eigenvalues must be flat and $\text{spec}\left[\phi(\sigma_i) \right] = \{1/\alpha,0 \}$. \newline

It is then simple to extend to the full result.
Any state in $\rho \in \mathcal{S}(\mathcal{H})$ can be written as a linear combination of pure states $\rho = \sum_{i=1}^d \lambda_i \sigma_i$ where due to normalisation  $\sum_{i=1}^d \lambda_i = 1$.
Using the convexity property of the map
\begin{equation}
\phi\left(\sum_{i=1}^d \lambda_i \sigma_i \right) = \sum_{i=1}^d \lambda_i \phi(\sigma_i).
\end{equation}
From \cref{lm entropy of mixtures of mixed} $\{ \phi(\sigma_i)\}$ have orthogonal support and therefore $\text{spec}\left[\phi(\sigma_i) \right] = \{1/\alpha,0 \}$.
Therefore the spectrum of $\phi(\rho)$ will be $\{\lambda_1/\alpha, \lambda_2/\alpha, ... , \lambda_d/\alpha\}$ each with multiplicity $\alpha$.
\end{proof}

Armed with a link between entropy preserving and spectral preserving on positive Hermitians with unit trace, we can now look to characterising the entropic dual maps on the full Hermitian space.
We show that the entropic definition of duality is only slightly less general than the others.
This originates from the normalisation of elements of $\mathcal{S}(\mathcal{H})$ whereby since the operator map the is restricted to map states to states the rescaling is limited.
The following result characterised entropic duality maps and describes the almost equivalence to the two other types of duality map we have studied.

\begin{restatable}{thm}{colaxdualentr}
\label{col ax_dual_entr}
Every entropic duality map $\Phi_e$ is a measurement/thermal duality map where $f(A)=1/\alpha$ for all $A\in \textup{Herm}_n$ and therefore has the form
\[\Phi_e(A) = \frac{1}{\alpha}U \left(A^{\oplus p} \oplus \overline{A}^{\oplus q} \right) U^\dagger,\]
for some unitary $U$ and $p,q\in\mathbb{Z}^+$.
Conversely if $\Phi$ is a measurement/thermal duality map then the related map
\[
\Phi'_e(A) := \begin{cases}
\frac{\Phi(A)}{\alpha f(A)} & \textup{for } A\in\textup{Herm}_n \neq 0\\
\Phi(A) & \textup{for } A=0,
\end{cases}
\]
is an entropic duality map.
\end{restatable}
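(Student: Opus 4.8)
The statement has two directions, and I would dispatch the converse first, since it is essentially a verification against the explicit form produced by \cref{thm Characterisation}. Writing $\Phi(A)=f(A)\,U(A^{\oplus p}\oplus\overline A^{\oplus q})U^\dagger$ and setting $\alpha=p+q=m/n$, the piecewise definition of $\Phi'_e$ collapses for \emph{every} $A$ (including $A=0$, where both branches give $0$) to the single clean expression $\Phi'_e(A)=\tfrac1\alpha U(A^{\oplus p}\oplus\overline A^{\oplus q})U^\dagger$, because the operator-dependent prefactor $f(A)$ cancels against the $f(A)$ in the denominator. From this form the easy entropic axioms are immediate: $A\mapsto A^{\oplus p}\oplus\overline A^{\oplus q}$ is real-linear, and conjugation by $U$ and scaling by $1/\alpha$ preserve linearity, so convexity (axiom (i)) holds; $\Phi'_e(0)=0$ gives axiom (iii); and $\Phi'_e$ sends states to states, since $\overline\rho\ge0$ whenever $\rho\ge0$ and $\tr[\Phi'_e(\rho)]=\tfrac1\alpha(p+q)=1$.

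The only substantive point in the converse is the entropy axiom (ii). I would note that $\overline\rho$ has the same (real) spectrum as $\rho$, so $\rho^{\oplus p}\oplus\overline\rho^{\oplus q}$ carries each eigenvalue $\lambda_i$ of $\rho$ with multiplicity exactly $\alpha=p+q$; after the entropy-invariant unitary conjugation and the $1/\alpha$ rescaling, $\Phi'_e(\rho)$ has eigenvalues $\lambda_i/\alpha$ each with multiplicity $\alpha$. A one-line computation then gives
\[
S(\Phi'_e(\rho))=-\sum_i \alpha\,\frac{\lambda_i}{\alpha}\log\frac{\lambda_i}{\alpha}=-\sum_i\lambda_i\log\lambda_i+\log\alpha\sum_i\lambda_i=S(\rho)+\log\alpha,
\]
using $\sum_i\lambda_i=1$, which is exactly axiom (ii).

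For the forward direction the plan is to upgrade \cref{lm entropy spec pres}, which controls $\Phi_e$ only on states, into the spectral axiom of a measurement duality map on all of $\textup{Herm}_n$. First I would observe that convexity (axiom (i)) together with $\Phi_e(0)=0$ forces $\Phi_e$ to be \emph{real-linear}: taking one summand to be $0$ gives positive homogeneity $\Phi_e(pA)=p\,\Phi_e(A)$ for $p\in[0,1]$ (hence for all $p>0$ by rescaling), and evaluating convexity at the midpoint $\tfrac12 A+\tfrac12 B$ gives additivity. Next, \cref{lm entropy spec pres} applied to the restriction of $\Phi_e$ to $\mathcal{S}(\mathcal{H}_n)$ yields $\textup{spec}[\Phi_e(\rho)]=\tfrac1\alpha\textup{spec}[\rho]$ with each eigenvalue $\alpha$-fold degenerate; in particular the maximally mixed state gives a flat spectrum $\{1/(\alpha n)\}$ in dimension $\alpha n$, i.e.\ $\Phi_e(\mathbb{I}/n)=\tfrac1{\alpha n}\mathbb{I}_{\alpha n}$, so by homogeneity $\Phi_e(\mathbb{I})=\tfrac1\alpha\mathbb{I}_{\alpha n}$.

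The key extension step is a shift-and-normalise argument: for arbitrary $A\in\textup{Herm}_n$ choose $c$ large enough that $A+c\mathbb{I}\ge0$, form the state $\rho_A=(A+c\mathbb{I})/T$ with $T=\tr(A)+cn$, and compute $\textup{spec}[\Phi_e(\rho_A)]$ in two ways: directly via \cref{lm entropy spec pres}, and via linearity as the spectrum of $\tfrac1T\Phi_e(A)+\tfrac{c}{\alpha T}\mathbb{I}$ using $\Phi_e(\mathbb{I})=\tfrac1\alpha\mathbb{I}$. Equating these and cancelling the common shift $c/(\alpha T)$ yields $\textup{spec}[\Phi_e(A)]=\tfrac1\alpha\textup{spec}[A]$ for every Hermitian $A$, each eigenvalue $\alpha$-fold degenerate. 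Thus $\Phi_e$ satisfies \cref{Measurement duality axioms} with the constant scale function $f\equiv1/\alpha$ (convexity being the trivial special case $G=g\equiv1$, $h(p)=p$), and \cref{thm Characterisation} (together with \cref{col thermal}) immediately delivers $\Phi_e(A)=\tfrac1\alpha U(A^{\oplus p}\oplus\overline A^{\oplus q})U^\dagger$ with $p+q=\alpha$. I expect the main obstacle to lie precisely in this extension from states to general Hermitians rather than at either endpoint, since \cref{lm entropy spec pres} only speaks of density matrices and the real work is confirming that linearity plus the value on the identity pins down the spectrum on the whole affine space, with $\alpha$ a genuine (hence integer) multiplicity as required to invoke the integer-$p,q$ characterisation. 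A secondary wrinkle is that the constant $f\equiv1/\alpha$ does not literally vanish at $A=0$ as a scale function nominally should; but since $\textup{spec}[0]=\{0\}$ the spectral axiom is insensitive to the value of $f$ there, so this costs only a remark.
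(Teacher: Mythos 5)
Your proposal is correct, and it tracks the paper's overall strategy in both directions: the converse is a verification of the entropic axioms (the paper argues abstractly from the constrained convexity of \cref{lm Constrained scale functions} plus spectrum preservation, you argue from the explicit form supplied by \cref{thm Characterisation}; the entropy computation is word-for-word the same), and the forward direction follows the same skeleton of (a) real linearity from convexity plus $\Phi_e(0)=0$, (b) \cref{lm entropy spec pres} on states, (c) extension to all of $\textup{Herm}_n$, (d) invoking the characterisation with $f\equiv 1/\alpha$. The genuine difference is step (c). The paper decomposes a general Hermitian $M$ into its positive and negative parts, $M = c_+\rho_+ + c_-\rho_-$ with $\rho_\pm$ orthogonal states, and then needs \cref{lm orthogonal density matrices} together with \cref{lm entropy spec pres} to conclude that the images $\Phi_e(\rho_\pm)$ have orthogonal support, so that the spectrum of $c_+\Phi_e(\rho_+)+c_-\Phi_e(\rho_-)$ is the union of the rescaled spectra. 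Your shift-and-normalise argument -- pass to $\rho_A = (A+c\mathbb{I})/T$, establish $\Phi_e(\mathbb{I}) = \tfrac{1}{\alpha}\mathbb{I}_{\alpha n}$ (forced by \cref{lm entropy spec pres} applied to the maximally mixed state, plus homogeneity), expand by linearity, and cancel the common shift $c/(\alpha T)$ -- reaches the same conclusion while bypassing all orthogonality considerations in the image, since a multiple of the identity commutes with everything. This is arguably cleaner: the paper's step quietly extends \cref{lm orthogonal density matrices} (stated for pure states) to a pair of orthogonal mixed states, whereas your route needs nothing beyond linearity and the value on the identity; both routes equally deliver the integer $\alpha$-fold degeneracy needed to invoke the integer-$(p,q)$ form. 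Your closing remark about $f\equiv 1/\alpha$ failing to vanish at $A=0$ flags a wrinkle the paper shares but does not comment on, and it is harmless for exactly the reason you give.
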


\begin{proof}
\cref{col thermal} states that measurement and thermal duality maps are equivalent.
Therefore this proof can focus on demonstrating a relationship between $\Phi_e$ and measurement duality maps and the connection to thermal duality maps in identical.

Recall that an entropic duality map is defined by three conditions
\begin{enumerate}[(i.)]
\item $\Phi_e\left(\sum_i p_i a_i \right) = \sum_i p_i \Phi_e(a_i)$;
\item $S(\Phi_e(\rho)) = S(\rho) + \log \alpha$;
\item $\Phi_e(0)=0$;
\end{enumerate}
and a measurement duality map is defined by two conditions
\begin{enumerate}[(I.)]
\item $\Phi_s \left(\sum_i p_i a_i \right) = f\left(\sum_i p_i a_i \right) \sum_i \frac{p_i}{f(a_i)}\Phi_s(a_i)$;
\item $\mathrm{spec}[\Phi_s(A)] = f(A) \mathrm{spec}[A]$.
\end{enumerate}
We have used \cref{lm Constrained scale functions} to replace the original weakened convexity condition with the constrained convexity condition that equivalently defines the map.
Additionally, restricting to the case where $f(A) = \frac{1}{\alpha}$ then condition I becomes,
\begin{equation}
\Phi_s\left( \sum_i p_i a_i\right) = \frac{1}{\alpha} \sum \alpha p_i \Phi_s(a_i) = \sum_i p_i \Phi_s(a_i)
\end{equation}
such that it is manifestly equivalent to (i) for this choice of scale function.

All that is left to do for the first statement is to show that a map obeying (i)-(iii) is spectrum preserving for all Hermitians.
The first step is to show that (i) \& (iii) implies the map, $\Phi_e$, is real linear.
This follows from the same argument laid out in the proof of \cite{Cubitt2019} Theorem 4.
For any real negative $\lambda$ set $p=\frac{\lambda}{\lambda-1}>0$, $A\in\text{Herm}_n$ and $B = \frac{pA}{(p-1)}= \lambda A$.
Using (i) and (iii) together:
\begin{align}
\Phi_e(pA + (1-p)B) & = \Phi_e(0) = 0\\
& = p \Phi_e(A) + (1-p)\Phi_e(\lambda A).
\end{align}
Therefore $\lambda \Phi_e(A) = \Phi_e(\lambda A)$.
Repeating this logic for $\lambda A$ gives $\lambda^2 \Phi_e(A) = \Phi_e(\lambda^2 A)$ and hence homogeneity for all real scalars.
Then combining (i) with homogeneity gives real linearity of $\Phi_e$, i.e.
\begin{align}
\Phi_e\left(\sum_i p_i \lambda a_i\right) & = \sum_i p_i \Phi_e(\lambda a_i) = \sum_i \lambda p_i a_i, \label{eqn real linear entropy state map}
\end{align}
for $(\lambda p_i)\in \mathbb{R}$ and $a_i\in \text{Herm}_n$.

The entropic duality map restricted to $\mathcal{S}(\mathcal{H}_n)$ satisfies the conditions of \cref{lm entropy spec pres} and therefore $\Phi_e$ preserves the spectra of positive Hermitians with unit trace (up to a renormalisation).
The transformation of the spectra of $M\not \in\mathcal{S}(\mathcal{H})$ by $\Phi_e$ is shown by building up from $\sigma,\rho \in \mathcal{S}(\mathcal{H})$ using $\Phi_e(a\rho+b\sigma)= a \Phi_e(\rho) + b \Phi_e(\sigma)$.
First note that any Hermitian operator can be written in a spectral decomposition $M = \sum_i \nu_i \ket{\psi_i}\bra{\psi_i}$.
Splitting the decomposition up into two sums over the positive and negative eigenvalues respectively,
\begin{align}
M & = \sum_{\nu_i>0}\nu_i \ket{\psi_i}\bra{\psi_i} + \sum_{\nu_i<0}\nu_i \ket{\psi_i}\bra{\psi_i}\\
& = M_+ + M_-\\
& = c_+ \rho_+ + c_- \rho_-,
\end{align}
where $\rho_{+/-} = \frac{M_{+/-}}{\trace(M_{+/-})}$ and $c_{+/-}=\trace(M_{+/-})$.
Therefore
\begin{equation}
\Phi_e(M) = c_+ \Phi_e(\rho_+) + c_- \Phi_e(\rho_-).
\end{equation}
Since $\rho_+$ and $\rho_-$ are orthogonal it follows from \cref{lm entropy spec pres,lm orthogonal density matrices} that the spectrum of $M$, $\{ \nu_i\}_{i=1}^d$ transforms as
\begin{equation}
\textup{spec}\left[\Phi_e(M)\right] =\frac{1}{\alpha} \left\{\nu_1,...,\nu_d\right\},
\end{equation}
where every eigenvalue in the new spectrum has multiplicity $\alpha$.\\

The converse statement is simple to demonstrate.
For all $A\in \textup{Herm}_n$, $\Phi(A) = \Phi(A)^\dagger$ and since $f(A)\in\mathbb{R}$ it follows that $\Phi'_e$ also preserves Hermiticity.
Using the simplified convexity axiom from \cref{lm Constrained scale functions} for $\Phi$, and substituting for $\Phi'_e$, it is easy to see that this map is convex as in (i) of the definition of entropic duality maps.
Finally using spectrum preservation of $\Phi$,
\begin{equation}
\text{spec}\left[\Phi'_e(\rho) \right] = \frac{1}{\alpha} \text{spec}\left[ \rho\right],
\end{equation}
for a state $\rho\in\text{Herm}_n$, where each eigenvalue has $\alpha$ copies.
$S(\rho) = \sum_i \eta_i \log \eta_i$ where $\{ \eta_i\}$ are the eigenvalues of $\rho$.
Therefore the entropy of the mapped state is,
\begin{align}
S(\Phi'_e(\rho)) &= -\alpha \sum_i \left(\frac{\eta_i}{\alpha} \right) \log \left(\frac{\eta_i}{\alpha} \right)\\
& = - \sum_i \eta_i \log \eta_i + \sum_i \eta_i \log \alpha\\
& = S(\rho) + \log \alpha,
\end{align}
and the second axiom of \cref{Entropic duality axioms} is satisfied by the map.
The third axiom follows immediately from $\Phi'_e(0) := \Phi(0)=0$, giving the converse statement.
\end{proof}

\subsubsection{Extension to Wigner's theorem: a new characterisation of entropy preserving maps}\label{sect Wigners}

The above connection between entropy preserving and spectrum preserving axioms is notable since there is indepdendant interest in characterising entropy preserving maps.
While it is well-known that a unitary or anitunitary\footnote{An antiunitary operator is a bijective antilinear map $W:\mathcal{H}\mapsto\mathcal{H}$ of a complex Hilbert space such that $\langle Wx, Wy \rangle = \overline{\langle x, y \rangle}$ for all $x,y\in\mathcal{H}$ where the overline denotes complex conjugation.} transformation leaves the entropy invariant, the reverse implication is false without additional information.
Previous work, that traces its origins back to Wigner's celebrated theorem~\cite{Wigner}, has shown that by demanding additional constraints on entropy preserving maps, the maps are entirely characterised by either a unitary or antiunitary transformation.
\begin{restatable}[Previous entropic map characterisations]{prop}{propotherentropy}\label{prop prev ent}
Given a surjective map on states $\phi: \mathcal{S}(\mathcal{H})\mapsto \mathcal{S}(\mathcal{H})$ where the Hilbert space $\mathcal{H}$ has dimension $n$,
\begin{enumerate}
\item \cite{He2012} For all $\rho \in \mathcal{S}(\mathcal{H})$, $\forall \lambda \in [0,1]$
\[S(\lambda \rho + (1-\lambda)\I/n) = S(\lambda \phi(\rho) + (1-\lambda)\I/n)\]
iff $\phi(\rho) = W \rho W^*$ for some unitary or anti-unitary operator $W$.
\item \cite{He2015} For all $\rho,\sigma \in \mathcal{S}(\mathcal{H})$, $\forall \lambda \in [0,1]$
\[S(\rho + (1-\lambda)\sigma) = S(\lambda \phi(\rho) + (1-\lambda)\phi(\sigma))\]
iff $\phi(\rho) = W \rho W^*$ for some unitary or anti-unitary operator $W$.
\item \cite{Molnar2008} For all $\rho,\sigma \in \mathcal{S}(\mathcal{H})$,
\[S(\rho||\sigma) = S(\phi(\rho)||\phi(\sigma))\]
iff $\phi(\rho) = W \rho W^*$ for some unitary or anti-unitary operator $W$.
\end{enumerate}
\end{restatable}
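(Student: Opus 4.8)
The plan is to handle all three characterisations through a single Wigner-type template, since in each case the substance is that invariance of an entropic functional forces $\phi$ to be implemented by a unitary or antiunitary conjugation. I would dispatch the (easy) ``if'' direction uniformly first: when $\phi(\rho)=W\rho W^{*}$ with $W$ unitary or antiunitary, $\phi$ preserves the spectrum of every state (a unitary acts by similarity, and an antiunitary sends the real eigenvalues of a Hermitian operator to themselves), and it preserves the joint spectral data of any pair. Consequently $S(\rho)$, the entropy $S(\lambda\rho+(1-\lambda)\sigma)$ of any fixed convex combination, and the relative entropy $S(\rho\|\sigma)$ are all invariant, which covers the three ``if'' claims at once. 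All the difficulty is in the converse.

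For the converse I would, in every case, reduce to the action on pure states and try to show that $\phi$ preserves transition probabilities $|\langle\psi|\phi\rangle|^{2}=\trace(\ket{\psi}\bra{\psi}\,\ket{\phi}\bra{\phi})$; once that is established, Wigner's theorem produces a unitary or antiunitary $W$ on rays, and surjectivity together with the relevant convex structure extends $W(\cdot)W^{*}$ from pure states to all of $\mathcal{S}(\mathcal{H})$. The extraction of transition-probability preservation is the step that genuinely differs between the statements, and it is where I expect the real work to lie. The He2015 functional is the most transparent: putting $\sigma=\rho$ gives $S(\phi(\rho))=S(\rho)$, so pure states go to pure states, and for two pure states the rank-two operator $\lambda\ket{\psi}\bra{\psi}+(1-\lambda)\ket{\phi}\bra{\phi}$ has eigenvalues $\tfrac12\bigl(1\pm\sqrt{1-4\lambda(1-\lambda)(1-|\langle\psi|\phi\rangle|^{2})}\bigr)$, depending only on $\lambda$ and the overlap; equality of entropies for all $\lambda$ therefore pins down $|\langle\psi|\phi\rangle|^{2}$ directly, feeding straight into Wigner.

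For the Molnár relative-entropy functional I would instead exploit support structure: $S(\rho\|\sigma)$ is finite iff $\mathrm{supp}(\rho)\subseteq\mathrm{supp}(\sigma)$, so preservation of finiteness makes $\phi$ respect the inclusion order on supports, hence map pure states to pure states and preserve orthogonality; evaluating $S(\ket{\psi}\bra{\psi}\,\|\,\sigma)=-\langle\psi|\log\sigma|\psi\rangle$ against well-chosen $\sigma$ then recovers the overlaps, again delivering the Wigner hypothesis.

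The He2012 functional $S(\lambda\rho+(1-\lambda)\I/n)$ is the delicate one, and I expect it to be the main obstacle. For a single pure state the eigenvalues of $\lambda\ket{\psi}\bra{\psi}+(1-\lambda)\I/n$ are $\lambda+(1-\lambda)/n$ together with $(1-\lambda)/n$ repeated $n-1$ times, \emph{independent} of $\psi$; more generally the spectrum of $\lambda\rho+(1-\lambda)\I/n$ depends only on $\mathrm{spec}[\rho]$. Mixing against the fixed maximally mixed reference therefore washes out all relative geometry of pure states, so the bare condition only constrains the spectrum of each $\phi(\rho)$ individually -- which by itself does not force a unitary (a state-dependent re-choice of eigenbasis preserves every spectrum yet is far from $W(\cdot)W^{*}$). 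Turning this into transition-probability preservation is the crux: it requires feeding in rank-two inputs $\rho=\tfrac12(\ket{\psi}\bra{\psi}+\ket{\phi}\bra{\phi})$ and leaning on surjectivity (and the continuity/regularity available in the original reference) to promote per-state spectrum preservation into control of the projective geometry before Wigner can be invoked. Once transition probabilities are secured in each case, the closing appeal to Wigner's theorem and the extension to mixed states by convexity and surjectivity are routine.
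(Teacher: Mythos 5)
The first thing to note is that the paper contains no proof of \cref{prop prev ent} at all: it is an informal quotation of theorems from \cite{He2012,He2015,Molnar2008}, which the surrounding text merely translates into the language of \cref{thm Encodings} ($W$ unitary corresponding to $p=1,q=0$; $W$ antiunitary to $p=0,q=1$). So your proposal can only be judged against the cited literature it is reconstructing. On that score, your uniform ``if'' direction is fine, and your sketches for items 2 and 3 are sound and essentially the arguments those references use: for item 2, setting $\sigma=\rho$ gives $S(\phi(\rho))=S(\rho)$ so purity is preserved, and your rank-two eigenvalue formula $\tfrac12\bigl(1\pm\sqrt{1-4\lambda(1-\lambda)(1-\abs{\braket{\psi|\phi}}^2)}\bigr)$ is correct and does recover transition probabilities, feeding Wigner's theorem; for item 3, the support/finiteness argument together with $S(\ket{\psi}\bra{\psi}\,\Vert\,\sigma)=-\bra{\psi}\log\sigma\ket{\psi}$ is indeed the engine of Moln\'ar's proof. (Incidentally, item 2 as printed has a typo: the left-hand side should read $S(\lambda\rho+(1-\lambda)\sigma)$.)

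The genuine gap is item 1, and it cannot be closed by the route you propose. As you yourself compute, $\textup{spec}[\lambda\rho+(1-\lambda)\I/n]=\{\lambda\mu_i+(1-\lambda)/n\}$ depends on $\textup{spec}[\rho]$ alone, so the hypothesis of item 1 is exactly equivalent to ``$\phi$ is surjective and $\textup{spec}[\phi(\rho)]=\textup{spec}[\rho]$ for every $\rho$''. This constrains $\phi$ at each state separately and imposes no relation whatsoever between $\phi(\rho)$ and $\phi(\sigma)$ for $\rho\neq\sigma$: feeding in the rank-two mixture $\tfrac12(\ket{\psi}\bra{\psi}+\ket{\phi}\bra{\phi})$ only tells you that its image has the correct spectrum, not that this image bears any relation to $\phi(\ket{\psi}\bra{\psi})$ and $\phi(\ket{\phi}\bra{\phi})$, so no choice of inputs can manufacture transition probabilities. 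In fact the statement as quoted is false: take $n\geq 2$, fix a state $\rho_0$ with $\rho_0\neq\rho_0^T$, and define $\phi(\rho_0)=\rho_0$, $\phi(\rho_0^T)=\rho_0^T$, and $\phi(\rho)=\rho^T$ otherwise. This map is surjective (every $\sigma\notin\{\rho_0,\rho_0^T\}$ is the image of $\sigma^T$, and $\rho_0,\rho_0^T$ are fixed points) and spectrum-preserving, hence satisfies the item-1 identity for all $\rho$ and $\lambda$; yet it is not of the form $W\rho W^*$, since $\rho\mapsto W\rho W^*$ is real-linear, hence continuous, and agrees with the transpose map on a dense subset of states, hence everywhere, contradicting $\phi(\rho_0)=\rho_0\neq\rho_0^T$. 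The conclusion is that the theorem of \cite{He2012} must carry hypotheses dropped in this informal quotation -- affinity/convex-linearity of $\phi$ being the natural candidate -- and your argument for item 1 should be restructured around such a hypothesis rather than trying to extract projective geometry from per-state spectral data: once $\phi$ is convex-linear, Hermiticity-preserving and spectrum-preserving, \cref{thm Encodings} with $m=n$ forces $p+q=1$, i.e.\ $\phi(\rho)=U\rho U^\dagger$ or $\phi(\rho)=U\overline{\rho}U^\dagger$, and no Wigner-type argument is needed at all.
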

These can be translated into the language used in our characterisation theorem by noting that for any antiunitary operator $W$, the operator $WK$, where $K$ is the complex conjugation operator, is unitary.
Therefore in \cref{prop prev ent} either $W$ is unitary which corresponds to $p=1$, $q=0$ or if $W$ is anti-unitary, for some unitary $U$, $\phi(\rho) = U \overline{\rho} U^\dagger$ corresponding to $p=0$, $q=1$.
Hence all maps in the above proposition are found to be encodings with $p+q\leq 1$.

However, to our knowledge maps preserving entropy up to an additive constant have not been studied in the literature.
A direct consequence of \cref{col ax_dual_entr} is a natural extension of these previous generalisations of Wigner's theorem arises.
A map $\Phi: \mathcal{S}(\mathcal{H})\mapsto \mathcal{S}(\mathcal{H}^{\oplus \alpha})$ is convex,
\begin{equation}
\Phi(\sum_i p_i \rho_i) = \sum_i p_i \Phi(\rho_i),
\end{equation}
and entropy preserving up to an additive constant
\begin{equation}
S(\Phi(\rho_i)) = S(\rho_i) + \log \alpha,
\end{equation}
for all $\rho_i \in \mathcal{S}(\mathcal{H})$, $p_i \in [0,1]$ with $\sum_i p_i = 1$, where $\alpha\in\mathbb{Z}_{\geq 0}$;
iff $\Phi$ is of the form,
\begin{equation}
\Phi(\rho) = U \left(\bigoplus_{i=1}^p V_i \rho V_i^\dagger \oplus \bigoplus_{i=p+1}^{p+q}W_i\rho_i W_i^\dagger \right)U^\dagger
\end{equation}
for some unitaries $U,V_i$ and antiunitaries $W_i$ acting on $\mathcal{H}$, where $p,q\in\mathbb{Z}_{\geq 0}$ and $p+q = \alpha$.

Whereas previous characterisations of entropy preserving maps reduce to Wigner's theorem, by taking a different route via Jordan and $C^*$~algebra techniques we show that the entropic additive constant is precisely the additional freedom that allows the maps to admit a direct sum of both unitary and antiunitary parts.

\section{Approximate dualities}\label{sect Approximate dualities}

So far only exact dualities have been considered.
However, more general definitions of duality are needed in order for this framework to be practical.
This section defines how to extend the ideas of exact duality maps to allow for approximations and restrictions to a subspace.
Here approximate refers to the physics of the two systems being equal up to some error, the approximate equivalence holds within the full subspace.
However, the simulation within a subspace corresponds to the other type of `approximate' duality discussed in the Bosonisation example, where the equivalence only holds in some regime e.g. the low energy regime.

\begin{restatable}[$(\mathcal{S}, \epsilon)$-Duality]{defn}{defnduality}
\label{def approx Duality}
$\tilde{\Phi}$: $\textup{Herm}_n\mapsto \textup{Herm}_m$ is a $(\mathcal{S}, \epsilon)$-approximate duality map if $\exists$ a duality map $\Phi$ such that $\forall A\in\textup{Herm}_n$, the action of $\tilde{\Phi}$ restricted to the subspace $\mathcal{S}$ is close to the action of $\Phi$:
\[\norm{\left.\tilde{\Phi}(A)\right|_{\mathcal{S}} - \Phi(A)} \leq k(A) \epsilon ,\]
for some constant $\epsilon$, where $k:\textup{Herm}_n\mapsto \mathbb{R}_{\geq 0}.$
The duality map is:
\begin{enumerate}[(i.)]
\item \textbf{exact} if $\epsilon=0$;
\item \textbf{unital} if $f(A)=1$ for all $A\in \textup{Herm}_n$.
\end{enumerate}
\end{restatable}

\cite{Cubitt2019} places a large emphasis on local simulations given the focus on Hamiltonian simulation.
Since many-body Hamiltonians of interest are often local, a local encoding will preserve this local structure.
Exact dualities by simple extension are those related to a local encoding,
\begin{restatable}[Local duality map]{defn}{defndualitymap}
\label{defn local duality map}
A local duality map $\Phi: \textup{Herm}_n \mapsto \textup{Herm}_m$ is a duality map i.e. of the form $\Phi(A) = f(A)\mathcal{E}(A)$, where the corresponding encoding $\mathcal{E}$ is a local encoding in the sense of \cite{Cubitt2019} definition 13.
\end{restatable}
\noindent Due to the close relation between duality maps and encodings, we can extend the above definition to focus on approximately local duality maps.
\begin{restatable}[$(\mathcal{S}, \epsilon,\eta)$-Local duality]{defn}{defnlocalduality}
\label{def local approx Duality}
$\tilde{\Phi}$: $\textup{Herm}_n\mapsto \textup{Herm}_m$ is a $(\mathcal{S}, \epsilon, \eta)$-approximately local duality map if it is an $(\mathcal{S}, \epsilon)$-approximate duality map and the exact duality map $\Phi(M) = f(M)V \left(M^{\oplus p}\oplus \overline{M}^{\oplus q} \right) V^\dagger$ in \cref{def approx Duality} is close to a local duality map (\cref{defn local duality map}), $\Phi'(M) = f(M)V' \left(M^{\oplus p}\oplus \overline{M}^{\oplus q} \right) V'^\dagger$, such that $\norm{V - V'}\leq \eta$.
The duality is \textbf{exactly-local} if $\eta=0$.
\end{restatable}

Locality is a natural property to consider, but similar definitions could be equivalently formulated for some other desirable properties, for example particle number conserving.
How these error parameters translate to errors in the physically relevant properties is explored in \cref{sect Physical properties}.

The remainder of this section demonstrates that the definition of duality mappings (and their approximate counterparts), arising from physically motivated axioms, have several desirable properties.
In particular, exact and approximate dualities are shown to compose well.
The choice of extension to approximate mappings is further motivated since the errors defined are shown propagate to physically relevant properties in a controlled way.

\subsection{Similar mappings}\label{sect Similar mappings}

As expected, if two exact duality maps are close the results of applying the maps to the same operator are also close.
Furthermore applying the same mapping to two close operators gives outputs that are close.
This was formalised for encodings in Lemma 19 of \cite{Cubitt2019}, here we show a similar result for duality maps where, unsurprisingly, the "closeness" now also depends on the scaling functions of the maps involved.

First we restate Lemma~18 of \cite{Cubitt2019}, a technical result used in the following proof.

\begin{restatable}{lemma}{lemmacubitt}
\label{lm cubitt1}
Let $A,B: \mathcal{H}\rightarrow \mathcal{H}'$ and $C:\mathcal{H}\rightarrow \mathcal{H}$ be linear maps.
Let $\norm{\cdot}_a$ be the trace or operator norm.
Then,
\begin{equation}
\norm{ACA^\dagger - BCB^\dagger}_a\leq(\norm{A}+\norm{B})\norm{A-B} \hspace{2pt} \norm{C}_a .
\end{equation}
\end{restatable}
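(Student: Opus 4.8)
The plan is to reduce the whole statement to the triangle inequality together with the sub-multiplicativity of the two norms under consideration. The natural starting point is the standard telescoping trick: insert and subtract the mixed term $ACB^\dagger$, which gives the algebraic identity
\[
ACA^\dagger - BCB^\dagger = AC(A^\dagger - B^\dagger) + (A-B)CB^\dagger.
\]
Taking $\norm{\cdot}_a$ of both sides and applying the triangle inequality immediately splits the bound into two pieces, each of which is a product of three linear maps. This is the only structural idea needed; everything after it is estimating the two products.

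The key ingredient I would invoke is that both the trace norm and the operator norm are operator-ideal (symmetric) norms, so that for any composable linear maps $X,Y,Z$ one has $\norm{XYZ}_a \leq \norm{X}_\infty \norm{Y}_a \norm{Z}_\infty$, where $\norm{\cdot}_\infty$ denotes the operator norm. For the operator norm this is just ordinary sub-multiplicativity, while for the trace norm it is the familiar inequality $\norm{XY}_1 \leq \norm{X}_\infty \norm{Y}_1$ applied on each side. Applying this to the two terms yields
\[
\norm{AC(A^\dagger - B^\dagger)}_a \leq \norm{A}\,\norm{C}_a\,\norm{A^\dagger - B^\dagger},
\]
\[
\norm{(A-B)CB^\dagger}_a \leq \norm{A-B}\,\norm{C}_a\,\norm{B^\dagger},
\]
where the unsubscripted norms are operator norms. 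To clean these up I would use that the operator norm is invariant under taking adjoints, so that $\norm{A^\dagger - B^\dagger} = \norm{(A-B)^\dagger} = \norm{A-B}$ and $\norm{B^\dagger} = \norm{B}$.

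Summing the two bounds and factoring out the common factor $\norm{A-B}\,\norm{C}_a$ then produces $(\norm{A}+\norm{B})\,\norm{A-B}\,\norm{C}_a$, which is exactly the claimed inequality. The only genuine point to get right—more a matter of care than a real obstacle—is justifying the mixed-norm sub-multiplicativity uniformly for both admissible choices of $\norm{\cdot}_a$; once that operator-ideal property is granted, the remainder of the argument is purely mechanical. Since the operator-norm case is immediate and the trace-norm case is the standard Hölder-type estimate, both instances are covered simultaneously, and the result follows exactly as stated in Lemma~18 of \cite{Cubitt2019}.
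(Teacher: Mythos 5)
Your proof is correct. Note that this paper does not prove the lemma at all---it is restated verbatim from Lemma~18 of \cite{Cubitt2019}, which is the actual source of the proof---and your argument (inserting the mixed term $ACB^\dagger$ to telescope, applying the triangle inequality, the H\"older-type bound $\norm{XYZ}_a \leq \norm{X}\,\norm{Y}_a\,\norm{Z}$ valid for both admissible norms, and adjoint-invariance of the operator norm) is precisely the standard argument used in that reference, so there is nothing to add.
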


\begin{restatable}[Similar exact dualities]{prop}{propsimiliardualities}
\label{prop similar dualities}
Consider two duality maps $\Phi$ and $\Phi'$ defined by
${\Phi(M) = f(M)V\left(M^{\oplus p}\oplus \overline{M}^{\oplus q} \right)V^\dagger}$, ${\Phi'(M)=f'(M)V'\left(M^{\oplus p}\oplus \overline{M}^{\oplus q} \right) V'^\dagger}$, for some isometries $V$, $V'$.
Then for any operators $M$ and $M'$:
\begin{enumerate}[(i)]
\item $\norm{\Phi(M) - \Phi'(M)}\leq\left(\abs{\sqrt{f(M)}} + \abs{\sqrt{f'(M)}} \right) \norm{\sqrt{f(M)}V - \sqrt{f'(M)}V' } \norm{ M}$;
\item $\norm{\Phi(M)-\Phi(M')} = \norm{f(M)M-f(M')M'}$.
\end{enumerate}
\end{restatable}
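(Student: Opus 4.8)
The plan is to deduce both bounds directly from the stated forms of $\Phi$ and $\Phi'$, in which each map conjugates the block operator $\widetilde{M} := M^{\oplus p}\oplus\overline{M}^{\oplus q}$ by an isometry and rescales by the appropriate scaling function. Throughout I take $\norm{\cdot}$ to be the operator norm, for which conjugation by an isometry is norm-preserving and a direct sum has norm equal to the largest block. I also use that complex conjugation preserves singular values, so $\norm{\overline{X}} = \norm{X}$, and hence $\norm{\widetilde{M}} = \norm{M}$.

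For part~(i) the idea is to massage $\Phi(M)$ and $\Phi'(M)$ into the exact shape $ACA^\dagger - BCB^\dagger$ required by \cref{lm cubitt1}, with a common middle operator $C = \widetilde{M}$. Assuming $f(M),f'(M)\ge 0$, as holds for a genuine rescaling, I absorb the scalar prefactors into the isometries by setting $A := \sqrt{f(M)}\,V$ and $B := \sqrt{f'(M)}\,V'$, so that $A\widetilde{M}A^\dagger = f(M)\,V\widetilde{M}V^\dagger = \Phi(M)$ and likewise $B\widetilde{M}B^\dagger = \Phi'(M)$. Applying \cref{lm cubitt1} then gives
\begin{equation*}
\norm{\Phi(M)-\Phi'(M)} \le \bigl(\norm{A}+\norm{B}\bigr)\,\norm{A-B}\,\norm{\widetilde{M}}.
\end{equation*}
Since $V,V'$ are isometries, $\norm{A} = \abs{\sqrt{f(M)}}$ and $\norm{B} = \abs{\sqrt{f'(M)}}$, while $\norm{\widetilde{M}} = \norm{M}$, which is precisely the claimed inequality.

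For part~(ii) the two maps share the same isometry $V$ and scaling function $f$, so I factor it out: $\Phi(M)-\Phi(M') = V\bigl(f(M)\widetilde{M} - f(M')\widetilde{M'}\bigr)V^\dagger$, where $\widetilde{M'} := M'^{\oplus p}\oplus\overline{M'}^{\oplus q}$. Isometry invariance of the operator norm collapses this to $\norm{f(M)\widetilde{M}-f(M')\widetilde{M'}}$. The inner operator is block diagonal, with $p$ copies of $f(M)M - f(M')M'$ and $q$ copies of its complex conjugate $\overline{f(M)M-f(M')M'}$, using that $f$ is real-valued; since all blocks share the same operator norm, the direct-sum rule returns exactly $\norm{f(M)M - f(M')M'}$.

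The main obstacle is really careful bookkeeping about norms rather than a deep argument. Two points need attention. First, the square-root factorisation in part~(i) requires $f(M),f'(M)\ge 0$; otherwise $(\sqrt{f(M)}V)^\dagger$ introduces a conjugate that turns $f(M)$ into $\abs{f(M)}$, which is harmless for positive rescalings but should be flagged. Second, the clean \emph{equality} in part~(ii) is specific to the operator norm: in the trace norm the direct-sum rule adds block norms, so one would instead pick up a factor of $p+q$ and the stated identity would fail. Confirming $\norm{\widetilde{M}} = \norm{M}$ and $\norm{VXV^\dagger}=\norm{X}$ for the chosen norm is thus the only genuinely load-bearing step.
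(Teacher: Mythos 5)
Your proof is correct and takes essentially the same route as the paper's: part~(i) is the identical application of \cref{lm cubitt1} with $A=\sqrt{f(M)}\,V$, $B=\sqrt{f'(M)}\,V'$ and middle operator $M^{\oplus p}\oplus\overline{M}^{\oplus q}$, and part~(ii) is the same factoring-out of the common isometry and scale function followed by the block-diagonal norm collapse. Your remark that the square-root absorption needs $f(M),f'(M)\geq 0$ flags a genuine subtlety that the paper's own proof passes over silently (it writes $\abs{\sqrt{f(M)}}$ without comment, even though $A \widetilde{M} A^\dagger = \abs{f(M)}\,V\widetilde{M}V^\dagger \neq \Phi(M)$ when $f(M)<0$), so that observation is an improvement rather than a deviation.
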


\begin{proof}
For \textit{(i)} applying \cref{lm cubitt1} gives
\begin{align}
\norm{\Phi(M)-\Phi'(M)}&=\norm{f(M)V\mathbf{M}V^\dagger - f'(M)V'\mathbf{M}V'^\dagger}\\
&\leq \left(\norm{\sqrt{f(A)}V} + \norm{\sqrt{f'(M)}V'} \right) \norm{\sqrt{f(M)}V - \sqrt{f'(M)}V'} \norm{M}\\
&= \left(\abs{\sqrt{f(M)}}+ \abs{\sqrt{f'(M)}} \right) \norm{\sqrt{f(M)}V - \sqrt{f'(M)}V'} \norm{M},
\end{align}
where $\mathbf{M}=M^{\oplus p}\oplus \overline{M}^{\oplus q}$.
The second part is simply
\begin{align}
\norm{\Phi(M) - \Phi(M')}& = \norm{f(M)V\left(M^{\oplus p} \oplus \overline{M}^{\oplus q} \right)V^\dagger - f(M')V\left(M'^{\oplus p} \oplus \overline{M'}^{\oplus q} \right)V^\dagger}\\
&= \norm{f(M)V\left(\left(M-\frac{f(M')}{f(M)}M'\right)^{\oplus p} \oplus \left(\overline{M}-\frac{f(M')}{f(M)}\overline{M'}\right)^{\oplus q} \right)V^\dagger}\\
& = \norm{f(M)M-f(M')M'}.
\end{align}
\end{proof}

\subsection{Composition}\label{sect Composition}

It follows almost directly from \cite{Cubitt2019} Lemma 17 that the composition of two exact duality maps, $\Phi = \Phi_2 \circ \Phi_1$ will itself be an exact duality map, therefore we first restate their result.

\begin{restatable}{lemma}{lemmacubitt2}
\label{lm cubitt2}
If $\mathcal{E}_1$ and $\mathcal{E}_2$ are encodings, then their composition $\mathcal{E}_1\circ\mathcal{E}_2$ is also an encoding,
Furthermore, if $\mathcal{E}_1$ and $\mathcal{E}_2$ are both local, then their composition $\mathcal{E}_1\circ\mathcal{E}_2$ is local.
\end{restatable}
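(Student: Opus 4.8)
The plan is to prove that the composition of two encodings is again an encoding by verifying the three defining axioms of \cref{thm Encodings} for $\mathcal{E}_1 \circ \mathcal{E}_2$, and then to handle locality separately. Since an encoding is fully characterised by (1) Hermiticity preservation, (2) spectrum preservation, and (3) convexity (real-linearity on the appropriate domain), the strategy is simply to check that each of these three properties is closed under composition.

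First I would verify Hermiticity preservation: if $A = A^\dagger$, then $\mathcal{E}_2(A)$ is Hermitian by axiom 1 applied to $\mathcal{E}_2$, and $\mathcal{E}_1(\mathcal{E}_2(A))$ is then Hermitian by axiom 1 applied to $\mathcal{E}_1$. This is immediate. Next, spectrum preservation is equally direct: by axiom 2 for $\mathcal{E}_2$ we have $\textup{spec}[\mathcal{E}_2(A)] = \textup{spec}[A]$, and applying axiom 2 for $\mathcal{E}_1$ to the Hermitian operator $\mathcal{E}_2(A)$ gives $\textup{spec}[\mathcal{E}_1(\mathcal{E}_2(A))] = \textup{spec}[\mathcal{E}_2(A)] = \textup{spec}[A]$. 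For convexity, I would compute directly:
\[
(\mathcal{E}_1\circ\mathcal{E}_2)\left(pA + (1-p)B\right) = \mathcal{E}_1\left(p\,\mathcal{E}_2(A) + (1-p)\,\mathcal{E}_2(B)\right) = p\,(\mathcal{E}_1\circ\mathcal{E}_2)(A) + (1-p)\,(\mathcal{E}_1\circ\mathcal{E}_2)(B),
\]
using convexity of $\mathcal{E}_2$ in the first step and convexity of $\mathcal{E}_1$ in the second. The only subtlety here is that $\mathcal{E}_2$ maps into Hermitian operators, so the intermediate expression $p\,\mathcal{E}_2(A) + (1-p)\,\mathcal{E}_2(B)$ is a genuine convex combination of Hermitian operators lying in the domain of $\mathcal{E}_1$, which is exactly what axiom 3 for $\mathcal{E}_1$ requires. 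With all three axioms verified, \cref{thm Encodings} guarantees that $\mathcal{E}_1 \circ \mathcal{E}_2$ is itself an encoding of the form \eqref{eqn encoding form}.

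The second claim, that locality is preserved under composition, is where I expect the only real work to lie, and it relies on the precise definition of a local encoding from \cite{Cubitt2019}. A local encoding acts on each local interaction term in a way that keeps the support geometrically controlled (each site's operators are mapped to operators supported on a bounded neighbourhood). The key observation is that composing two such maps composes their locality structure: if $\mathcal{E}_2$ maps operators supported on a region $R$ to operators supported on a controlled neighbourhood $N_2(R)$, and $\mathcal{E}_1$ similarly enlarges supports to $N_1(\cdot)$, then $\mathcal{E}_1 \circ \mathcal{E}_2$ maps $R$ into $N_1(N_2(R))$, which is still a bounded neighbourhood. The main obstacle is therefore bookkeeping rather than conceptual: one must confirm that the tensor-product/local structure respected by each encoding individually is genuinely compatible under composition, so that the single-site and few-body structure is not destroyed. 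Since this is precisely the content of \cite{Cubitt2019} Lemma 17, I would invoke that result directly rather than reproving the locality bookkeeping, noting that the characterisation form of each encoding makes the support-tracking transparent.
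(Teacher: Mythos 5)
Your proposal is correct, but it is worth noting how the paper itself handles this statement: it gives no proof at all. \cref{lm cubitt2} is imported verbatim as Lemma~17 of \cite{Cubitt2019}, and the paper's only "argument" is the citation. Your treatment is therefore more self-contained than the paper's for the first claim: the axiom-by-axiom verification (Hermiticity, spectrum preservation, and convexity are each manifestly closed under composition, with the only point of care being that $\mathcal{E}_2$'s codomain sits inside $\mathcal{E}_1$'s domain so the intermediate convex combination is a legitimate input) is exactly the standard argument, and invoking \cref{thm Encodings} at the end to recover the canonical form \eqref{eqn encoding form} is valid. One could alternatively verify the form directly: composing $\mathcal{E}_2(M)=U_2(M^{\oplus p_2}\oplus\overline{M}^{\oplus q_2})U_2^\dagger$ with $\mathcal{E}_1$ yields an encoding with multiplicities $p=p_1p_2+q_1q_2$, $q=p_1q_2+q_1p_2$, since conjugation swaps the roles of $M$ and $\overline{M}$; this is the bookkeeping that also drives the locality claim. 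On that second claim your proof is circular in the strict sense -- you invoke \cite{Cubitt2019} Lemma~17, which \emph{is} the statement being proved -- but since the paper does precisely the same thing (and the locality definition, Definition~13 of \cite{Cubitt2019}, is never reproduced in this paper), this is a defensible and essentially unavoidable move; your heuristic description of why locality composes (supports map into controlled neighbourhoods, and neighbourhoods of neighbourhoods remain controlled) is the right intuition for what the cited proof does.
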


\begin{restatable}[Exact duality map composition]{prop}{propexactdualitycomposition}
\label{prop exact duality composition}
Let $\Phi_1$ and $\Phi_2$ be duality maps.
The composition of these maps, $\Phi = \Phi_2 \circ \Phi_1$ is also a duality map with the valid duality scaling function $f(\cdot)=f_2(\Phi_1(\cdot))f_1(\cdot)$.
Furthermore if the initial dualities were both local, the composition is also local.
\end{restatable}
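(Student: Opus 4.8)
The plan is to reduce the entire statement to the composition theorem for encodings, \cref{lm cubitt2}, by exploiting the factorisation $\Phi_i(A)=f_i(A)\mathcal{E}_i(A)$ supplied by \cref{thm Characterisation}. Writing $\mathcal{E}_i(A):=\Phi_i(A)/f_i(A)$, each $\mathcal{E}_i$ is an encoding in the sense of \cref{thm Encodings}; from the explicit form $\mathcal{E}_i(M)=U_i\bigl(M^{\oplus p_i}\oplus\overline{M}^{\oplus q_i}\bigr)U_i^\dagger$ it is manifest that $\mathcal{E}_i(0)=0$, that $\mathcal{E}_i$ is real linear, and that $\mathcal{E}_i$ is injective. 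These three facts are the only structural inputs I will need beyond \cref{lm cubitt2}.

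First I would compute the composition directly. Since $f_1(A)\in\mathbb{R}$ is a scalar and $\mathcal{E}_2$ is real linear,
\begin{align}
\Phi(A)=\Phi_2\bigl(\Phi_1(A)\bigr)&=f_2\bigl(\Phi_1(A)\bigr)\,\mathcal{E}_2\bigl(f_1(A)\mathcal{E}_1(A)\bigr)\\
&=f_2\bigl(\Phi_1(A)\bigr)f_1(A)\,(\mathcal{E}_2\circ\mathcal{E}_1)(A).
\end{align}
Setting $\mathcal{E}:=\mathcal{E}_2\circ\mathcal{E}_1$ and $f(A):=f_2(\Phi_1(A))f_1(A)$, this reads $\Phi(A)=f(A)\mathcal{E}(A)$. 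By \cref{lm cubitt2} the map $\mathcal{E}$ is again an encoding, and it is local whenever $\mathcal{E}_1$ and $\mathcal{E}_2$ are; recognising $f\cdot\mathcal{E}$ as a (local) duality map in the sense of \cref{defn local duality map} then settles the locality claim at once.

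It then remains to verify that $f=f_2(\Phi_1(\cdot))f_1(\cdot)$ is an admissible scaling function and that $f\cdot\mathcal{E}$ genuinely satisfies \cref{Measurement duality axioms}. The axioms are immediate from the factorisation: spectrum preservation of the encoding gives $\text{spec}[\Phi(A)]=f(A)\,\text{spec}[\mathcal{E}(A)]=f(A)\,\text{spec}[A]$, which is Axiom~(ii), while convexity of $\mathcal{E}$ reproduces the rescaled-convexity form of \cref{lm Constrained scale functions}, hence Axiom~(i). For the scaling-function requirements, $f$ is real-valued as a product of real-valued functions, and it is Lipschitz on compact subsets because $\Phi_1$ is continuous (so maps compacts to compacts) and itself Lipschitz there, making $f_2\circ\Phi_1$ Lipschitz on compacts, and a product of functions that are each Lipschitz and bounded on a compact set is again Lipschitz on that set.

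The one point that requires a little care — and which I regard as the main, if modest, obstacle — is the non-vanishing condition $f(A)=0\iff A=0$. Here $f_1(A)=0\iff A=0$ by hypothesis, and $f_2(\Phi_1(A))=0\iff\Phi_1(A)=0$, so the claim hinges on showing $\Phi_1(A)=0\iff A=0$. This I would deduce from $\Phi_1(A)=f_1(A)\mathcal{E}_1(A)$, using that $f_1$ vanishes only at $0$ together with the injectivity of $\mathcal{E}_1$ and $\mathcal{E}_1(0)=0$. Combining these gives $f(A)=0\iff A=0$, so $f$ is admissible and $\Phi=f\cdot\mathcal{E}$ is a duality map with scaling function $f_2(\Phi_1(\cdot))f_1(\cdot)$, as claimed.
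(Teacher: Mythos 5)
Your proposal is correct and follows essentially the same route as the paper: both factor each $\Phi_i$ as $f_i\cdot\mathcal{E}_i$, invoke \cref{lm cubitt2} to compose the encodings (and inherit locality), identify the scaling function $f=f_2(\Phi_1(\cdot))f_1(\cdot)$, and then check its admissibility, with the Lipschitz-on-compacts property being the only step requiring real work. The paper simply carries this out with the explicit unitary/projector form of the composed encoding and bounds $\norm{\Phi_1(B)-\Phi_1(B')}$ via \cref{prop similar dualities}, whereas you invoke general facts about products and compositions of Lipschitz functions on compact sets (which also justify your unspelled claim that $\Phi_1$ itself is Lipschitz on compacts) --- the same argument in substance.
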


\begin{proof}
The two duality maps necessarily have the form
\begin{align}
\Phi_1(M) &= f_1(M)V_1 \left(M\otimes P_1 + \overline{M}\otimes Q_1\right) V_1^\dagger ,\\
\Phi_2(M) &= f_2(M)V_2 \left(M\otimes P_2 + \overline{M}\otimes Q_2\right) V_2^\dagger,
\end{align}
where $V_i$ are isometries, $f_i$ are real functions and $P_i$, $Q_i$ are orthogonal projectors.
This leads to a composition of the form,
\begin{equation}
 \! \begin{multlined} (\Phi_2 \circ \Phi_1)(M) = f_2(f_1(M)V_1 \left(M\otimes P_1 + \overline{M}\otimes Q_1\right) V_1^\dagger)f_1(M) \times \\
\hspace{60pt} V_2 \left[ V_1 \left(M\otimes P_1 + \overline{M}\otimes Q_1\right) V_1^\dagger\otimes P_1 \right.\\
+ \left.\overline{V_1 \left(M\otimes P_2 +\overline{M}\otimes Q_2\right) V_1^\dagger}\otimes Q_1\right]V_2^\dagger.
\end{multlined}
\end{equation}
\cref{lm cubitt2} tells us this can be rewritten as,
\begin{align}
(\Phi_2 \circ \Phi_1)(M) = f_2&(f_1(M)V_1 \left(M\otimes P_1 + \overline{M}\otimes Q_1\right) V_1^\dagger) f_1(M)\times \notag\\
& U\left[ M \otimes P + \overline{M}\otimes Q \right] U^\dagger ,
\end{align}
where $U=V_2(V_1 \otimes P_2+ \overline{V_1}\otimes Q_2 + \mathbb{I}\otimes (\mathbb{I}- P_2-Q_2))V_2^\dagger$ is an isometry and $P=P_1\otimes P_2+ \overline{Q}_1\otimes Q_2$, $Q=Q_1\otimes P_2+\overline{P}_1\otimes Q_2$ are new orthogonal projectors.

All that remains is to identify a new scaling function,
\begin{equation}
f(M) = f_2(f_1(M)V_1 \left(M\otimes P_1 + \overline{M}\otimes Q_1\right) W^\dagger)f(M),
\end{equation}
and note that it satisfies the three prerequisites from the definition of a duality map.
The first two are immediate: it maps operators to real scalars and doesn't map to zero unless the operator is zero.
Checking the function is also Lipschitz on compact sets requires slightly more work.

We would like to show for all $B,B'$ in any compact subset there exists a constant $L$ such that,
\begin{equation}
\abs{f_2(\Phi_1(B))f_1(B) - f_2(\Phi_1(B'))f_1(B')} \leq L \norm{B-B'}.
\end{equation}
Breaking this down and using knowledge of $f_1,f_2$,
\begin{align}
&\abs{f_2(\Phi_1(B))f_1(B) - f_2(\Phi_1(B'))f_1(B')} \notag\\
& \leq \abs{f_2(\Phi_1(B))}\abs{f_1(B)-f_1(B')} + \abs{f_1(B')}\abs{f_2(\Phi_1(B))-f_2(\Phi_1(B'))}\\
&\leq \abs{f_2(\Phi_1(B))}L_1 \norm{B-B'}+\abs{f_1(B')}L_2 \norm{\Phi_1(B)-\Phi_1(B')}.\label{eqn using subsets}
\end{align}
Using result (ii) from \cref{prop similar dualities},
\begin{align}
\norm{\Phi_1(B)-\Phi_1(B')} &\leq \abs{f_1(B')}\norm{B-B'} + \abs{f_1(B)-f_1(B')}\norm{B}\\
& \leq \abs{f_1(B')}\norm{B-B'} + L_1\norm{B}\norm{B-B'}.
\end{align}
Therefore,
\begin{align}
&\abs{f_2(\Phi_1(B))f_1(B) - f_2(\Phi_1(B'))f_1(B')} \notag\\
& \leq \left(\abs{f_2(\Phi_1(B))} L_1 + \abs{f_1(B')}L_2 \left(\abs{f_1(B')}+L_1\norm{B} \right)\right)\norm{B-B'}\\
&\leq L \norm{B-B'}.
\end{align}
The function is then a valid rescaling since for all $B,B$ in a compact set there exists a constant $L$ such that,
\begin{equation}
\abs{f_2(\Phi_1(B))} L_1 + \abs{f_1(B')}L_2 \left(\abs{f_1(B')}+L_1\norm{B} \right) \leq L,
\end{equation}
as compactness implies $\norm{B},f_2(\Phi_1(B)),f_1(B)$ can be upper bounded by a constant.

The scale factor is independent of the locality structure so it follows directly from \cref{lm cubitt2} that if the initial dualities were both local the composition is also local.
\end{proof}

This can now be extended to consider how the error parameters translate when two approximately-local duality maps are composed.

\begin{restatable}[Approximate duality composition]{prop}{propapproxdualitycomposition}
\label{prop approx duality composition}
Let $\tilde{\Phi}_1$, $\tilde{\Phi}_2$ be $(\mathcal{S}_i, \epsilon_i, \eta_i)$-approximately local duality maps with corresponding close exact duality maps $\Phi_1(\cdot)=f_1(\cdot)\mathcal{E}_1(\cdot)$, $\Phi_2=f_2(\cdot)\mathcal{E}_2(\cdot)$ respectively.
Their composition $\tilde{\Phi} = \tilde{\Phi}_2\circ \tilde{\Phi}_1$ is a $(\mathcal{S}, \epsilon, \eta)$-approximately local duality map on any compact subset where,
\begin{align}
&\epsilon  = \epsilon_1 + \epsilon_2,\\
&\eta \leq  \eta_1 + \eta_2,\\
&k(A)  = \begin{multlined}k_2\left(\tilde{\Phi}_1\left.(A)\right|_{\mathcal{S}_1}\right) + L_2k_1(A)^2\epsilon_1 \\
+ \Lambda_2 \abs{f_1(A)}\norm{A}k_1(A)+\abs{f_2(\Phi_1(A))}k_1(A).\end{multlined}
\end{align}
Here, $L_2$ is the Lipschitz constant of $f_2$.
Moreover the exact duality that is close to the approximate composition is the composition of exact dualities, $\Phi_2\circ\Phi_1$.
$\mathcal{S}\subseteq \mathcal{S}_2$ is the subspace given by the domain of $\Phi_2$ when the range is restricted to $\mathcal{S}_1$.
\end{restatable}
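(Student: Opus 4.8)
The plan is to take the exact map close to $\tilde\Phi$ to be the composite $\Phi_2\circ\Phi_1$, which is already known to be a genuine duality map---and a local one when the factors are local---by \cref{prop exact duality composition}; it then remains only to control the two error parameters $\epsilon$ and $\eta$ of \cref{def local approx Duality}. The subspace $\mathcal{S}\subseteq\mathcal{S}_2$ is fixed so that the output of $\tilde\Phi_1$ restricted to $\mathcal{S}_1$ is still seen by $\tilde\Phi_2$ inside $\mathcal{S}_2$, which is exactly what allows the two one-step guarantees to be chained in sequence.

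For the $(\mathcal{S},\epsilon)$-bound I would write $B:=\tilde\Phi_1(A)\big|_{\mathcal{S}_1}$ and insert the intermediate term $\Phi_2(B)$:
\[
\norm{\tilde\Phi(A)\big|_{\mathcal{S}}-(\Phi_2\circ\Phi_1)(A)}\le\norm{\tilde\Phi_2(B)\big|_{\mathcal{S}}-\Phi_2(B)}+\norm{\Phi_2(B)-\Phi_2(\Phi_1(A))}.
\]
The first summand is at most $k_2(B)\epsilon_2$ directly from $\tilde\Phi_2$ being an $(\mathcal{S}_2,\epsilon_2)$-duality. For the second I would apply part (ii) of \cref{prop similar dualities} to the single map $\Phi_2$ evaluated at the two points $B$ and $\Phi_1(A)$, rewriting it as $\norm{f_2(B)B-f_2(\Phi_1(A))\Phi_1(A)}$; adding and subtracting $f_2(B)\Phi_1(A)$ and using $\norm{B-\Phi_1(A)}\le k_1(A)\epsilon_1$, the Lipschitz estimate $\abs{f_2(B)-f_2(\Phi_1(A))}\le L_2 k_1(A)\epsilon_1$, the bound $\abs{f_2(B)}\le\abs{f_2(\Phi_1(A))}+L_2 k_1(A)\epsilon_1$, and the norm identity $\norm{\Phi_1(A)}=\abs{f_1(A)}\norm{A}$ (valid because the encoding $\mathcal{E}_1$ preserves the spectrum and hence the operator norm). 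Collecting terms produces precisely $k_2(B)\epsilon_2+\abs{f_2(\Phi_1(A))}k_1(A)\epsilon_1+L_2 k_1(A)^2\epsilon_1^2+L_2\abs{f_1(A)}\norm{A}k_1(A)\epsilon_1$, and bounding each $\epsilon_i\le\epsilon:=\epsilon_1+\epsilon_2$ lets me factor out a single $\epsilon$ and read off $k(A)$ (with $\Lambda_2$ identified as the Lipschitz constant $L_2$).

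For the locality statement I would compose the nearby local maps $\Phi_1',\Phi_2'$: by \cref{lm cubitt2} and \cref{prop exact duality composition} their composition $\Phi_2'\circ\Phi_1'$ is again a local duality, with composite isometry $U'$ of the same form as the $U$ built in \cref{prop exact duality composition} but with $V_1,V_2$ replaced by $V_1',V_2'$. I would bound $\norm{U-U'}$ by telescoping through the isometry obtained by changing $V_1\mapsto V_1'$ alone: the inner change contributes $\norm{V_1-V_1'}\le\eta_1$, since conjugation by the isometry $V_2$ and the orthogonal direct-sum structure $(\cdot)\otimes P_2+\overline{(\cdot)}\otimes Q_2$ both preserve the norm, and the outer change of $V_2$ contributes $\norm{V_2-V_2'}\le\eta_2$, giving $\eta\le\eta_1+\eta_2$. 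I expect this last estimate to be the main obstacle: keeping the constant genuinely additive, rather than acquiring a spurious factor of two when conjugating by the two distinct isometries $V_2,V_2'$, is the delicate point and relies on carefully exploiting that the maps are isometric on the relevant subspace together with the norm-invariance of the direct-sum and conjugation structure; the accompanying bookkeeping that ensures $B\in\mathcal{S}_2$, so that $\tilde\Phi_2$'s guarantee can be invoked, is the other place demanding care.
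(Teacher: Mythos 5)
Your proposal is correct and follows essentially the same route as the paper's proof: the same choice of $\Phi_2\circ\Phi_1$ as the nearby exact duality, the same triangle inequality through the intermediate term $\Phi_2\bigl(\tilde\Phi_1(A)\big|_{\mathcal{S}_1}\bigr)$, the same appeal to \cref{prop similar dualities}(ii) together with the Lipschitz property of $f_2$ and the norm identity $\norm{\Phi_1(A)}=\abs{f_1(A)}\norm{A}$, yielding exactly the paper's four terms in $k(A)$ (and you are right that $\Lambda_2$ in the statement is simply the Lipschitz constant $L_2$). The one obstacle you flag—the potential factor of two in the $\eta$ bound—dissolves once the composite isometry from \cref{prop exact duality composition} is written in product form, $U = V_2\left(V_1\otimes P_2 + \overline{V_1}\otimes Q_2 + \cdots\right)$, rather than as a conjugation by $V_2$: telescoping then changes the inner factor $V_1\mapsto V_1'$ at cost $\eta_1$ (left multiplication by the isometry $V_2$ and the orthogonal block structure preserve norms) and the outer factor $V_2\mapsto V_2'$ at cost $\eta_2$ (the right factor has norm one), so the bound is genuinely additive, which is all the paper's own one-line treatment of $\eta$ asserts.
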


\begin{proof}
Since $\tilde{\Phi}_1$ and $\tilde{\Phi}_2$ are approximate dualities,
\begin{align}
&\norm{\left.\tilde{\Phi}_1(A)\right|_{\mathcal{S}_1}- \Phi_1(A)} \leq k_1(A)\epsilon_1 \label{eqn first comp dual}\\
&\norm{\left.\tilde{\Phi}_2(A)\right|_{\mathcal{S}_2}- \Phi_2(A)} \leq k_2(A)\epsilon_2.
\end{align}
For $\tilde{\Phi}$ to be an approximate duality it must satisfy an inequality of the following form,
\begin{equation}\label{eqn goal comp approx dual}
\norm{\left.\tilde{\Phi}_2\left(\left.\tilde{\Phi}_1(A)\right|_{\mathcal{S}_1}\right)\right|_{\mathcal{S}_2}- \Phi(A)} \leq k(A)\epsilon,
\end{equation}
for some exact duality $\Phi$, where we have used knowledge of $\mathcal{S}$ to rewrite the restriction.
Exact dualities compose to give a valid exact duality $\Phi_2\circ \Phi_1(A) = f_2(\Phi_1(A))f_1(A)\mathcal{E}_2\circ \mathcal{E}_1(A)$ (see \cref{prop exact duality composition}).
So we take this as $\Phi$ in \cref{eqn goal comp approx dual} and show that the norm difference is bounded by something of the form of the right hand side of \cref{eqn goal comp approx dual}.

Using the knowledge of the composite dualities and the triangle inequality,
\begin{equation}\label{eqn comp 1}
\begin{multlined}
\norm{\tilde{\Phi}_2\left.\left(\tilde{\Phi}_1\left.(A)\right|_{\mathcal{S}_1}\right)\right|_{\mathcal{S}_2} - \Phi_2\circ \Phi_1(A)}  \\
 \leq k_2\left(\tilde{\Phi}_1\left.(A)\right|_{\mathcal{S}_1}\right)\epsilon_2 + \norm{\Phi_2\left(\tilde{\Phi}_1\left.(A)\right|_{\mathcal{S}_1}\right) - \Phi_2\circ \Phi_1(A)}.
\end{multlined}
\end{equation}

The second term in \cref{eqn comp 1} can be broken down using the similar exact dualities result (ii) from \cref{prop similar dualities},
\begin{align}
& \norm{\Phi_2\left(\tilde{\Phi}_1\left.(A)\right|_{\mathcal{S}_1}\right) - \Phi_2\circ \Phi_1(A)} \notag\\
&= \norm{f_2(\tilde{\Phi}_1\left.(A)\right|_{\mathcal{S}_1})\tilde{\Phi}_1\left.(A)\right|_{\mathcal{S}_1} - f_2(\Phi_1(A))\Phi_1(A)}\\
& \begin{multlined}\leq \Bigl|\Bigl|\left(f_2(\tilde{\Phi}_1\left.(A)\right|_{\mathcal{S}_1})-f_2(\Phi_1(A))+f_2(\Phi_1(A))\right)\left(\tilde{\Phi}_1\left.(A)\right|_{\mathcal{S}_1}- \Phi_1(A)\right.\\
 \left.+\Phi_1(A)\right) - f_2(\Phi_1(A))\tilde{\Phi}_1\left.(A)\right|_{\mathcal{S}_1}\Bigl|\Bigl|\end{multlined}\\
& \begin{multlined}\leq \abs{f_2(\tilde{\Phi}_1\left.(A)\right|_{\mathcal{S}_1})-f_2(\Phi_1(A))}\left(\norm{\tilde{\Phi}_1\left.(A)\right|_{\mathcal{S}_1}- \Phi_1(A)}+\norm{\Phi_1(A)}\right) +\\
 \abs{f_2(\Phi_1(A))}\norm{\tilde{\Phi}_1\left.(A)\right|_{\mathcal{S}_1}-\Phi_1(A)}\end{multlined}\\
& \leq \abs{f_2(\tilde{\Phi}_1\left.(A)\right|_{\mathcal{S}_1})-f_2(\Phi_1(A))}\left(k_1(A)\epsilon_1+\abs{f_1(A)}\norm{A}\right) + \abs{f_2(\Phi_1(A))}k_1(A)\epsilon_1.
\end{align}
Substituting this back gives,
\begin{equation}
\begin{multlined}
\norm{\tilde{\Phi}_2\left.\left(\tilde{\Phi}_1\left.(A)\right|_{\mathcal{S}_1}\right)\right|_{\mathcal{S}_2} - \Phi_2\circ \Phi_1(A)}
 \\
 \leq k_2\left(\tilde{\Phi}_1\left.(A)\right|_{\mathcal{S}_1}\right)\epsilon_2 +
 \abs{f_2(\tilde{\Phi}_1\left.(A)\right|_{\mathcal{S}_1})-f_2(\Phi_1(A))}\\
 \times\Big(k_1(A)\epsilon_1+\abs{f_1(A)}\norm{A}\Big)
  + \abs{f_2(\Phi_1(A))}k_1(A)\epsilon_1.
\end{multlined}
\end{equation}

Since $f_2$ is Lipschitz on any compact subset,
\begin{equation}
\begin{multlined}
\norm{\tilde{\Phi}_2\left.\left(\tilde{\Phi}_1\left.(A)\right|_{\mathcal{S}_1}\right)\right|_{\mathcal{S}_2} - \Phi_2\circ \Phi_1(A)}
\\
 \leq k_2\left(\tilde{\Phi}_1\left.(A)\right|_{\mathcal{S}_1}\right)\epsilon_2 +  L_2 k_1(A)\epsilon_1\Big(k_1(A)\epsilon_1 +\abs{f_1(A)}\norm{A}\Big) \\
 + \abs{f_2(\Phi_1(A))}k_1(A)\epsilon_1,
\end{multlined}
\end{equation}
and all terms on the right hand size are of order $\epsilon_1$ or $\epsilon_2$.
One choice of $\epsilon$ and $k(A)$ is then,
\begin{align}
&\epsilon  = \epsilon_1+\epsilon_2\\
&k(A) = \begin{multlined}k_2\left(\tilde{\Phi}_1\left.(A)\right|_{\mathcal{S}_1}\right) + L_2k_1(A)^2\epsilon_1 \\
+ L_2 \abs{f_1(A)}\norm{A}k_1(A)+\abs{f_2(\Phi_1(A))}k_1(A).\end{multlined}
\end{align}

The scaling of $\eta$ is simplified by the definition of the subspace $\mathcal{S}$, since $\Phi_1$/$\Phi_2$ are $\eta_1/\eta_2$ close to local dualities $\Phi_1'/\Phi_2'$.
Therefore by \cref{lm cubitt2} and triangle inequality we have $\norm{V-V'}\leq \eta_1 + \eta_2$.
\end{proof}

\subsection{Physical properties}\label{sect Physical properties}

This section walks through how the parameters in the definition of approximate and approximately-local duality translates to different physical properties.

\subsubsection{Measurement outcomes}\label{sect Measurement outcomes}
\enlargethispage{2\baselineskip}
\cref{Measurement duality axioms} includes a spectrum preserving statement motivated by considering that dual measurement outcomes should be related.
This included a scaling factor relating the spectra which is associated with a possible unit rescaling.
Now considering approximate duality maps, the rescaled eigenvalues of corresponding observables are approximately equal with a controlled error.

\begin{restatable}[Approximate eigenvalues]{prop}{propapproximateeigenvalues}
\label{prop approximate eigenvalues}
Let the Hermitian operator $A$ act on $\left(\mathbb{C}^d \right)^{\otimes n}$ and $\tilde{\Phi}$ be a $(\mathcal{S},\epsilon,\eta)$- approximately local duality map.
Let $\lambda_i(A)$, $\lambda_i(\tilde{\Phi}(A)|_{\mathcal{S}})$ be the i'th smallest eigenvalues of $A$ and $\tilde{\Phi}(A)|_{\mathcal{S}}$ respectively.
Then for all $1\leq i \leq d^n$ and all $j$ such that $(i-1)(p+q)+1\leq j\leq i(p+q)$,
\begin{equation}
\abs{\lambda_j(\tilde{\Phi}(A)|_{\mathcal{S}}) - f(A)\lambda_i(A) } \leq k(A)\epsilon.
\end{equation}
Where the integers $p$, $q$ and $f(\cdot)$ is the function appearing the corresponding exact duality map.
\end{restatable}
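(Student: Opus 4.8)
The plan is to reduce everything to the exact-duality structure already in hand plus a standard Hermitian eigenvalue perturbation bound (Weyl's inequality). First I would invoke the defining property of an approximately local duality: by \cref{def approx Duality,def local approx Duality} there is an exact duality $\Phi$ with $\norm{\tilde{\Phi}(A)|_{\mathcal{S}} - \Phi(A)} \le k(A)\epsilon$, and by \cref{thm Characterisation} this exact map has the form $\Phi(A) = f(A)\,U(A^{\oplus p}\oplus \overline{A}^{\oplus q})U^\dagger$. The key observation is that the spectrum of $\Phi(A)$ is then known exactly: since $A$ is Hermitian, $\overline{A}=A^{T}$ shares its spectrum, so $A^{\oplus p}\oplus \overline{A}^{\oplus q}$ carries each eigenvalue of $A$ with total multiplicity $p+q$; conjugation by the unitary $U$ leaves the spectrum unchanged, and the real scalar $f(A)$ merely rescales it. Hence $\operatorname{spec}[\Phi(A)] = f(A)\operatorname{spec}[A]$ with each value $(p+q)$-fold degenerate.

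Next I would fix the labelling. Both $\tilde{\Phi}(A)|_{\mathcal{S}}$ and $\Phi(A)$ are Hermitian operators on the same $m$-dimensional space, with $m=(p+q)d^{n}$. Sorting the eigenvalues of $\Phi(A)$ in non-decreasing order, the exact spectrum above is precisely $\lambda_j(\Phi(A)) = f(A)\lambda_i(A)$ for every $j$ in the block $(i-1)(p+q)+1 \le j \le i(p+q)$: this is exactly the degeneracy pattern, the blocks inheriting the order of the $\lambda_i(A)$ when $f(A)\ge 0$ (and for $f(A)<0$ the order-reversal is absorbed into the sorting, so the stated correspondence is unaffected). One checks that as $i$ ranges over $1\le i\le d^{n}$ these blocks exhaust $\{1,\dots,m\}$.

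The core step is Weyl's perturbation inequality for Hermitian matrices: for Hermitian $M,N$ of equal size, $\abs{\lambda_j(M)-\lambda_j(N)} \le \norm{M-N}_{\mathrm{op}}$ for every $j$, with eigenvalues ordered consistently on both sides. Applying this with $M=\tilde{\Phi}(A)|_{\mathcal{S}}$ and $N=\Phi(A)$, then substituting $\lambda_j(\Phi(A)) = f(A)\lambda_i(A)$ and bounding the right-hand side by $k(A)\epsilon$, yields the claim. A minor point is that the definition of approximate duality may be phrased with the trace or the operator norm; since the operator norm is dominated by the trace norm, the estimate $\norm{\,\cdot\,}_{\mathrm{op}}\le k(A)\epsilon$ holds under either convention, so Weyl's inequality can always be supplied with the operator-norm bound it requires.

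The main obstacle is bookkeeping rather than mathematical depth. One must verify that the compression $\tilde{\Phi}(A)|_{\mathcal{S}}$ and $\Phi(A)$ genuinely live on the same $m$-dimensional Hermitian space, so that Weyl's inequality applies eigenvalue-by-eigenvalue, and that the sorted block correspondence lines up as stated. The only genuinely content-bearing input is the exact knowledge of $\operatorname{spec}[\Phi(A)]$ coming from \cref{thm Characterisation}; everything else is the elementary stability of Hermitian eigenvalues under small perturbations.
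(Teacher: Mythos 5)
Your proposal is correct and follows essentially the same route as the paper: invoke the exact duality map $\Phi$ guaranteed by the definition of an approximately local duality, identify $\lambda_j(\Phi(A)) = f(A)\lambda_i(A)$ for $j$ in the $i$-th block of $(p+q)$-fold degenerate eigenvalues, and conclude via Weyl's inequality applied to $\tilde{\Phi}(A)|_{\mathcal{S}}$ and $\Phi(A)$, bounded by $k(A)\epsilon$. The only difference is that you spell out the degeneracy bookkeeping and norm convention that the paper leaves implicit; be aware, though, that your parenthetical claim that a negative $f(A)$ is absorbed into the sorting is not accurate --- a sign flip reverses the block correspondence, pairing block $i$ with $\lambda_{d^n+1-i}(A)$ --- but the paper's own proof silently assumes the same benign ordering, so this is a defect inherited from the statement rather than a gap in your argument.
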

\begin{proof}
Let $\Phi$ be the exact duality map which is $\epsilon$-close to the restricted $\tilde{\Phi}$ as in \cref{eqn closeness cond} and $\eta$-close to the local duality.
For any $i$, $j$ satisfying the above inequalities, $\lambda_j\left( \Phi(A)\right) = f(A)\lambda_i(A)$ from axiom (iii) of
\cref{Measurement duality axioms} of exact dualities.
Combining \cref{eqn closeness cond} with Weyl's inequality ($\abs{\lambda_j(A)-\lambda_j\left(B \right)}\leq \norm{A- B}$) gives,
\begin{align}
\abs{\lambda_j(\tilde{\Phi}(A)|_{\mathcal{S}})-f(A)\lambda_i(A)} & = \abs{\lambda_j(\tilde{\Phi}(A)|_{\mathcal{S}})-\lambda_j\left(\Phi(A) \right)}\\
& \leq \norm{\tilde{\Phi}(A)|_{\mathcal{S}} - \Phi(A)}\\
& \leq k(A)\epsilon.
\end{align}
\end{proof}

\subsubsection{Thermal properties}\label{sect Thermal properties}

Similarly \cref{Thermal duality axioms} includes a partition-function-like statement motivated by requiring dual thermal properties.
Approximate duality mappings preserve partition functions of a given Hamiltonian up to a controllable error, when the restricted subspace is taken to be the low-energy subspace of the Hamiltonian in question.

\begin{restatable}[Approximate partition functions]{prop}{propapproximatepartitionfunctions}
\label{prop approximate partition functions}
Let the Hamiltonian $H$ act on $\left(\mathbb{C}^d \right)^{\otimes n}$ and $\tilde{\Phi}$ be the $(\mathcal{S},\epsilon, \eta)$-duality map into $(\mathbb{C}^{d'} )^{\otimes m}$, where $\mathcal{S}$ is the low energy subspace of $H$ with energy less than $\Delta$.
The relative error in the dual partition functions is given by,
\begin{equation}
\frac{\abs{\mathcal{Z}_{\tilde{\Phi}(H)}(\beta) - (p+q)\mathcal{Z}_{H}(f(H)\beta)}}{(p+q)\mathcal{Z}_{H}(f(H)\beta)} \leq \frac{(d')^m e^{-\beta \Delta}}{(p+q)d^n e^{-\beta f(H)\norm{H}}} + \left(e^{\beta k(H)\epsilon}-1 \right),
\end{equation}
where the integers $p$, $q$ and $f(\cdot)$ is the function in the corresponding exact duality map.
\end{restatable}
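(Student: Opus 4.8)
The plan is to reduce everything to the exact partition-function identity and then control the two independent error sources separately: the approximation error $\epsilon$ on the subspace $\mathcal{S}$, and the truncation error incurred by discarding the high-energy sector $\mathcal{S}^\perp$. First I would recall that the exact duality $\Phi$ which is $\epsilon$-close to $\tilde{\Phi}$ satisfies, by \cref{col thermal} (equivalently \cref{prop tobyextra}),
\[
\trace\left[e^{-\beta\Phi(H)}\right] = (p+q)\trace\left[e^{-\beta f(H)H}\right] = (p+q)\mathcal{Z}_H(f(H)\beta),
\]
since $\Phi(H)$ has spectrum $f(H)\,\mathrm{spec}[H]$ with every eigenvalue $(p+q)$-fold degenerate. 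This rewrites the target quantity $(p+q)\mathcal{Z}_H(f(H)\beta)$ as the exact dual partition function $\mathcal{Z}_{\Phi(H)}(\beta)$, so the numerator on the left becomes $\abs{\mathcal{Z}_{\tilde{\Phi}(H)}(\beta) - \mathcal{Z}_{\Phi(H)}(\beta)}$.

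Next I would split the full trace $\mathcal{Z}_{\tilde{\Phi}(H)}(\beta) = \trace[e^{-\beta\tilde{\Phi}(H)}]$ across the low-energy subspace $\mathcal{S}$ (the eigenspace of $\tilde{\Phi}(H)$ with energies below $\Delta$, of dimension $(p+q)d^n$) and its complement $\mathcal{S}^\perp$. Because $\mathcal{S}$ is then an invariant subspace, $e^{-\beta\tilde{\Phi}(H)}$ is block diagonal and the trace separates cleanly into a low-energy contribution $\trace_{\mathcal{S}}[e^{-\beta\tilde{\Phi}(H)|_{\mathcal{S}}}]$ and a high-energy contribution $\trace_{\mathcal{S}^\perp}[e^{-\beta\tilde{\Phi}(H)|_{\mathcal{S}^\perp}}]$. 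For the latter, every eigenvalue is at least $\Delta$ and there are at most $(d')^m$ of them, so it is bounded by $(d')^m e^{-\beta\Delta}$.

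For the low-energy contribution I would invoke the defining closeness $\norm{\tilde{\Phi}(H)|_{\mathcal{S}} - \Phi(H)} \leq k(H)\epsilon$ together with Weyl's inequality, exactly as in \cref{prop approximate eigenvalues}: sorting the spectra of the two $(p+q)d^n$-dimensional Hermitian operators in increasing order and matching index-by-index, each pair satisfies $\abs{\mu_j - \nu_j} \leq k(H)\epsilon$, whence $\abs{e^{-\beta\mu_j} - e^{-\beta\nu_j}} \leq e^{-\beta\nu_j}(e^{\beta k(H)\epsilon}-1)$ using $\abs{e^{-x}-1}\leq e^{c}-1$ for $\abs{x}\leq c$. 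Summing over $j$ gives
\[
\abs{\trace_{\mathcal{S}}\left[e^{-\beta\tilde{\Phi}(H)|_{\mathcal{S}}}\right] - \mathcal{Z}_{\Phi(H)}(\beta)} \leq \left(e^{\beta k(H)\epsilon}-1\right)\mathcal{Z}_{\Phi(H)}(\beta).
\]
Combining the two estimates through the triangle inequality, dividing by $(p+q)\mathcal{Z}_H(f(H)\beta) = \mathcal{Z}_{\Phi(H)}(\beta)$, and lower-bounding the denominator only in the truncation term via $\mathcal{Z}_H(f(H)\beta) \geq d^n e^{-\beta f(H)\norm{H}}$ (each of the $d^n$ Boltzmann factors is at least $e^{-\beta f(H)\norm{H}}$) then yields precisely the claimed bound.

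The main obstacle is the bookkeeping at the interface between $\mathcal{S}$ and $\mathcal{S}^\perp$: the clean block-diagonal splitting requires $\mathcal{S}$ to be genuinely $\tilde{\Phi}(H)$-invariant (i.e.\ the low-energy eigenspace), since for an arbitrary fixed subspace one has $P e^{-\beta\tilde{\Phi}(H)}P \neq e^{-\beta\tilde{\Phi}(H)|_{\mathcal{S}}}$ and uncontrolled cross terms appear. I would therefore fix $\mathcal{S}$ to be the span of the eigenvectors of $\tilde{\Phi}(H)$ with energy below $\Delta$, and verify that for $\Delta \gtrsim f(H)\norm{H} + k(H)\epsilon$ this subspace carries exactly the dimension $(p+q)d^n$ required for the Weyl matching against $\Phi(H)$, with all remaining eigenvalues lying above $\Delta$. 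Once this is set up, the remaining ingredients -- Weyl's inequality and the elementary exponential estimate -- are routine.
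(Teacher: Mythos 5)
Your proposal is correct and follows essentially the same route as the paper's proof: the exact duality identity $\trace[e^{-\beta\Phi(H)}] = (p+q)\mathcal{Z}_H(f(H)\beta)$, a triangle-inequality split into a high-energy truncation term bounded by $(d')^m e^{-\beta\Delta}$ over $d^n e^{-\beta f(H)\norm{H}}$, and a Weyl-inequality eigenvalue-matching argument (as in \cref{prop approximate eigenvalues}) giving the $(e^{\beta k(H)\epsilon}-1)$ factor. Your closing remark about taking $\mathcal{S}$ to be the low-energy eigenspace of $\tilde{\Phi}(H)$ so the trace splits block-diagonally is a sound reading of what the paper does implicitly.
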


\begin{proof}
By axiom (iii) \cref{Thermal duality axioms} of an exact duality \linebreak $(p+q)\trace \left[e^{-\beta f(H)H} \right] = \trace \left[ e^{-\beta \Phi(H)}\right] $.
Therefore,
\begin{align}
\frac{\abs{\mathcal{Z}_{\tilde{\Phi}(H)}(\beta) - (p+q)\mathcal{Z}_{H}(f(H)\beta)}}{(p+q)\mathcal{Z}_{H}(f(H)\beta)} & = \frac{\abs{\trace\left[e^{-\beta \tilde{\Phi}(H)} \right]- (p+q)\trace \left[e^{-\beta f(H)H} \right]}}{(p+q)\trace \left[e^{-\beta f(H)H} \right]}\\
& = \frac{\abs{\trace\left[e^{-\beta \tilde{\Phi}(H)} \right]- \trace \left[e^{-\beta \Phi(H)} \right]}}{(p+q)\trace \left[e^{-\beta f(H)H} \right]}\\
& \begin{multlined}
\leq \frac{ \abs{\trace\left[e^{-\beta \tilde{\Phi}(H)} \right]- \trace \left[ e^{-\beta  \left.\tilde{\Phi}(H)\right|_{\mathcal{S}}  }\right]}}{(p+q)\trace \left[e^{-\beta f(H)H} \right]}\\
+ \frac{ \abs{\trace \left[ e^{-\beta  \left.\tilde{\Phi}(H)\right|_{\mathcal{S}}  }\right] - \trace\left[e^{-\beta \Phi(H)} \right]}}{\trace \left[e^{-\beta \Phi(H)} \right]} .
\end{multlined}
\end{align}
Bounding the numerator and denominator of the first term:
\begin{align}
 \abs{\trace\left[e^{-\beta \tilde{\Phi}(H)} \right]- \trace \left[ e^{-\beta  \left.\tilde{\Phi}(H)\right|_{\mathcal{S}}  }\right]} &\leq (d')^m e^{-\beta \Delta}.\label{eqn num bound}\\
\trace \left[e^{-\beta f(H)H} \right]&\geq d^n e^{-\beta  f(H)\norm{H}}. \label{eqn denom bound}
\end{align}
The second term is bounded by considering eigenvalues.
Let $\lambda_l$ be the $l$'th eigenvalue of $\left.H'\right|_{\mathcal{S}} $ in non-decreasing order.
Then by the argument in \cref{prop approximate eigenvalues} the $l$'th eigenvalue of $\Phi(H)$ (in the same order) is given by $\lambda_l + k(H)\epsilon_l$ where $\abs{\epsilon_l}\leq \epsilon$ for all $l$.
Hence,
\begin{align}
 \abs{\trace \left[ e^{-\beta  \left.\tilde{\Phi}(H)\right|_{\mathcal{S}}  }\right] - \trace\left[e^{-\beta \Phi(H)} \right]} &\leq \sum_l \abs{e^{-\beta \lambda_l} - e^{-\beta(\lambda_l + k(H)\epsilon_l)}}\\
& =\sum_l e^{\beta(\lambda_l+k(H)\epsilon_l)} \abs{e^{\beta k(H)\epsilon_l}-1}\\
& \leq (e^{\beta k(H)\epsilon}-1)\trace \left[e^{-\beta \Phi(H)} \right].
\end{align}
Combining the above with \cref{eqn num bound} and \cref{eqn denom bound} gives the result.
\end{proof}

\subsubsection{Time dynamics}\label{sect Time dynamics}

\cref{defn state map} demanded consistent time dynamics for exact duality mappings as a constraint to specify the form of the corresponding state map.
As expected when considering approximate duality maps this statement is relaxed, such that time dynamics of the two systems is close up to an error that increases with time.

\begin{restatable}[Approximate time dynamics]{prop}{propapproximatetimedynamics}
\label{prop approximate time dynamics}
Let $\tilde{\Phi}$ be a $(\mathcal{S},\epsilon,\eta)$-approximately local duality map with corresponding exact duality $\Phi(\cdot) = f(\cdot)\mathcal{E}(\cdot)$.
Given a Hamiltonian $H$ such that $\mathcal{S}$ is the low energy subspace with eigenvalues $< \Delta$.
Then for any density matrix $\rho$ in the encoded subspace, such that $\Phi(\mathbb{I})\rho = \rho$, the time dynamics of the approximate duality mapping is close to that of the exact mapping:
\begin{equation}
\norm{e^{-i\tilde{\Phi}(H)t}\rho e^{i\tilde{\Phi}(H)t} - e^{-i\Phi(H)t}\rho e^{i\Phi(H)t}}_1 \leq 2\epsilon k(H) t + \eta.
\end{equation}
\end{restatable}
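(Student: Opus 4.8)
The plan is to reduce the trace-norm distance between the two conjugations to a single operator-norm distance between the two propagators, and then to estimate that distance by a Duhamel (fundamental-theorem-of-calculus) argument combined with the defining closeness of $\tilde{\Phi}$ and $\Phi$. Writing $U = e^{-i\tilde{\Phi}(H)t}$ and $V = e^{-i\Phi(H)t}$, both unitary since $\tilde{\Phi}(H)$ and $\Phi(H)$ are Hermitian, I would first add and subtract $U\rho V^\dagger$ and apply the triangle inequality,
\[
\norm{U\rho U^\dagger - V\rho V^\dagger}_1 \leq \norm{U\rho(U^\dagger - V^\dagger)}_1 + \norm{(U-V)\rho V^\dagger}_1 .
\]
Using the H\"older-type bound $\norm{XYZ}_1 \le \norm{X}\,\norm{Y}_1\,\norm{Z}$ together with $\norm{U} = \norm{V} = 1$ and $\norm{\rho}_1 = 1$, each term is at most $\norm{U - V}$ (the adjoint leaves the operator norm unchanged), so the whole expression is bounded by $2\norm{U - V}$.

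The next step is the standard identity
\[
e^{-i\tilde{\Phi}(H)t} - e^{-i\Phi(H)t} = i\int_0^t e^{-i\tilde{\Phi}(H)s}\bigl(\Phi(H) - \tilde{\Phi}(H)\bigr)e^{-i\Phi(H)(t-s)}\,ds,
\]
which, after taking operator norms inside the integral and using unitarity of the exponentials, yields $\norm{U - V} \le t\,\norm{\tilde{\Phi}(H) - \Phi(H)}$. Combining with the previous step gives a bound of the form $2t\,\norm{\tilde{\Phi}(H) - \Phi(H)}$, and the $2\epsilon k(H) t$ contribution then follows from the defining closeness of an $(\mathcal{S},\epsilon)$-duality, $\norm{\tilde{\Phi}(H)|_{\mathcal{S}} - \Phi(H)} \le k(H)\epsilon$ from \cref{def approx Duality}.

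The main obstacle is that this defining inequality only controls $\tilde{\Phi}(H)$ \emph{restricted to} $\mathcal{S}$, whereas the Duhamel estimate a priori involves the full operator $\tilde{\Phi}(H)$; one must therefore argue that the relevant dynamics stays adapted to the encoded subspace on which $\rho$ is supported ($\Phi(\mathbb{I})\rho = \rho$). I expect to handle this by passing to the close local duality $\Phi'$ with isometry $V'$ satisfying $\norm{V - V'} \le \eta$ (\cref{def local approx Duality}), replacing the exact encoded subspace by the one on which the restricted closeness is genuinely available, at the cost of a term controlled by $\norm{V - V'}$. Since this replacement is a change of the subspace frame rather than of the generator of the time evolution, its contribution does not accumulate with $t$ and enters as the additive constant $\eta$, giving the final bound $2\epsilon k(H) t + \eta$. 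The careful bookkeeping of this subspace mismatch — in particular verifying that the associated leakage estimate is truly $t$-independent and not suppressed or amplified by the evolution — is the part of the argument I expect to require the most care.
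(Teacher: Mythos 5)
Your skeleton (triangle inequality to reduce to a propagator-norm difference, then a Duhamel bound on that difference) is indeed the core of the argument the paper invokes -- its proof is a one-line reference to Proposition 29 of \cite{Cubitt2019}, whose proof has exactly this shape. The gap is in how you handle the restriction to $\mathcal{S}$, and it is fatal to the argument in the order you have arranged it. Once you have bounded the trace-norm difference by $2\norm{e^{-i\tilde{\Phi}(H)t}-e^{-i\Phi(H)t}}$ you have discarded all information about the support of $\rho$, and Duhamel then charges you the \emph{unrestricted} norm $\norm{\tilde{\Phi}(H)-\Phi(H)}$. Nothing in \cref{def approx Duality} or \cref{def local approx Duality} controls $\tilde{\Phi}(H)$ off the subspace $\mathcal{S}$: the approximate map may differ from $\Phi(H)$ by an arbitrarily large amount on $\mathcal{S}^\perp$, in which case $\norm{e^{-i\tilde{\Phi}(H)t}-e^{-i\Phi(H)t}}$ is of order $1$ no matter how small $\epsilon$ is. Passing to the local duality $\Phi'$ cannot repair this: the inequality $\norm{V-V'}\le\eta$ compares the two \emph{exact} isometries and says nothing about the behaviour of $\tilde{\Phi}(H)$ outside $\mathcal{S}$, so the off-subspace part of the generator remains uncontrolled. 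The $\eta$ term in the statement is not there to fix this problem; as in Proposition 29 of \cite{Cubitt2019}, it accounts for the time-independent, trace-norm cost of exchanging the exact and local encodings, which is a separate piece of bookkeeping.

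The missing idea is spectral invariance: $\mathcal{S}$ is the low-energy subspace of $\tilde{\Phi}(H)$, i.e.\ a direct sum of eigenspaces of $\tilde{\Phi}(H)$, so $[P_{\mathcal{S}},\tilde{\Phi}(H)]=0$ and $e^{-i\tilde{\Phi}(H)t}$ preserves $\mathcal{S}$ for all $t$. Since $\Phi(\mathbb{I})\rho=\rho$ places $\rho$ inside $\mathcal{S}$, one has exactly
\begin{equation*}
e^{-i\tilde{\Phi}(H)t}\rho\, e^{i\tilde{\Phi}(H)t}
= e^{-i\left.\tilde{\Phi}(H)\right|_{\mathcal{S}}t}\rho\, e^{i\left.\tilde{\Phi}(H)\right|_{\mathcal{S}}t},
\end{equation*}
with no error at all; only \emph{after} this replacement should you run your two steps, now with the restricted generator, so that Duhamel charges $\norm{\left.\tilde{\Phi}(H)\right|_{\mathcal{S}}-\Phi(H)}\le k(H)\epsilon$ and yields the $2\epsilon k(H)t$ term. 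In other words, your tools are the right ones but must be applied in the opposite order -- restrict first (using invariance of $\mathcal{S}$ together with the support condition on $\rho$), reduce and integrate second -- and the justification for the restriction is the eigenspace structure of $\tilde{\Phi}(H)$, not the $\eta$-closeness to a local duality.
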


This follows from an identical argument as Proposition 29 from \cite{Cubitt2019}, applying instead $\norm{\left.\tilde{\Phi}(H)\right|_{\mathcal{S}} - \Phi(H)}\leq k(H)\epsilon$ at the final step.

\section*{Acknowledgements}
TSC~is supported by the Royal Society.
HA~is supported by EPSRC DTP Grant Reference: EP/N509577/1 and EP/T517793/1.
We thank Nikolas Breuckmann for interesting discussions that led to the initial idea for this work and QIP 2023 reviewers for helpful comments and suggestions.
This work has been supported in part by the EPSRC Prosperity Partnership in Quantum Software for Simulation and Modelling (grant EP/S005021/1), and by the UK Hub in Quantum Computing and Simulation, part of the UK National Quantum Technologies Programme with funding from UKRI EPSRC (grant EP/T001062/1).

\begin{appendices}
\crefalias{section}{appendix}

\section{Proof of \cref{lm entropy of mixtures of mixed}}\label{appen proof 1}

\lmentropyofmixturesofmixed*

\begin{proof}
Write each mixed state as a sum of pure states:
\begin{equation}
\rho_x = \sum_{j=1}^m \lambda_j^{(x)} \ket{\phi_j^{(x)}}\bra{\phi_j^{(x)}},
\end{equation}
where $\{ \ket{\phi_j^{(x)}}\}_{j=1}^m$ form an orthogonal basis for a given $x$, but in general $\braket{\phi_i^{(x)}|\phi_j^{(y)}}\neq 0 $ for $x\neq y$.
The full density operator with these expansions reads:
\begin{equation}
\rho_\mathcal{A} = \sum_{x=1}^k \sum_{j=1}^m p_x \lambda_j^{(x)} \ket{\phi_j^{(x)}}\bra{\phi_{j}^{(x)}}.
\end{equation}
Introduce a Hilbert space, $\mathcal{H}_\mathcal{R}$, with $\dim(\mathcal{H}_\mathcal{R})= mk$ and an orthonormal basis labeled by $\ket{xj}_\mathcal{R}$.
Consider a purification of $\rho_\mathcal{A}$,
\begin{equation}
\ket{\mathcal{AR}} = \sum_{x,j}\sqrt{p_x \lambda_j^{(x)}}\ket{\phi_j^{(x)}}_\mathcal{A} \otimes \ket{xj}_\mathcal{R},
\end{equation}
where
\begin{align}
\rho_\mathcal{A} &= \trace_\mathcal{R} \left[\ket{\mathcal{AR}}\bra{\mathcal{AR}} \right] = \sum_{x,j} p_x \lambda_j^{(x)} \ket{\phi_j^{(x)}} \bra{\phi_j^{(x)}},
\end{align}
and
\begin{align}
\rho_\mathcal{R} &= \trace_\mathcal{A} \left[\ket{\mathcal{AR}}\bra{\mathcal{AR}} \right]\\
& = \sum_{x,j,x',j'} \sqrt{p_x \lambda_j^{(x)}} \sqrt{p_{x'}\lambda_{j'}^{(x')}} \braket{\phi_j^{(x)}|\phi_{j'}^{(x')}}_\mathcal{A} \ket{xj}\bra{x'j'} \label{eqn rho R}.
\end{align}
Also define
\begin{equation}
\rho_\mathcal{R'} := \sum_{x,j} p_x \lambda_j^{(x)} \ket{xj}\bra{xj}.\label{eqn rho R'}
\end{equation}
The relative entropy between the two reservoir states is given by
\begin{align}
S(\rho_\mathcal{R}||\rho_\mathcal{R'}) &:= \trace \rho_\mathcal{R} \log \rho_\mathcal{R} - \trace \rho_\mathcal{R}\log \rho_\mathcal{R'}\\
& = - S(\rho_\mathcal{R}) - \trace \rho_\mathcal{R}\log \rho_\mathcal{R'}\\
& = - S(\rho_\mathcal{A}) - \trace \rho_\mathcal{R}\log \rho_\mathcal{R'}.
\end{align}
Where the last line uses $S(\rho_\mathcal{R}) = S(\rho_\mathcal{A})$.
Since $\ket{xj}$ forms an orthogonal basis $\log \rho_\mathcal{R'} = \sum_{x,j}\log \left(p_x \lambda_j^{(x)} \right) \ket{xj}\bra{xj}$.
Further algebraic manipulation of the last term results in,
\begin{align}
\trace \left[ \rho_\mathcal{R} \log \rho_\mathcal{R'}\right] &= \sum_{j,x} \log \left(p_x \lambda_j^{(x)} \right) \trace \left(\rho_\mathcal{R}\ket{xj}\bra{xj} \right)\\
& = \sum_{x,j} \log \left(p_x \lambda_j^{(x)} \right) \braket{xj| \rho_\mathcal{R}|xj}\\
& = \sum_{x,j} p_x \lambda_j^{(x)} \log \left(p_x \lambda_j^{(x)} \right)\\
& = \sum_{x,j} p_x \lambda_j^{(x)}\log p_x + \sum_{x,j} p_x \lambda_j^{(x)}\log(\lambda_j^{(x)})\\
& = \sum_x p_x \log p_x + \sum_x p_x \sum_j \lambda_j^{(x)} \log \lambda_j^{(x)}\\
& = \sum_x p_x \log p_x - \sum_x p_x S(\rho_x).
\end{align}
We arrive at an expression for the entropy of our mixture of mixed states,
\begin{equation}
S(\rho_\mathcal{A}) = \sum_x p_x S(\rho_x) - \sum_x p_x \log p_x - S(\rho_\mathcal{R}||\rho_\mathcal{R'}).
\end{equation}
Since the relative entropy $S(\rho_\mathcal{R}||\rho_\mathcal{R'})=0$ if and only if $\rho_\mathcal{R} = \rho_\mathcal{R'}$, the expressions for $\rho_\mathcal{R}$ and $\rho_\mathcal{R'}$ in \cref{eqn rho R}, \cref{eqn rho R'} respectively, imply that the two density matrices are equal if and only if the corresponding vectors $\ket{\phi_j^{(x)}}$ form an orthogonal set (given $j,x$ such that $\lambda_j^{(x)}\neq 0$).
This is equivalent to stating that the mixed states $\rho_x$ must have orthogonal support.
\end{proof}

\section{Proof of \cref{lm orthogonal density matrices}}\label{appen proof 2}

\lmorthogonaldensitymatrices*

\begin{proof}
Any state in $\mathcal{S}(\mathcal{H})$ can be written as a linear combination of the set of pure states.
The map $\phi$ obeys entropy relation~(a) so,
\begin{equation}
S\left(\phi(\sum_{i=1}^d \lambda_i \sigma_i) \right)  = S\left(\sum_{i=1}^d \lambda_i \sigma_i \right) + \log \alpha.
\end{equation}
Since $\{ \sigma_i\}_{i=1}^d$ have orthogonal support, \cref{lm entropy of mixtures of mixed} can be applied to the first term:
\begin{equation}
S\left(\phi(\sum_{i=1}^d \lambda_i \sigma_i) \right) = \sum_{i=1}^d \lambda_i S(\sigma_i) - \sum_{i=1}^d \lambda_i \log \lambda_i + \log \alpha.
\end{equation}
Reusing the entropy preserving property of $\phi$, this time with a sum over pure states with $S(\sigma_i)=0$, $S(\phi(\sigma_i))=\log \alpha$ for all $i$.
Since $\sum_{i=1}^d \lambda_i = 1$, $\sum_{i=1}^d \lambda_i S(\phi(\sigma_i))=\log \alpha$, thus
\begin{equation}
S\left(\phi(\sum_{i=1}^d \lambda_i \sigma_i) \right)= - \sum_{i=1}^d \lambda_i \log \lambda_i + \sum_{i=1}^d \lambda_i S(\phi(\sigma_i)).
\end{equation}
Since there is an equality, the only if direction of \cref{lm entropy of mixtures of mixed} implies that $\{ \phi(\sigma_i)\}$ must have orthogonal support.
\end{proof}

\section{Vanishing off-diagonal matrix elements}
\label{Vanishing off-diagonal matrix elements}

This appendix demonstrates that if a Hermitian projector, $\Pi = \left(
\begin{array}{c|c}
A & B \\
\hline
C & D
\end{array}
\right) $, has $D=\mathbf{0}$ then necessarily $B=C=\mathbf{0}$.

To show this, two properties of $\Pi$ are useful,
\begin{enumerate}
\item Projectors are idempotent, $\Pi^2 = \Pi$:
\begin{equation}
\left(
\begin{array}{cc}
A & B \\
C & 0
\end{array}
\right) \left(
\begin{array}{cc}
A & B \\
C & 0
\end{array}
\right)  =
\left(
\begin{array}{cc}
A^2 + BC & AB \\
CA & CB
\end{array}
\right) = \left(
\begin{array}{cc}
A & B \\
C & 0
\end{array}
\right),
\end{equation}
therefore $CB=\mathbf{0}$.
\item Hermitian operators are self-adjoint, $\Pi^\dagger = \Pi$:
\begin{equation}
\left(
\begin{array}{cc}
A^\dagger & C^\dagger \\
B^\dagger & 0
\end{array}
\right) = \left(
\begin{array}{cc}
A & B \\
C & 0
\end{array}
\right),
\end{equation}
therefore $B^\dagger = C$.
\end{enumerate}
Putting these together gives $BB^\dagger=\mathbf{0}$. The eigenvalues and eigenvectors of adjoint matrices are related by $Bv_i=\lambda_i v_i$ and $B^\dagger v_i = \overline{\lambda_i}v_i$. Therefore since $|\lambda_i|=0$ for all $i$ in the diagonal basis, $B=C=\mathbf{0}$.

\section{Matching up of spectra}\label{appen matching}

The proof of \cref{lm Constrained scale functions} claims that the trivial permutation $\sigma(k)=k$ is the only allowed case for
\begin{equation}
\frac{G(A)}{f(A)}  h(\mu_{\sigma(j)}) g(c_{\sigma(j)}P_{\sigma(j)}) f(c_{\sigma(j)}P_{\sigma(j)})c_{\sigma(j)} = \mu_jc_j .
\end{equation}
Note that there are $\dim(A)$ such equations corresponding to selecting for the different spectral projectors.

One could conceive that the permutation depends on the operator as well as the map, so that for a different operator $B$,
\begin{equation}
\frac{G(B)}{f(B)} h\left(\mu^{(B)}_{\nu(j)}\right) g \left( c^{(B)}_{\nu(j)}P^{(B)}_{\nu(j)}\right) f\left(c^{(B)}_{\nu(j)}P^{(B)}_{\nu(j)}\right)c^{(B)}_{\nu(j)} = \mu^{(B)}_j c^{(B)}_j,
\end{equation}
where $\sigma(j)\neq\nu(j)$.
However, the spectral projectors are analytic functions of the matrix~\cite{Kato} (assuming for now that the matrices are non-degenerate).
Any two matrices $A$ and $B$ can be connected by a smooth path, ruling out a change in the permutation which would require a discontinuous jump.
Therefore the permutation must be consistent for all inputs to the map.
The exception to this case is where there are degeneracies, but as we are only interested in equating eigenvalues, a permutation of degenerate eigenvalues within the degenerate spectral projectors has no affect.

To justify that the permutation must be trivial everywhere we show by contradiction that it must be trivial for any $A$.
Consider the following analytic change to the operator $A\rightarrow A'$: one eigenvalue is changed $\mu_k'c_k'= \mu_kc_k+\delta$ ($\delta>0$), whilst all other eigenvalues $\mu_{j\neq k}'c_{j\neq k}'=\mu_{j\neq k}c_{j\neq k}$, and all spectral projectors $P'_{j} = P_j$ are held unchanged.
For this new operator, a similar set of equations hold with the same permutation:
\begin{equation}
\frac{G(A')}{f(A')} h(\mu'_{\sigma(j)}) g( c'_{\sigma(j)}P_{\sigma(j)})f(c'_{\sigma(j)}P_{\sigma(j)})c'_{\sigma(j)} = \mu'_j c'_j.
\end{equation}
For all $j\neq1$ such that $\sigma(j)\neq 1$
\begin{equation}
\frac{G(A)}{f(A)}\frac{f(A')}{G(A')} = \frac{h(\mu_{\sigma(j)})g( c_{\sigma(j)}P_{\sigma(j)})f(c_{\sigma(j)}P_{\sigma(j)})c_{\sigma(j)}}{h(\mu_{\sigma(j)}) g( c_{\sigma(j)}P_{\sigma(j)})f(c_{\sigma(j)}P_{\sigma(j)})c_{\sigma(j)}} =1.
\end{equation}

Assume for contradiction that $\sigma(k)= j \neq k$, so that there is some non-trivial permutation and dual spectral projectors are not paired with eigenvalues related by $f(A)$.
Then
\begin{align}
\frac{G(A)}{f(A)}h(\mu_j) g(c_jP_j)f(c_jP_j)c_j & = \mu_kc_k\\
\frac{G(A')}{f(A')}h(\mu_j) g(c_jP_j)f(c_jP_j)c_j & = \mu_kc_k + \delta.
\end{align}
Therefore, $\frac{G(A')}{f(A')}\frac{f(A)}{G(A)}\mu_kc_k = \mu_kc_k + \delta$ and $\delta =0$, contradicting $\delta>0$ and the trivial permutation is the only allowed case.

\section{Kramers-Wannier duality}\label{appen K-W}

The Kramer-Wannier duality links two 2d Ising models, one at low temperature (strong interaction strength) with another at high temperature (weak interaction strength).
The duality is identified by computing the partition function of both systems in their respective limits.
This appendix outlines the Kramer-Wannier duality and how it arises, based on \cite{Baxter1989,MITStat}.
Final we show explicitly how it lies outside the original simulation framework of \cite{Cubitt2019} and how can be placed in our more general framework of duality.

\subsection{Low temperature expansion}
The Ising model on an $N$ site lattice is governed by the Hamiltonian $ H = -J \sum_{\langle i, j \rangle}\sigma_i \sigma_j$.
Consider the isotropic case where the interaction strength $J$ is the same across both horizontal and vertical directions, $K:=\beta J$.
If $K>0$ the model is ferromagnetic and the ground state will have all spins aligned.
In the low temperature regime the system is dominated by its ground state.
The expansion for the partition function at low temperature is given by the ground state configuration plus low energy fluctuations -- i.e. 1, 2, 3, ... spins aligned anti-parallel.
The additional energy cost of one flipped spin in a 2d lattice is $4\times 2K$ and that of two flipped spins in a block is $6 \times 2K$.
Counting the degeneracies of these states the partition function is given by,
\begin{equation}
Z \approx 2e^{\text{no. bonds total} \times K} \left[1 + Ne^{-4\times 2K} + 2N e^{-6 \times 2K} + ... \right].
\end{equation}
The energy cost comes from the domain wall boundary between the regions of anti-parallel spins.
In the thermodynamic limit $N\rightarrow \infty$ the multiplicities become insignificant and the partition function can be written as
\begin{equation}
Z \approx 2e^{\text{no. bonds total} \times K} \sum_{\text{islands of -ve spins}} e^{-2K \times \text{perimeter of island}}.
\end{equation}
The terms in this summation can be represented graphically by creating islands of increasingly large regions of anti-aligned spin.

\subsection{High temperature expansion}
The high temperature expansion starts instead with independent spins and the partition function is expanded in powers of $\beta$.
A convenient simplification is to expand in powers of $\tanh K$ instead.
This is equivalent to doing a high temperature expansion since $\tanh K$ is less than 1 (except when $\beta \rightarrow \infty$) so in the high temperature region powers of $\tanh K$ are increasingly small.
Since $(\sigma_i\sigma_j)^2=1$ the bond $\langle i,j\rangle$ Boltzmann factor can be rewritten as
\begin{align}
e^{K\sigma_i\sigma_j} & = \frac{e^{K}+e^{-K}}{2} + \frac{(e^{K}-e^{-K})}{2}\sigma_i \sigma_j\\
& = \cosh K (1+\tanh K \sigma_i \sigma_j).
\end{align}
Applying this transformation to the partition function gives
\begin{align}
Z&= \sum_{\{\sigma_i \}} e^{K\sum_{\langle i, j \rangle}\sigma_i \sigma_j}\\
& = (\cosh K)^\text{no. bonds} \sum_{\{ \sigma_i\}} \prod_{\langle i, j \rangle} (1+\tanh K \sigma_i \sigma_j).
\end{align}
The product $\prod_{\langle i,j \rangle} (1+\tanh K \sigma_i \sigma_j) = (1+\tanh K \sigma_a \sigma_b)(1+\tanh K \sigma_a \sigma_c)(1+\tanh K \sigma_a \sigma_d)...$ generates $2^{N_b}$ terms where $N_b$ is the number of bonds in the lattice.
Each term in the sum can again be represented as a graph: for all edges in the lattice draw a line on the edge $(i,j)$ if there is a factor of $\tanh K \sigma_i \sigma_j$ (an occupied bond) and draw no line if the term in the expansion is $1$ (an unoccupied bond).
Each term in the expansion of the product is of the form
\begin{equation}
(\tanh K )^\text{no. occupied bonds}\sigma_1^{p_1} \sigma_2^{p_2} \sigma_3^{p_3}... \sigma_N^{p_N}.
\end{equation}
Now we perform the sum over each spin being $\pm 1$.
Summing over $\sigma_i$ gives a factor of $2$ if $p_i$ even and $0$ if $p_i$ odd.
Therefore only graphs where every site has an even number of occupied legs is non-vanishing.
These graphs from closed paths on the lattice.
The high temperature series expansion is then given by
\begin{equation}
Z = 2^N \times (\cosh K)^\text{no. bonds total} \sum_\text{closed graphs} (\tanh K)^\text{no. occupied bonds in the graph}.
\end{equation}

\subsection{Free energy duality}

Taking the thermodynamic limit so $N\rightarrow \infty$ the total number of bonds in the lattice becomes $\approx 2N$.
Make the couplings strengths for the models in the different temperature regimes distinct: $K$ in the high temperature expansion, $\tilde{K}$ in the low temperature expansion.
The duality is identified by comparing the low and high temperature series expansions:
\begin{align}
\text{Low temp:} \quad &Z(\tilde{K}) = 2e^{2N\tilde{K}}\sum_\text{islands of -ve spin} e^{-2\tilde{K} \times \text{perimeter of islands}}\\
\text{High temp:} \quad & Z(K) = 2^N (\cosh K)^{2N} \sum_{\text{closed graphs}} \tanh K^{\text{length of graph}}.
\end{align}

There is a correspondence between the two sums since islands of sites can be considered as closed graphs and vice versa.
They differ only at the boundaries, but in the thermodynamic limit this difference becomes negligable.
Defining the function,
\begin{equation}
g(x) := \lim_{N\rightarrow \infty} \ln \sum_\text{closed graphs} x^\text{no. lines in the graph},
\end{equation}
the arguments of $g$ in each of the above sum are related by the duality condition
\begin{equation} \label{appen duality condition}
e^{-2\tilde{K}} \leftrightarrow \tanh K \qquad \tilde{K} = D(K) = -\frac{1}{2} \ln \tanh K.
\end{equation}

With the above function $g$ the free energies per particle can be written as:
\begin{align}
\text{Low temp:} \quad &-\beta f_H = \frac{\ln Z(K)}{N} = 2K + \ln \left\{e^{-4\times 2K} + 2e^{-4\times 6K} - \frac{5}{2} e^{-8\times 2K}+...\right\}\\
\text{High temp:} \quad & -\tilde{\beta} f_{\tilde{H}} = \frac{\ln Z(\tilde{K})}{N} = \ln 2 + 2 \ln \cosh K + \ln \left\{(\tanh K)^4+ ...\right\}.
\end{align}
The duality condition, \cref{appen duality condition}, then relates the two free energies by:
\begin{equation}
\tilde{\beta} f_{\tilde{H}} = \beta f_H + 2 \beta J - \ln \left[2 \cosh^2\left(\tilde{\beta}\tilde{J}\right) \right].
\end{equation}
Some algebra manipulates the free energy relation into a simpler form:
\begin{align}
2\beta J - \ln \left[2 \cosh^2\left(\tilde{\beta}\tilde{J}\right) \right] & = \ln \left[\frac{e^{2\beta J}}{2\cosh^2\left(\ln \tanh^{-1/2}J\beta \right) } \right]\\
&= \ln \left[\frac{e^{2\beta J}}{2\times (1/4) \times \left(\tanh^{-1/2}J\beta + \tanh^{1/2}J\beta \right)^2 } \right]\\
&= \ln \left[\frac{e^{2\beta J}}{2\times (1/4) \times \left(\tanh^{-1} J\beta + \tanh J\beta + 2\right) } \right]\\
&= \ln \left[\frac{e^{2\beta J}}{(1/2) \times \left( \frac{\cosh J\beta}{\sinh J\beta} + \frac{\sinh J\beta}{\cosh J\beta} + 2\right)  } \right]\\
&= \ln \left[\frac{e^{2\beta J}}{ \left(\frac{\cosh^2 J\beta + \sinh^2 J\beta + 2sinh J\beta \cosh J \beta }{2 \sinh J\beta \cosh J \beta}\right)  } \right]\\
&= \ln \left[\frac{e^{2\beta J}}{\left(\frac{\cosh 2J\beta + \sinh 2J\beta }{\sinh 2J\beta}\right)  } \right]\\
&= \ln \left[\frac{e^{2\beta J}}{\left(\frac{e^{2J\beta} }{\sinh 2J\beta}\right)  } \right]\\
&= \ln \left[\sinh 2J\beta\right]
\end{align}

From the free energy the partition functions can be related using $-\beta f = \ln Z$:
\begin{align}
Z_{\tilde{H}}(\tilde{\beta}) &= \exp \left[-N\tilde{\beta}f_{\tilde{H}} \right]\\
& = \exp \left[-N\beta f_H - N \ln \sinh (2\beta J) \right]\\
& = e^{\ln\sinh(2\beta J)^{-N}}Z_H(\beta)\\
& = \left[ \sinh (2\beta J)\right]^{-N} Z_H(\beta).
\end{align}

\subsection{In this framework}

We consider the map on operators that relates the two dual Hamiltonians.
The encoding part of the duality is simple as the Hilbert space is the same size so there are no copies ($p=1$) and the form of the operators is the same so the unitary is simply the identity.
The more interesting part of the duality appears in the scale factor, which should be a function of the initial Hamiltonian only.
In the partition function and time evolution operator, temperature and the Hamiltonian always appear as a product ($\beta H$).
Since in the Ising model the Hamiltonian is proportional to the coupling constant $J$, for Ising type Hamiltonians there is a trivial duality condition $J\beta = J'\beta'$.
We therefore have additional freedom in how we chose to construct the set of maps that correspond to different physical scenarios if one were to engineer this duality.
The first choice is consistent with how the duality framework in this paper has been set out, however the different approaches are mathematically equivalent.

In the first instance we will view Kramer-Wannier through the lens of a strong-weak duality: equating the temperatures of the dual systems $\beta = \tilde{\beta}$.
A strongly interacting Ising model with interaction strength $J$ is dual to a weakly interacting Ising model with interaction strength $\tilde{J}\neq J$ at the same temperature.
This leads to a non-trivial scaling function for the map on operators that depends both on the operator and the temperature of the system:
\begin{align}\label{K-W Hamiltonian map}
\Phi_H(H) & = -\frac{1}{2J\beta}\ln\tanh(J\beta) \times \mathbb{I}(H)\\
& = f(H, \beta) \times \mathcal{E}(H),
\end{align}
where $\mathcal{E}(H)=\mathbb{I}(H)$ is an encoding satisfying the axioms 1-3 from \cref{thm Encodings} and $f(H, \beta) = -\frac{1}{2J\beta}\ln\tanh(J\beta)$.
The coupling strength, $J$, can be written as a function of the Hamiltonian norm and $n$ the number of lattice sites: $J= \frac{\norm{H}}{2n(n-1)}$.

Another approach could be to fix the interaction strength $J = \tilde{J}$ and consider a high-low temperature duality where physical properties of the two dual systems are evaluated at different temperatures $\beta \neq \tilde{\beta}$.
This is does not allow the duality to be manipulated into our framework since we do not allow a temperature map.
Here the map on operators is independent of temperature with a trivial scaling function, $f(H)=1$:
\begin{equation}
\Phi(H) = \mathbb{I}(H).
\end{equation}
This viewpoint introduces the necessity of a temperature map, $\Phi_\beta$, should map positive reals to positive reals and be compatible with the Hamiltonian map such that the duality condition is satisfied.
We will allow the temperature map to additionally depend on Hamiltonian parameters so there is a consistent set of maps for one system.
In order to satisfy \cref{condition eqn}, the temperature map is
\begin{equation}\label{K-W temperature map}
\Phi_\beta(\beta)=-\frac{1}{2J}\ln \tanh(J\beta).
\end{equation}

This is perhaps the more immediate viewpoint from the Kramer-Wannier literature but both approaches are mathematically equivalent.
In fact using any interpolation of these two cases is also valid.
We could consider two Ising models with different interaction strengths at different temperatures, as long as the product obeys \cref{condition eqn}.
Furthermore neither case fit into the original simulation framework in \cite{Cubitt2019} for differing reasons.
The first has a non-trivial scaling function so that the spectra of two dual operators is not equal.
The second has a non-trivial temperature map so that the systems are only dual if considered at the appropriate temperature.

We can complete the description by providing a compatible map on states.
Again we have choices.
We could require the Born rule with respect to energy measurements should be preserved, or we could alternatively demand that thermal states map to thermal states.
Starting with energy measurement outcomes, the expected behaviour
\begin{equation}\label{eqn born cond}
\trace\left[H \rho \right] = \frac{1}{f(H)} \trace \left[\Phi_H(H)\Phi_\text{state}(\rho) \right]
\end{equation}
is achieved by a trivial mapping on states $\Phi_\text{states}(\rho)=\rho$.
If instead we propose preserving Gibbs states,
\begin{equation}
\Phi_\text{state}\left(\frac{e^{-\beta H}}{Z_H(\beta)}\right) = \frac{e^{-\Phi_\beta(\beta)\Phi_H(H)} }{Z_{\Phi_H(H)}(\Phi_\beta(\beta))},
\end{equation}
then another choice for a map on states is $\Phi_\text{state}(\rho) = \frac{\epsilon(\rho)}{\trace(\epsilon(\rho))}$ with $\epsilon(\rho) = \rho^{\frac{1}{2J\beta}\ln \tanh(J\beta)}$.
These state mappings will preserve measurement outcomes and thermal states respectively paired with either the strong-weak or high-low formulations described earlier.

\end{appendices}

\printbibliography

\end{document}